\documentclass[12pt]{article}
\usepackage{mathrsfs}
\usepackage{amsthm,amsmath,enumerate,amsbsy,amsfonts,mathabx,amscd,graphicx,algorithm,soul}
\usepackage{authblk}
\usepackage{rotating,caption,subcaption,multirow,epsfig,indentfirst}
\usepackage{natbib}
\usepackage{url} 
\usepackage{booktabs}
\usepackage{threeparttable}
\usepackage[colorlinks,bookmarksopen,bookmarksnumbered,citecolor=blue,urlcolor=blue,linkcolor=blue]{hyperref}
\usepackage{float}
\newcommand{\blind}{0}

\addtolength{\oddsidemargin}{-.5in}%
\addtolength{\evensidemargin}{-1in}%
\addtolength{\textwidth}{1in}%
\addtolength{\textheight}{1.7in}%
\addtolength{\topmargin}{-1in}%

\newtheorem{definition}{Definition}
\newtheorem{theorem}{Theorem}

\newtheorem{lemma}{Lemma}

\newtheorem{remark}{Remark}
\newtheorem{condition}{Condition}


\def\spacingset#1{\renewcommand{\baselinestretch}%
	{#1}\small\normalsize} \spacingset{1}

\newcommand{\bzero}{\boldsymbol 0}
\newcommand{\bLambda}{\boldsymbol \Lambda}

\newcommand{\bSigma}{\boldsymbol \Sigma}

\newcommand{\bGamma}{\boldsymbol \Gamma}

\newcommand{\bSigmak}{\boldsymbol \Sigma^{(k)}}

\def\P{{ \mathrm{p} }}

\newcommand{\bPhi}{\boldsymbol \Phi}

\newcommand{\bxi}{\boldsymbol \xi}

\newcommand{\bgamma}{\boldsymbol \gamma}
\newcommand{\bdelta}{\boldsymbol \delta}

\newcommand{\bvarepsilon}{\boldsymbol \varepsilon}

\newcommand{\bepsilon}{\boldsymbol \epsilon}

\def\sint{\begingroup\textstyle \int\endgroup}

\newcommand{\bSigy}{\bSigmak_{yy}}

\newcommand{\thresh}{\mathcal T_{\eta_k}\{\bSigy\}}

\newcommand{\ba}{{\mathbf a}}
\newcommand{\be}{{\mathbf e}}
\newcommand{\bbf}{{\mathbf f}}

\newcommand{\bk}{{\mathbf k}}

\newcommand{\bw}{{\mathbf w}}
\newcommand{\bg}{{\mathbf g}}

\newcommand{\bbb}{{\mathbf b}}

\newcommand{\bW}{{\bf W}}

\newcommand{\bA}{{\bf A}}
\newcommand{\bB}{{\bf B}}

\newcommand{\bE}{{\bf E}}
\newcommand{\bI}{{\bf I}}
\newcommand{\bK}{{\bf K}}

\newcommand{\bS}{{\bf S}}
\newcommand{\bT}{{\bf T}}
\newcommand{\bX}{{\bf X}}
\newcommand{\bY}{{\bf Y}}
\newcommand{\bZ}{{\bf Z}}

\newcommand{\bU}{{\bf U}}
\newcommand{\bV}{{\bf V}}
\newcommand{\bQ}{{\bf Q}}

\newcommand{\bM}{{\bf M}}

\newcommand{\bH}{{\bf H}}

\newcommand{\cC}{{\cal C}}
\newcommand{\cD}{{\cal D}}

\newcommand{\cL}{{\cal L}}
\newcommand{\cM}{{\cal M}}

\newcommand{\cT}{{\cal T}}
\newcommand{\cU}{{\cal U}}
\newcommand{\cV}{{\cal V}}
\newcommand{\cS}{{\cal S}}

\newcommand{\cR}{{\cal R}}
\newcommand{\cB}{{\cal B}}

\newcommand{\eZ}{\mathbb{Z}}
\newcommand{\eR}{\mathbb{R}}
\newcommand{\eH}{\mathbb{H}}
\newcommand{\eE}{\mathbb{E}}

\newcommand{\tF}{\text{F}}

\newcommand{\cov}{\text{Cov}}

\newcommand{\pr}{\mathbb{P}}
\newcommand{\tr}{\mbox{tr}}

\newenvironment{condbis}[1]
  {%
   \addtocounter{condition}{-1}%
   \begin{condition}}
  {\end{condition}}

\def\T{{ \mathrm{\scriptscriptstyle T} }}

\begin{document}
\def\spacingset#1{\renewcommand{\baselinestretch}%
{#1}\small\normalsize} \spacingset{1}

\if0\blind
{
  \title{\bf Factor Modelling for High-dimensional Functional Time Series}
  \author[1]{Shaojun Guo}
		\author[2]{Xinghao Qiao}
		\author[1]{Qingsong Wang}
            \author[3]{Zihan Wang}
		\affil[1]{Institute of Statistics and Big Data, Renmin University of China, P.R. China}
            \affil[2]{Faculty of Business and Economics, The University of Hong Kong, Hong Kong}
            \affil[3]{Department of Statistics and Data Science, Tsinghua University, P.R. China}
		\setcounter{Maxaffil}{0}
		
		\renewcommand\Affilfont{\itshape\small}
		\date{\vspace{-5ex}}
  \maketitle
} \fi

\if1\blind
{
  \bigskip
  \bigskip
  \bigskip
  \begin{center}
    {\LARGE\bf Factor Modelling for High-dimensional Functional Time Series}
\end{center}
  \medskip
} \fi

\bigskip
\begin{abstract}
Many economic and scientific problems involve the analysis of high-dimensional functional time series, where the number of functional variables $p$ diverges as the number of serially dependent observations $n$ increases. In this paper, we present a novel functional factor model for high-dimensional functional time series that maintains and makes use of the functional and dynamic structure to achieve great dimension reduction and find the latent factor structure. To estimate the number of functional factors and the factor loadings, we propose a fully functional estimation procedure based on an eigenanalysis for a nonnegative definite and symmetric matrix. Our proposal involves a  weight matrix to improve the estimation efficiency and tackle the issue of heterogeneity, the rationale of which is illustrated by formulating the estimation from a novel regression perspective. Asymptotic properties of the proposed method are studied when $p$ diverges at some polynomial rate as $n$ increases. To provide a parsimonious model and enhance interpretability for near-zero factor loadings, we impose sparsity assumptions on the factor loading space and then develop a regularized estimation procedure with theoretical guarantees when $p$ grows exponentially fast relative to $n.$ Finally, we demonstrate that our proposed estimators significantly outperform the competing methods through both simulations and applications to a U.K. temperature data set and a Japanese mortality data set.
\end{abstract}

\noindent%
{\it Keywords:}  Dimension reduction; Functional time series; Functional thresholding; High-dimensional data; Sparse principal component analysis; Weight matrix.
\vfill

\newpage
\spacingset{1.7} 

\setlength{\abovedisplayskip}{9pt plus 2pt minus 6pt}
\setlength{\belowdisplayskip}{9pt plus 2pt minus 6pt}

\section{Introduction}
\label{sec:intro}
Functional time series, which refers to a sequential collection of curves observed over time exhibiting serial dependence, has recently received a great deal of attention. 
Despite progress being made in this field, existing literature has focused on the estimation based on a single or fixed number of functional time series, see, e.g.,
\cite{Bbosq1,hormann2010,bathia2010,panaretos2013,hormann2015,aue2015prediction,li2020long,qiao2019c} and among many others. 

With the rapid development in technology, data sets containing a large collection of functional time series are becoming increasingly available in various applications. Examples include 
cumulative intraday return trajectories \cite[]{horvath2014testing} and functional volatility processes \cite[]{muller2011} 
for hundreds of stocks, 
annual temperature curves collected at a number of stations,
daily energy consumption curves for thousands of households, age-specific mortality rates in different prefectures \cite[]{gao2019high}, to list a few. Such data, which are referred to as high-dimensional functional time series, take the form of $\bY_t(\cdot)=\{Y_{t1}(\cdot), \dots, Y_{tp}(\cdot)\}^{\T}$ ($t=1, \dots, n$) defined on a compact interval $\cU,$ where the number of random functions $p$ is comparable to, or even larger than the number of serially dependent observations $n$. 
Under such scenario, not only $p$ is large, but each $Y_{tj}(\cdot)$ is a functional object with serial dependence across observations, posing extra challenges. When jointly modelling the entire curve dynamics, it is of great interest and importance to explore the latent common component from a dimension-reduction viewpoint, while preserving the functional and time series structure. This motivates us to develop a functional factor model based on $\bY_1(\cdot), \dots, \bY_n(\cdot).$

In this paper, we deal with the factor modelling for high-dimensional functional time series, in which each functional datum $\bY_t(\cdot)$ arises as the sum of two unobservable components, one common $\bZ_t(\cdot)$ and one idiosyncratic $\bvarepsilon_t(\cdot).$ The co-movement of $p$ functional time series is assumed to be driven by a small number of functional factors and their inherited dynamic structures in $\bZ_t(\cdot)$. Specifically, our proposed factor model admits the representation:
\begin{equation}
\label{eq:model}
     \bY_t(\cdot) = \bZ_t(\cdot) + \bvarepsilon_t(\cdot) = \bA\bX_t(\cdot) + \bvarepsilon_t(\cdot), ~~t = 1,\dots,n,
\end{equation} 
where $\bX_t(\cdot) = \{X_{t1}(\cdot), \dots, X_{tr}(\cdot)\}^{\T}$ is a set of latent functional factor time series with (unknown) number of functional factors $r<p,$ $\bA=(A_{ij})_{p \times r}$ is the factor loading matrix
and $\bvarepsilon_t(\cdot)$ is $p$-vector of functional idiosyncratic errors.

For multivariate scalar time series $\{\bY_t\},$ the factor model (\ref{eq:model}) degenerates to the classical dynamic factor model $\bY_t = \bA \bX_t + \bvarepsilon_t,$ where two different types of model assumptions are considered in econometrics and statistics literature. 
One type of models assumes that each common factor has non-trivial contribution to $\{\bY_t\}$ while the idiosyncratic noise $\{\bvarepsilon_t\}$ is allowed to have weak cross-correlations and serial correlations. An incomplete list of the relevant references
includes \cite{chamberlain1982arbitrage,forni2000generalized,bai2002determining,stock2012,fan2013,fan2019}. 
However, the rigorous definition of the common and idiosyncratic components in those factor models can only be established asymptotically when the dimension of $\bY_t$ tends to infinity.
Another type of models assumes that the common factors accommodate all dynamics of $\{\bY_t\},$ thus making the idiosyncratic component white noise with no serial correlations but allowing substantial contemporary cross-correlation among it; see \cite{pena1987identifying,lam2011estimation,lam2012} and extensions to matrix-valued time series \cite[]{wang2019,chen2019constrained} and tensor-valued time series \cite[]{chen2021}.
For functional time series, the factor modelling remains less touched in the literature. \cite{hays2012functional} considered functional dynamic factor models for univariate functional time series, where the functional factors do not depend on $t$ and the dynamics are modelled by scalar coefficients that follow Gaussian autoregressive processes. \cite{kokoszka2015functional} extended this model by allowing for dynamic functional factors, but both models are still resctried to the univariate case.
\cite{gao2019high} adopted factor modelling techniques to predict high-dimensional functional time series by fitting a factor model to estimated scores obtained via eigenanalysis of the long-run covariance function. 
Compared to our model setup (\ref{eq:model}),  \cite{hallin2023a} and \cite{tavakoli2023b} provide an alternative factor model formulation to tackle a rather different situation with scalar factors and functional factor loadings. Their methodology follows the framework in \cite{bai2002determining} and substantially differs from our estimation procedure. See detailed discussion in Section~\ref{sec:dis}.

Our paper is along the line of \cite{lam2012} in the model setup. To facilitate methodological development, we assume that the idiosyncratic component $\bvarepsilon_t(\cdot)$ is a white noise sequence with $\eE\{\bvarepsilon_t(u)\}=0$ and $\cov\{\bvarepsilon_{t+k}(u), \bvarepsilon_t(v)\}=0$ for any $u,v\in \cU$ and $k \neq 0.$
Although the white noise assumption is not the most ideal for extracting those asymptotically identifiable factors, it enables the unique decomposition in (\ref{eq:model}), thus largely simplifying the tasks of model identification and inference. In our established theory, we allow for weak serial correlations in $\{\bvarepsilon_t(\cdot)\}$, which relaxes the white noise assumption.
Additionally, unlike \cite{bai2002determining}, we do not place any assumption on the covariance structure of $\bvarepsilon_t(\cdot).$ 
We propose to estimate both the number of functional factors and the factor loadings in terms of an eigenanalysis for a nonnegative definite and symmetric matrix defined as the double integral and sum of weighted quadratic forms in autocovariance functions of $\bY_t(\cdot)$ from different time lags. 
In practical applications, it is common that many estimated factor loadings are close to zero especially when $p$ is large, 
This inspires us to add certain sparsity constraints on the factor loading space defined in terms of 
the factor loading matrix $\bA.$ 
Since functional sparsity patterns in autocovariance functions of $\bY_t(\cdot)$ inherits from the imposed sparsity structures in $\bA,$ we first propose Hilbert-Schmidt-norm based functional thresholding of entries of the sample auotocovariance functions, which ensures the consistencies under $\log p/n \rightarrow 0.$ Furthermore, to recover sparse factor loading space and produce a parsimonious model with interpretable factors, we perform sparse principal component analysis \cite[]{vu2013} rather than standard eigenanalysis. 

This paper makes useful contributions at multiple fronts.
On the method side, our proposal has four main advantages.
First, it involves a weight matrix to improve the estimation efficiency and tackle the issue of heterogeneity. We illustrate the  rationale of its presence by formulating the estimation procedure based on a novel regression viewpoint. Our empirical results also demonstrate the consistent superiority of our method compared to the unweighted competitors across all simulation scenarios. 
Furthermore, our strategy of incorporating the weight matrix is general and can be applied broadly to a range of dimension reduction problems, e.g., \cite{lam2011estimation} and \cite{wang2019}.
Second, our estimation relies on the sample autocovariance functions of $\bY_t(\cdot).$ This not only gets rid of the impact from the noise $\bvarepsilon_t(\cdot),$ but also makes the good use of the serial correlation information, which is the most relevant in the time series modelling and prediction. 
Third, the double integral over $(u,v) \in\cU\times\cU$ takes advantage of the functional nature of the data by accumulating the autocovariance information as much as possible, leading to more efficient estimation. 
Such fully functional procedure avoids the information loss incurred by the dimension reduction step in \cite{gao2019high}. Fourth, aided by the enforced sparsity, we enhance the model interpretability and enlarge the dimension of the feature space that our model can handle. Our regularized estimation procedure adopts techniques of novel functional thresholding and sparse principal component analysis, each of which leads to the estimation consistency and sparsity recovery at its own step. 

On the theory side, we investigate convergence properties of relevant estimated terms under the assumption that the idiosyncratic errors are either white noise or weakly serially correlated, in the context of
high-dimensional scaling, where $p$ grows polynomially fast relative to $n.$ With additional sparsity constraints and tail assumptions, we further provide convergence analysis of the regularized estimation in an ultra-high-dimensional regime, where $p$ diverges at an exponential rate as $n$ increases. 
Empirically, we demonstrate through an extensive set of simulations that our estimators significantly outperform the competitors. 
Moreover, we integrate our factor modelling framework into the challenging task of predicting high-dimensional functional time series and use a real data example to illustrate the superiority of our strategy over existing prediction methods.

The paper is set out as follows. In Section~\ref{sec:method}, we develop a fully functional procedure with weight matrix to estimate the proposed factor model and explain it from a novel regression perspective. 
The theoretical properties of the proposed estimators are investigated in Section~\ref{sec:theory}. Under further sparsity assumptions, we develop the regularized estimation procedure and present the associated convergence analysis in Section~\ref{sec:ultrahigh}. The finite-sample performance of our methods is illustrated through extensive simulation studies and two real data examples in Sections~\ref{sec:sim} and \ref{sec:data}, respectively. Section~\ref{sec:dis} concludes the paper by discussing the relationship to relevant work as well as three extensions. 
We relegate all technical proofs and additional empirical results to the Supplementary Material.

{\bf Notation}. For any matrix $\bB=(B_{ij})_{p \times q},$ we let $\|\bB\|_{\min}=\lambda_{\min}^{1/2}(\bB^{\T}\bB),$ $\|\bB\|=\lambda_{\max}^{1/2}(\bB^{\T}\bB)$ and denote its matrix $\ell_\infty$ norm and Frobenius norm by
$\|\bB\|_1 = \max_{1\leq i\leq p}\sum_{j=1}^{q}|B_{ij}|$ and $\|\bB\|_{\tF}=(\sum_{i=1}^{p}\sum_{j=1}^{q}B_{ij}^2)^{1/2},$ respectively.
Let $L_2(\cU)$ denote the Hilbert space of squared integrable functions defined on $\cU.$ 
For $f\in L_2(\cU)$, we define its $\ell_2$-norm by $\|f\|_{2} = \{\int f(u)^2\mathrm du\}^{1/2}$.
For a matrix of bivariate functions $\bK = \big(K_{ij}(\cdot,\cdot)\big)_{p \times q},$ we define the functional versions of the matrix $\ell_{\infty}$ and Frobenius norms by $\|\bK\|_{\cS,\infty} = \max_{1\leq i\leq p}\sum_{j=1}^q\|K_{ij}\|_{\cS}$ and 
$\|\bK\|_{\cS,\tF} = (\sum_{i=1}^p\sum_{j=1}^q \|K_{ij}\|_{\cS}^2)^{1/2},$
respectively, where, for each $K_{ij}\in L_2(\cU\times \cU)$, we denote its Hilbert-Schmidt norm by $\|K_{ij}\|_{\cS} = \{\int\int K_{ij}(u,v)^2 \,\mathrm d u \mathrm d v\}^{1/2}.$ For two positive sequences $\{a_n\}$ and $\{b_n\}$, we write $a_n\lesssim b_n$ or $a_n=O(b_n)$ or $b_n\gtrsim a_n$ if there exists a positive constant $c$ such that $a_n/b_n \leq c$. We write $a_n \asymp b_n$ if and only if $a_n \lesssim b_n$ and $b_n\lesssim a_n$ hold simultaneously.

\section{Methodology}
\label{sec:method}
\subsection{Model Setup and Estimation Procedure}
\label{sec:est}
We assume that the $p$-vector of functional time series $\bY_1(\cdot), \dots, \bY_n(\cdot)$ 
satisfies the functional factor model in (\ref{eq:model}), where each $\bY_t(\cdot)$ is decomposed as the sum of two parts: a common dynamic part driven by a $r$-vector of latent factor process $\bX_t(\cdot)$ and an idiosyncratic part of white noise process $\bvarepsilon_t(\cdot)=\{\varepsilon_{t1}(\cdot), \dots, \varepsilon_{tp}(\cdot)\}^{\T}.$ 
Our formulation ensures that both functional and linear dynamic structure of $\bY_t(\cdot)$ are inherited from common factors $\bX_t(\cdot),$ while all white noise elements are absorbed into $\bvarepsilon_t(\cdot).$
Note that $\{\bX_t(\cdot)\}_{t=1}^n$ and $\{\bvarepsilon_t(\cdot)\}_{t=1}^n$ are unobservable. The white noise assumption is adopted here to facilitate the methodology but is relaxed in Section~\ref{sec:theory} when establishing the supporting theory.
We further assume that the rank of the factor loading matrix $\bA$ is $r$. If this full rank condition is violated, model \eqref{eq:model} can be expressed in terms of a lower-dimensional functional factor. When $r$ is much smaller than $p,$ we achieve an effective dimension reduction. 

There is an identifiable issue among $\bX_t(\cdot)$ and $\bA,$ since model~\eqref{eq:model} remains unchanged if $\{\bA, \bX_t(\cdot)\}$ is replaced by $\{\bA\bGamma, \bGamma^{-1}\bX_t(\cdot)\}$ for any invertible matrix $\bGamma \in {\mathbb R}^{r \times r}.$ Even if we assume that the columns of $\bA$ are orthonormal, i.e.  $\bA^{\T}\bA = \bI_r$ (the $r\times r$ identity matrix), $\bX_t(\cdot)$ and $\bA$ still can not be determined uniquely. However, the linear space spanned by the columns of $\bA$ (denoted by $\mathcal C(\bA)$) is unique regardless of the column-orthogonality of $\bA$. Hence we will focus on the estimation of the factor loading space ${\mathcal C}(\bA).$

We next develop a fully functional procedure to estimate $\mathcal C(\bA)$ and the number of factors $r.$ Under the assumption that $\{\bX_t(\cdot)\}$ is weakly stationary, we define the lag-$k$ ($k \geq 0$) autocovariance functions:
\begin{equation}\label{eq:notation}
\begin{aligned}
\bSigma^{(k)}_{yy}(u,v) &= \cov\{\bY_{t+k}(u), \bY_{t}(v)\},~~ \bSigma^{(k)}_{xx}(u,v) = \cov\{\bX_{t+k}(u), \bX_{t}(v)\},\\
\bSigma^{(k)}_{x\varepsilon}(u,v) &= \cov\{\bX_{t+k}(u), \bvarepsilon_t(v)\}.
\end{aligned}
\end{equation}
We further assume that the future white noise components are uncorrelated with the common factors up to the present. It then follows from (\ref{eq:model}) and (\ref{eq:notation}) that
\begin{equation}\label{eq:cov-relation}
\bSigma^{(k)}_{yy}(u,v) = \bA \bSigma^{(k)}_{xx}(u,v) \bA^{\T} + \bA \bSigma^{(k)}_{x\varepsilon}(u,v), ~~u,v \in \cU, ~k \geq 1.
\end{equation}

Given a prescribed and fixed integer $k_0 \geq 1,$ we define a nonnegative definite and symmetric matrix
\begin{equation*}\label{eq:weighted-M}
		\bM = \sum_{k=1}^{k_0} \int_{\cU}\int_{\cU} \bSigma^{(k)}_{yy}(u,v) \bW(v)\bSigma^{(k)}_{yy}(u,v)^\T\,\mathrm d u \mathrm d v,
\end{equation*}
where the $p \times p$ weight matrix $\bW(v)$ is nonnegative definite  and symmetric
for any $v \in \cU.$ Here $\bW(v)$ is introduced to improve the estimation efficiency. In Section~\ref{sec:weight-mat}, we will suggest specific forms of $\bW(v)$ and illustrate the rationale from a regression perspective.
Replacing $\bW(v)$ by its sample version $\widehat\bW(v),$ we define
\begin{equation}\label{eq:check-M}
		\widecheck\bM = \sum_{k=1}^{k_0} \int_{\cU}\int_{\cU} \bSigma^{(k)}_{yy}(u,v) \widehat\bW(v)\bSigma^{(k)}_{yy}(u,v)^\T\,\mathrm d u \mathrm d v.
\end{equation}
Let $\nu_1 \geq \dots \geq \nu_p \geq 0$ be eigenvalues of $\widecheck \bM$
and $\bgamma_j$ be a unit eigenvector of $\widecheck \bM$ corresponding to $\nu_j$ for each $j.$ We claim that, under mild Conditions~\ref{con:x} and \ref{con:strg} imposed in Section~\ref{sec:theory}, $\text{rank}(\widecheck\bM)=r$ (i.e., $\nu_r> 0=\nu_{r+1} = \dots=\nu_p$) holds with overwhelming probability as justified in our proof in Section~\ref{supp.pf.thm1} of the Supplementary Material. As a result, the factor loading space ${\cal C}(\bA)$ can be recovered by
${\cal C}(\bK) = \text{span}\{\bgamma_1, \dots, \bgamma_r\}$ with
$\bK= (\bgamma_1, \dots,\bgamma_r)  \in {\mathbb R}^{p \times r}.$
To see this, let $\widecheck\bA$ be a $p \times (p-r)$ orthogonal complement of matrix $\bA$ such that 
$\widecheck\bA^{\T} \bA = {\bf 0}$ and $\widecheck\bA^{\T}\widecheck\bA=\bI_{p-r}.$ It then follows from (\ref{eq:cov-relation}) and (\ref{eq:check-M}) that $\widecheck \bM\widecheck\bA = {\bf 0},$ which implies that the columns of $\widecheck\bA$ are eigenvectors of $\widecheck\bM$ corresponding to $(p-r)$ zero eigenvalues and hence ${\cal C}(\bA)$ is spanned by the eigenvectors of $\widecheck\bM$ corresponding to $r$ nonzero eigenvalues.
\begin{remark}
\label{rmk1}
(i) We take the double integral and the sum in the definition of $\widecheck\bM$ in (\ref{eq:check-M}) to accumulate the information of autocovariance functions as much as possible from each $(u,v) \in \cU\times\cU$ and from different time lags leading to more efficient estimation, whereas fixing at certain $(u,v)$ or time lag may lead to spurious estimation results.

(ii) The definition in (\ref{eq:check-M}) ensures that $\widecheck \bM$ is nonnegative definite and symmetric and there is no cancellation of information accumulated from lags 1 to $k_0.$ Hence the estimation is insensitive to the choice of $k_0.$ In practice 
we tend to select a small value, $1 \leq k_0 \leq 5,$ as the strongest autocorrelations usually appear at small lags. 
\end{remark}

To estimate ${\cal C}(\bA),$ we need an eigenanalysis of the sample estimator for $\widecheck\bM,$
\begin{equation}\label{eq:Mhat}
    \widehat \bM = \sum_{k=1}^{k_0}\int_{\cU} \int_{\cU} \widehat{\bSigma}^{(k)}_{yy}(u,v) \widehat \bW(v)\widehat{\bSigma}^{(k)}_{yy}(u,v)^{\T} \,\mathrm d u\mathrm d v,
\end{equation}
where 
\begin{equation}\label{eq:est-cov}
	\widehat{\bSigma}^{(k)}_{yy}(u,v) = \frac{1}{n-k}\sum_{t=1}^{n-k}\big\{\bY_{t+k}(u)-\bar \bY(u)\big\}\big\{\bY_{t}(v)-\bar \bY(v)\big\}^{\T}, ~~u,v \in \cU,
\end{equation}
is the estimator for ${\bSigma}^{(k)}_{yy}(u,v)$ 
and $\widebar \bY(\cdot) = n^{-1}\sum_{t=1}^n \bY_t(\cdot).$ Performing eigen-decomposition of $\widehat\bM$ leads to estimated eigenvalues $\hat \nu_1 \geq \dots \geq \hat \nu_p $ and the associated eigenvectors $\widehat \bgamma_1, \dots, \widehat \bgamma_p.$ Let $\widehat \bA =(\widehat \bgamma_1, \dots, \widehat\bgamma_r).$ Then ${\cal C}(\widehat \bA)=\text{span}\{\widehat \bgamma_1, \dots, \widehat \bgamma_r\}$ forms the estimate of ${\cal C}(\bA).$

We have developed the estimation of model (\ref{eq:model}) assuming that the number of factors $r$ is known or can be identified correctly. In practice, $r$ is unknown 
and there is vast literature on the topic of determining it, see 
\cite{bai2002determining,onatski2010determining,lam2012,ahn2013eigenvalue,fan2020estimating} and \cite{han2022}, to quote a few. Here we take the commonly adopted ratio-based estimator for $r$ as:
\begin{equation}\label{eq:fac-num}
    \hat r = \arg\min_{1\leq j\leq c_r p} \frac{\hat{\nu}_{j+1}}{\hat{\nu}_j},
\end{equation}
where $c_r \in (0, 1)$ is a prespecified constant. 
In empirical studies, we take $c_r = 0.75$ to avoid the fluctuations due to the ratios of extreme small values.

\subsection{Weight Matrix}
\label{sec:weight-mat}
Under the nonnegative-definiteness and symmetry constraint, we suggest the following weight matrix:
\begin{equation}
\label{WeightM}
        \bW(v)=\bQ\big\{\bQ^\T \bSigma_{yy}^{(0)}(v,
        v)\bQ\big\}^{-1}\bQ^\T, ~~v\in \cU,
\end{equation}
where $\bQ \in {\mathbb R}^{p \times q}$ is a full-rank matrix and $q$ is larger than $r$ but much smaller than $p.$ The entries of $\bQ$ are independently sampled from some random distribution with zero mean and unit variance, e.g., $\text{Uniform}[-\sqrt 3, \sqrt 3]$ and $N(0,1),$ which is used here to facilitate technical analysis. Our experiments indicate that the results are not sensitive to the choices of $q$ and the sampling distribution. Then we can obtain the sample version of $\bW(v)$ by 
\begin{equation}
\label{hat.W0}
    \widehat \bW(v)=\bQ\big\{\bQ^\T \widehat \bSigma_{yy}^{(0)}(v,v)\bQ\big\}^{-1}\bQ^\T.
\end{equation}

From a regression perspective, we next provide an intuitive explanation why the weight matrix takes the suggested form in (\ref{WeightM}). Despite the unobservable common factors $\bX_t(\cdot),$ we can simply view (\ref{eq:model}) fixing at $u \in \cU$ as a multiple linear regression with multivariate responses $\bY_t(u)$ and the coefficient matrix $\bA$ to be estimated. To construct the covariate vector from observed data in a regression setup, we assume that there exists some column-orthogonal matrix $\bPhi \in {\mathbb R}^{q \times r}$ with $\bPhi^{\T}\bPhi={\bf I}_r$ such that $\widecheck \bX_t(\cdot) = \bPhi \bX_t(\cdot)$ can be represented by the linear combination of past observed curves with the addition of some random noise:
$$\widecheck\bX_t(\cdot)=\bQ^{\T}\bY_{t-k}(\cdot) + \widecheck \bvarepsilon_{tk}(\cdot), ~~k=1, \dots, k_0,$$ where $\widecheck \bvarepsilon_{tk}(\cdot)$ is the white noise process, independent of $\bY_{t-1}(\cdot), \dots, \bY_{t-k_0}(\cdot).$ Then model~(\ref{eq:model}) can be rewritten as  
\begin{equation}\label{eq:new-model}
	\bY_t(\cdot) = \bB\widecheck\bX_t(\cdot) + \bvarepsilon_t(\cdot) = \bB\bQ^{\T}\bY_{t-k}(\cdot) + \be_{tk}(\cdot),~~t=k+1, \dots, n,
\end{equation}
where $\be_{tk}(\cdot) = \bvarepsilon_t(\cdot) + \bB\widecheck \bvarepsilon_{tk}(\cdot)$ and
$\bB=\bA \bPhi^{\T}$ is a $p \times q$ matrix satisfying $\text{rank}(\bB)=r$ and ${\cal C}(\bB) = {\cal C}(\bA).$
Now we can treat (\ref{eq:new-model}) fixing at $u \in \cU$ as a linear regression model with observed covariate vectors $\{\bQ^\T{\bY_{t-k}}(u)\}_{t=k+1}^n$ and unknown low-rank coefficient matrix $\bB.$ 
Based on (\ref{eq:new-model}) and the population moment equation $\cov\{\be_{tk}(u),\bQ^{\T}\bY_{t-k}(v)\}=0$ for $u,v \in \cU,$ we can solve $\bB$ by
\begin{equation}
\label{eq:cov2}
\bB = \cov \Big\{\bY_t(u), \bQ^{\T}\bY_{t-k}(v)\Big\}\cov\Big\{ \bQ^\T\bY_{t-k}(u), \bQ^\T\bY_{t-k}(v)\Big\}^{-1}. 
\end{equation}
Replacing the covariance terms in (\ref{eq:cov2}) by their sample versions, we obtain the estimator 
\begin{equation}
\label{lse_B}
	\widehat\bB_{u,v} = \Big\{\sum_{t=k+1}^{n} \bY_{t}(u)\bY_{t-k}^\T(v)\bQ \Big\} \Big\{\sum_{t=k+1}^n  \bQ^\T\bY_{t-k}(u)\bY_{t-k}^\T(v)\bQ \Big\}^{-1}.
\end{equation}

Denote the $j$-th row vector of $\widehat\bB_{u,v}$ by ${\widehat\bbb}_{j} \in {\mathbb R}^q$. Since ${\widehat\bbb}_{j}$ represents a statistic  whose variation plays a pivotal role in subsequent analyses, it is advantageous to utilize the normalized statistic $\widehat\bbb_{j} \{\cov(\widehat\bbb_j)\}^{-1/2}$ rather than $\widehat\bbb_j$ itself, where $\cov(\widehat\bbb_j) \in {\mathbb R}^{q \times q}$ denotes the invertible covariance matrix of $\widehat\bbb_j$. To simplify our subsequent derivations, we suppose $\eE\{\bX_t(u)\}={\bf 0} $ and $\cov\{\be_{tk}(u)\} = \bI_p$ uniformly across all $t,$ $k$, and $u\in\cU$. By (\ref{lse_B}), it follows that:
\begin{equation}
\label{cov_B}
\begin{split}
	\cov(\widehat\bbb_j) &= \Big\{\sum_{t=k+1}^n \bQ^\T\bY_{t-k}(v)\bY_{t-k}^\T(u)\bQ \Big\}^{-1}\Big\{\sum_{t=k+1}^n \bQ^\T\bY_{t-k}(v) \bY_{t-k}^\T(v)\bQ\Big\} \\
	&\quad \cdot\Big\{\sum_{t=k+1}^n \bQ^\T\bY_{t-k}(u)\bY_{t-k}^\T(v)\bQ \Big\}^{-1} := \widehat \bSigma_{u,v},
\end{split}
\end{equation}
indicating that $\cov(\widehat\bbb_j)$ remains the same across $j$. Our primary objective is to identify the rank of $\widehat\bB_{u,v}$ and to recover the space spanned by the columns of $\widehat\bB_{u,v}.$ Motivated by the normalized idea, this can be achieved through an eigenanalysis of the scaled nonnegative-definite and symmetric matrix $\widehat\bB_{u,v} \widehat \bSigma_{u,v}^{-1} \widehat\bB_{u,v}^{\T}.$ 
Combining (\ref{lse_B}) and (\ref{cov_B}) yields that
\begin{equation}
\label{scale_B}
	\widehat\bB_{u,v} \widehat \bSigma_{u,v}^{-1}\widehat\bB_{u,v}^\T = \widehat\bSigma^{(k)}_{yy}(u,v)\bQ\{\bQ^\T \widehat\bSigma_{yy}^{(0)}(v,v)\bQ\}^{-1}\bQ^\T\widehat\bSigma^{(k)}_{yy}(u,v)^\T. 
\end{equation}
To accumulate the information in (\ref{scale_B}) as much as possible, we integrate its right-hand side over $(u,v) \in \cU\times\cU$ and sum it over time lags $k=1, \dots, k_0,$ which culminates in (\ref{eq:Mhat}) with $\widehat\bW(v)$ as defined in (\ref{hat.W0}).


An alternative choice of the weight matrix is $\bW_1(v) = \bI_p$ for $v \in \cU,$ 
where the homogeneous weights are assigned. As a consequence, we need to perform an eigenanalysis on the estimated nonnegative definite and symmetric matrix:
\begin{equation}
\label{estM1}
    \begin{aligned}
        \widehat\bM_{1} =\sum_{k=1}^{k_0}\int_{\cU}\int_{\cU} \widehat\bSigma^{(k)}_{yy}(u,v) \widehat\bSigma^{(k)}_{yy}(u,v)^\T\,\mathrm d u \mathrm d v,
\end{aligned}
\end{equation}
which can be further simplified by integrating along the diagonal path $u=v \in \cU:$
\begin{equation}
\label{estM2}
    \begin{aligned}
        \widehat\bM_{2} =\sum_{k=1}^{k_0}\int_{\cU} \widehat\bSigma^{(k)}_{yy}(u,u)\widehat\bSigma^{(k)}_{yy}(u,u)^\T\,\mathrm d u.
\end{aligned}
\end{equation}
\begin{remark}
\label{rmk.weight}
It is noteworthy that, without the double or single integral, the unweighted estimators $\widehat\bM_{1}$ and $\widehat\bM_{2}$ in (\ref{estM1}) and (\ref{estM2}) coincide with the proposed method for multivariate scalar time series in \cite{lam2012}. Compared with the weighted estimator $\widehat\bM,$ 
the performance of $\widehat\bM_{1}$ is expected to deteriorate 
especially for the heterogeneous case as illustrated in our simulations. 
Moreover, due to the loss of autocovariance information for $u \neq v$ incurred by the single integral, we expect that $\widehat\bM_1$ outperforms $\widehat\bM_2.$ We will compare the sample performance of  $\widehat\bM,$ $\widehat\bM_{1}$ and $\widehat\bM_{2}$ in Section~\ref{sec:sim}.
\end{remark}

\section{Theoretical Properties}
\label{sec:theory}
In this section, we study asymptotic properties of the proposed method under a high-dimensional regime, where $p$ and $n$ tend to infinity together and $r$ is fixed. Before presenting the theoretical results, we impose some regularity conditions. 

\begin{condition}\label{con:white}
    The idiosyncratic component $\{\bvarepsilon_t(\cdot)\}$ is a white noise sequence with $\bSigma_{\varepsilon\varepsilon}^{(k)}(u,v)=0$ for any $(u,v) \in \cU \times \cU$ and $k=1, \dots, k_0.$
\end{condition}

\begin{condbis}{con:white}\label{con:bound}
    The idiosyncratic component $\{\bvarepsilon_t(\cdot)\}$ is second-order weakly stationary with $\sup_{(u,v)\in\cU\times\cU}\Vert\bSigma_{\varepsilon\varepsilon}^{(k)}(u,v)\Vert=O(1)$ as $p\to\infty$ for $k=1,\dots,k_0.$
\end{condbis}

The white noise assumption in Condition~\ref{con:white} is imposed to facilitate the methodological development in Section~\ref{sec:method}. However, our theoretical results can be established under a weaker Condition~\ref{con:bound}, which allows for weak serial correlations in $\{\bvarepsilon_t(\cdot)\}.$


\begin{condition}\label{con:x}
    (i) The latent functional factor process $\{\bX_t(\cdot)\}$ is second-order weakly stationary with $ \eE(\|X_{tj}\|_{2}^2)=O(1)$ for $j=1,\dots,r;$ 
	\noindent(ii) There exists at least one $k \in \{1, \dots, k_0\}$ such that the rank of  $\iint \bSigma^{(k)}_{xx}(u,v)\bSigma^{(k)}_{xx}(u,v)^{\T}\,\mathrm d u\mathrm d v$ is $r.$ 
\end{condition}

\begin{condition}\label{con:eps}
    (i) 
    $\max_j {\eE} (\|\varepsilon_{tj}\|_{2}^2)=O(1);$ 
    (ii) $\inf_{v\in \cU}\lambda_{\min}\big\{\bSigma^{(0)}_{\varepsilon \varepsilon}(v,v)\big\}$ is bounded away from zero.
\end{condition}

\color{black}

Conditions~\ref{con:x} and \ref{con:eps} contain some standard finite moment assumptions in functional data analysis literature. 
Condition~\ref{con:x}(ii) can be viewed as functional generalization of Condition~2 in \cite{wang2019} for matrix-valued time series, which ensures that the latent factor process $\bX_t$ has exactly $r$ components.
Condition~\ref{con:eps}(ii) is imposed for technical convenience. 
It precludes the case when $\bY_{t}(v)$ is non-random at some point $v \in \cU.$
However, replacing $\bY_{t}(\cdot)$ with a contaminated process $\bY_{t}(\cdot) + \bdelta_{t},$ where $\bdelta_{t}$'s are independent with zero mean and 
diagonal covariance matrix with small diagonal components and are independent of $\bY_{t+k}(\cdot)$'s for all $k,$ Condition~\ref{con:eps}(ii) is then satisfied while the autocovariance structure in $\bSigma_{yy}^{(k)}$ remains the same
in the sense of  $\cov\{\bY_{t+k}(u) + \bdelta_{t+k}, \bY_{t}(v)+\bdelta_{t}\}=\cov\{\bY_{t+k}(u), \bY_{t}(v)\}$ for $k \geq 1$ and $u,v \in \cU.$

\begin{condition}\label{con:strg}
     There exists some constant $\delta\in[0,1]$ such that $\|\bA\|^2\asymp p^{1-\delta}\asymp\|\bA\|^2_{\min}$.
\end{condition}

\begin{condition}\label{con:xeps}
    (i) For $k=1, \dots, k_0,$ $\|\bSigma^{(k)}_{x\varepsilon}\|_{{\cS},\infty} = o(p^{(1-\delta)/2});$ 
    (ii) $\cov\{\bX_t(u), \bvarepsilon_{t+k}(v)\} = {\bf 0}$ for any $k\geq 0$ and $(u,v) \in \cU\times\cU.$ 
\end{condition}

The parameter $\delta$ in Condition~\ref{con:strg} can be viewed as the strength of factors with smaller values yielding stronger factors.
It measures the relative growth rate of the amount of information contained in $\bZ_t(\cdot)$ as the dimension $p$ increases, compared to that in $\bvarepsilon_t(\cdot),$ see (\ref{eq:model}).
When $\delta=0$, Condition~\ref{con:strg} corresponds to the pervasiveness assumption in \cite{fan2013}, which means that all factors are strong. When $\delta >0,$ the factors are termed as weak factors. 
Condition~\ref{con:xeps}(i) requires that the correlation between $\bX_{t+k}(\cdot)$ and $\bvarepsilon_t(\cdot)$ is not too strong, while Condition~\ref{con:xeps}(ii) assumes that the future idiosyncratic components are uncorrelated with the common factors up to the present.

\begin{condition}\label{con:rootn}
    Let $\widehat{\bSigma}^{(k)}_{yy}(u,v) = \big\{\widehat\Sigma^{(k)}_{yy,ij}(u,v)\big\}_{p \times p}$ and 
    ${\bSigma}^{(k)}_{yy}(u,v) = \big\{\Sigma^{(k)}_{yy,ij}(u,v)\big\}_{p \times p}$ for $(u,v) \in \cU \times \cU.$ Then (i) For $k=1,\dots,k_0,$
    $\max_{i,j}{\eE} \big\{\|\widehat{\Sigma}_{yy,ij}^{(k)}-\Sigma_{yy,ij}^{(k)}\|_{\cS}\big\} = O(n^{-1/2});$ 
    (ii) $\bSigma^{(0)}_{yy}(u,v)$ is Lipschitz-continuous over $(u,v)\in \cU\times\cU;$
    (iii) $\max_{i,j}{\eE}\big\{\sup_{(u,v) \in \cU\times\cU}\big|\widehat{\Sigma}_{yy,ij}^{(0)}(u,v)-\Sigma_{yy,ij}^{(0)}(u,v) \big|\big\} = O\big\{ n^{-1/2}(\log n)^{1/2}\big\}.
    $
\end{condition}

There are several sufficient conditions that have been commonly imposed in functional time series literature when $p$ is fixed and can lead to the convergence result under expectation in Condition~\ref{con:rootn}(i), which implies 
$\|\widehat{\Sigma}_{yy,ij}^{(k)}-\Sigma_{yy,ij}^{(k)}\|_{\cS}=O_\P(n^{-1/2}).$
The key requirement in establishing the convergence is to control the temporal dependence in $\{Y_{tj}(\cdot)\}$ for each $j.$ Examples include strong mixing conditions \cite[]{Bbosq1,qiao2019c}, cumulant mixing conditions \cite[]{panaretos2013} and $L^q$-$m$-approximability \cite[]{hormann2010,hormann2015}. When $p$ diverges, we assume the rate under deterministic expectation uniformly in Condition~\ref{con:rootn}(i).
With sub-Gaussian-type tail assumption and Lipschitz-continuity in Condition~\ref{con:rootn}(ii), it is not difficult to apply the partition technique that reduces the problem from supremum over $\cU\times\cU$ to the maximum over a grid of pairs, then the uniform concentration inequalities and hence the uniform rate of $n^{-1/2}(\log n)^{1/2}$ in Condition~\ref{con:rootn}(iii) can be achieved. See the same uniform rate with detailed proof under an i.i.d. setting in \cite{qiao2019b}.

Now we are ready to present theorems about the rates of convergence for estimators of the factor loading space ${\cal C}(\bA)$ and the eigenvalues $\{\nu_j\}_{j=1}^p.$ To measure the accuracy in estimating ${\cal C}(\bA),$ we use the metric of the distance between ${\cal C}(\bA)$ and ${\cal C}(\widehat\bA).$
For two orthogonal matrices $\bK_1$ and $\bK_2$ of dimensions $p\times r_1$ and $p \times r_2,$ respectively, we define
\begin{equation*}\label{distance}
    \mathcal D\big(\mathcal C(\bK_1), \mathcal C(\bK_2)\big) = \Big\{1- \frac{1}{\max(r_1,r_2)}\tr(\bK_1 \bK_1^{\T} \bK_2\bK_2^{\T})\Big\}^{1/2}.
\end{equation*}
This distance ranges between $0$ and $1.$ It equals $0$ if and only if $\mathcal C(\bK_1)=\mathcal C(\bK_2),$ and $1$ if and only if $\bK_1$ and $\bK_2$ are orthogonal. 

\begin{theorem}\label{thm:eigvec}
Let Conditions~\ref{con:bound}, \ref{con:x}--\ref{con:rootn} hold and $p^{\delta}n^{-1/2} \to 0.$ Suppose that $r$ is known. Then as $p, n \rightarrow \infty,$ it holds that
     $       \mathcal D \big(\mathcal C(\bA), \mathcal C(\widehat \bA)\big) = O_\P(p^{\delta}n^{-1/2}+p^{\delta-1}),$
    which reduces to $O_\P(p^{\delta}n^{-1/2})$ if Condition~\ref{con:white} instead of Condition~\ref{con:bound} holds.
\end{theorem}
\begin{remark}
    (i) Theorem~\ref{thm:eigvec} implies that, as the factors become stronger with smaller $\delta,$ the estimated factor loading space achieves a faster rate of convergence. The presence of $p^{\delta-1}$ in the established rate, under Condition~\ref{con:bound}, arises from the distance between ${\cal C}(\bK)$ and ${\cal C}(\bA),$ which equals zero under the white noise assumption. If Condition~\ref{con:white} holds or $p\gtrsim n^{1/2},$ our rate reduces to $O_\P(p^{\delta}n^{-1/2}),$ which is consistent with the result under the white noise assumption in \cite{lam2012}.
    
    (ii) If Condition~\ref{con:white} holds, 
    when all factors are strong (i.e., $\delta=0$), then the $\sqrt{n}$ rate is attained, since the signal is as strong as the noise and hence enlarging $p$ will not affect the estimation efficiency, circumventing the phenomenon of ``curse of dimensionality". For weak factors (i.e., $\delta>0$), the noise increases faster than the signal, and  the increase in $p$ will result in a slower convergence rate.
\end{remark}

\begin{theorem}\label{thm:eigval}
    Let Conditions~\ref{con:bound}, \ref{con:x}--\ref{con:rootn} hold and $p^{\delta}n^{-1/2} \to 0.$ Then as $p, n \rightarrow \infty,$ the following assertions hold:\\
    (i) $|\hat{\nu}_j-\nu_j| = O_\P(p^{2-\delta}n^{-1/2})$ for $j=1,\dots,r,$ and $\hat{\nu}_j=O_\P(p^2n^{-1}+1)$ for $j=r+1,\dots,p$;\\
    (ii) $\hat{\nu}_{j+1}/\hat\nu_j\asymp 1$ with probability tending to one for $j=1,\dots,r-1$, and $\hat{\nu}_{r+1}/\hat\nu_r=O_\P(p^{2\delta}n^{-1}+p^{2\delta-2})$.
\end{theorem}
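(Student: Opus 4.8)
The plan is to regard $\hat\nu_j$ as a perturbation of the $j$-th eigenvalue $\nu_j$ of $\widecheck\bM$ and to treat the top $r$ and the bottom $p-r$ eigenvalues separately. For the top ones I would bound $\|\widehat\bM-\widecheck\bM\|$ and invoke Weyl's inequality; for the bottom ones I would use the Courant--Fischer characterisation together with the fact that $\widecheck\bM$ annihilates $\mathcal C(\widetilde\bA)$, which removes the leading, linear-in-error term and produces a strictly faster rate. Part~(ii) then follows by combining part~(i) with the population orders of $\nu_1,\dots,\nu_r$.

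\emph{Population magnitudes.} First I would show $\widecheck\bM=\bA\,\boldsymbol G\,\bA^\T$ for a symmetric nonnegative-definite $r\times r$ matrix $\boldsymbol G=\sum_{k=1}^{k_0}\iint(\bSigma^{(k)}_{xx}\bA^\T+\bSigma^{(k)}_{x\varepsilon})\widehat\bW(\bA\bSigma^{(k)\T}_{xx}+\bSigma^{(k)\T}_{x\varepsilon})\,\mathrm d u\,\mathrm d v$, using $\bSigma^{(k)}_{yy}=\bA(\bSigma^{(k)}_{xx}\bA^\T+\bSigma^{(k)}_{x\varepsilon})$ and $\bSigma^{(0)}_{x\varepsilon}=\bzero$ (Condition~\ref{con:xeps}(ii) with $k=0$), so the nonzero eigenvalues of $\widecheck\bM$ equal those of $\boldsymbol G^{1/2}(\bA^\T\bA)\boldsymbol G^{1/2}$. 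Condition~\ref{con:strg} forces every eigenvalue of $\bA^\T\bA$ to be of order $p^{1-\delta}$; Conditions~\ref{con:x}(ii), \ref{con:eps}(ii), \ref{con:xeps}(i) and the explicit form of $\widehat\bW$ — in particular $\sup_{u,v}\|\widehat\bW(u,v)\|=O_P(1)$ and a matching lower bound for $\lambda_{\min}(\bA^\T\widehat\bW\bA)$, both of which are already needed in proving Theorem~\ref{thm:eigvec} — then pin the $r$ nonzero eigenvalues of $\boldsymbol G$ to a common order. This gives $\nu_1\asymp\cdots\asymp\nu_r\asymp p^{2-2\delta}$ and $\nu_{r+1}=\cdots=\nu_p=0$.

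\emph{Perturbation and the null directions.} Writing $\widehat\bDelta^{(k)}=\widehat\bSigma^{(k)}_{yy}-\bSigma^{(k)}_{yy}$, expand $\widehat\bM-\widecheck\bM=\sum_k\iint\{\widehat\bDelta^{(k)}\widehat\bW\bSigma^{(k)\T}_{yy}+\bSigma^{(k)}_{yy}\widehat\bW\widehat\bDelta^{(k)\T}+\widehat\bDelta^{(k)}\widehat\bW\widehat\bDelta^{(k)\T}\}$. Condition~\ref{con:rootn}(i) gives $\iint\|\widehat\bDelta^{(k)}(u,v)\|^2\,\mathrm d u\,\mathrm d v\le\sum_{i,j}\|\widehat\Sigma^{(k)}_{yy,ij}-\Sigma^{(k)}_{yy,ij}\|_{\cS}^2=O_P(p^2n^{-1})$, hence $\iint\|\widehat\bDelta^{(k)}(u,v)\|\,\mathrm d u\,\mathrm d v=O_P(pn^{-1/2})$; the factor structure with Conditions~\ref{con:strg} and \ref{con:xeps}(i) gives $\big(\iint\|\bSigma^{(k)}_{yy}(u,v)\|^2\,\mathrm d u\,\mathrm d v\big)^{1/2}=O(p^{1-\delta})$; and $\sup_{u,v}\|\widehat\bW(u,v)\|=O_P(1)$. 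By the Cauchy--Schwarz inequality the two cross terms are $O_P(p^{1-\delta}\cdot p\cdot n^{-1/2})=O_P(p^{2-\delta}n^{-1/2})$, while the quadratic term is $O_P(p^2n^{-1})=o_P(p^{2-\delta}n^{-1/2})$ under $p^\delta n^{-1/2}=o(1)$; therefore $\|\widehat\bM-\widecheck\bM\|=O_P(p^{2-\delta}n^{-1/2})$, and $|\hat\nu_j-\nu_j|\le\|\widehat\bM-\widecheck\bM\|$ by Weyl's inequality, which is the first part of~(i). For the bottom $p-r$ eigenvalues, Courant--Fischer with the test subspace $\mathcal C(\widetilde\bA)$ gives $\hat\nu_{r+1}=\lambda_{r+1}(\widehat\bM)\le\|\widetilde\bA^\T\widehat\bM\widetilde\bA\|$; since $\widetilde\bA^\T\bSigma^{(k)}_{yy}=\bzero$ we have $\widetilde\bA^\T\widehat\bSigma^{(k)}_{yy}=\widetilde\bA^\T\widehat\bDelta^{(k)}$, so $\widetilde\bA^\T\widehat\bM\widetilde\bA=\sum_k\iint(\widetilde\bA^\T\widehat\bDelta^{(k)})\widehat\bW(\widehat\bDelta^{(k)\T}\widetilde\bA)\,\mathrm d u\,\mathrm d v$ is purely quadratic in $\widehat\bDelta^{(k)}$, whence $\hat\nu_{r+1}\le\sup_{u,v}\|\widehat\bW(u,v)\|\sum_k\iint\|\widehat\bDelta^{(k)}(u,v)\|^2\,\mathrm d u\,\mathrm d v=O_P(p^2n^{-1})$, and $\hat\nu_j\le\hat\nu_{r+1}$ for every $j>r$; this is the second part of~(i).

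\emph{Conclusion.} For~(ii), when $j\le r-1$, $\hat\nu_{j+1}/\hat\nu_j=\{\nu_{j+1}+O_P(p^{2-\delta}n^{-1/2})\}/\{\nu_j+O_P(p^{2-\delta}n^{-1/2})\}\asymp1$, because $\nu_j,\nu_{j+1}\asymp p^{2-2\delta}$ and $p^{2-\delta}n^{-1/2}/p^{2-2\delta}=p^\delta n^{-1/2}=o(1)$; and $\hat\nu_{r+1}/\hat\nu_r=O_P(p^2n^{-1})/\{p^{2-2\delta}(1+o_P(1))\}=O_P(p^{2\delta}n^{-1})$. I expect the genuinely delicate step to be the population-magnitude bound: quantifying how the data-driven weight $\widehat\bW$ acts on $\mathcal C(\bA)$ — i.e.\ controlling $\bQ^\T\widehat\bSigma^{(0)}_{yy}(v,v)\bQ$, its inverse, and hence $\bA^\T\widehat\bW\bA$ uniformly in $v\in\cU$ — so that $\nu_1,\dots,\nu_r$ retain the order $p^{2-2\delta}$ and~(ii) holds under the single restriction $p^\delta n^{-1/2}=o(1)$. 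The other point requiring care is the cancellation in the Courant--Fischer step: it is precisely because the linear-in-$\widehat\bDelta$ part of $\widehat\bM$ lies in the range of $\bA$ that the residual eigenvalues attain the faster $p^2n^{-1}$ rate rather than the $p^{2-\delta}n^{-1/2}$ rate of Weyl.
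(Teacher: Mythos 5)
Your proposal is correct and delivers the stated rates, but for part (i) it takes a genuinely different route from the paper's. For $j\le r$ the paper does not apply Weyl's inequality to $\|\widehat\bM-\widecheck\bM\|$; it expands $\hat\nu_j-\nu_j=\widehat\bgamma_j^\T\widehat\bM\widehat\bgamma_j-\bgamma_j^\T\widecheck\bM\bgamma_j$ into five terms and controls them using the eigenvector bound $\|\widehat\bgamma_j-\bgamma_j\|=O_P(p^{\delta}n^{-1/2})$ inherited from Theorem~\ref{thm:eigvec}; your Weyl step is shorter and needs only $\|\widehat\bM-\widecheck\bM\|=O_P(p^{2-\delta}n^{-1/2})$, which you rederive exactly as in Step~3 of the paper's proof of Theorem~\ref{thm:eigvec}. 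For $j>r$ the paper introduces the intermediate matrix $\widebar\bM=\sum_k\iint\widehat\bSigma^{(k)}_{yy}\widehat\bW\,\bSigma^{(k)}_{yy}{}^{\T}$ and writes $\hat\nu_j=\widehat\bgamma_j^\T\widehat\bM\widehat\bgamma_j$ as a purely quadratic term plus terms involving $(\widebar\bM-\widecheck\bM)(\widehat\bgamma_j-\bgamma_j)$ and $\widecheck\bM(\widehat\bgamma_j-\bgamma_j)$, so it again leans on eigenvector perturbation, this time for the null-space eigenvectors (which are only determined up to rotation, a somewhat delicate point); your Courant--Fischer/interlacing compression onto $\cC(\widetilde\bA)$, where $\widetilde\bA^\T\bSigma^{(k)}_{yy}=\bzero$ annihilates the linear-in-error part exactly, reaches the same $O_P(p^2n^{-1})$ bound without touching eigenvectors and is arguably cleaner. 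Part~(ii) is handled identically in both arguments. The external ingredients you import --- $\nu_j\asymp p^{2-2\delta}$ for $j\le r$, $\sup_{u,v}\|\widehat\bW(u,v)\|=O_P(1)$, and $\sum_{i,j}\|\widehat\Sigma^{(k)}_{yy,ij}-\Sigma^{(k)}_{yy,ij}\|_{\cS}^2=O_P(p^2n^{-1})$ --- are precisely the facts established (including the same implicit second-moment reading of Condition~\ref{con:rootn}(i)) in Steps~1--3 of the paper's proof of Theorem~\ref{thm:eigvec}, so deferring to them is consistent with how the paper itself proves this theorem; your sandwich $\widecheck\bM=\bA\,\boldsymbol G\,\bA^\T$ for the population magnitudes is a repackaging of that Step~2 rather than a new argument, and you rightly identify the control of $\bA^\T\widehat\bW\bA$ there as the delicate point of the whole chain.
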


\begin{remark}

Part~(ii) of Theorem~\ref{thm:eigval} implies that the eigen-ratio $\hat{\nu}_{j+1}/\hat\nu_j$ will drop steeply at $j=r,$ thus providing partial theoretical support for the proposed ratio-based estimator $\hat r$ in (\ref{eq:fac-num}). Under Condition~\ref{con:white} or $p\gtrsim n^{1/2}$, it holds that $|\hat{\nu}_j-\nu_j| = O_\P(p^{2-\delta}n^{-1/2})$ for $j=1,\dots,r,$ and $\hat{\nu}_j=O_\P(p^2n^{-1})$ for $j=r+1,\dots,p$, which together imply that $\hat{\nu}_{j+1}/\hat\nu_j\asymp 1$ with probability tending to one for $j=1,\dots,r-1$, and $\hat{\nu}_{r+1}/\hat\nu_r=O_\P(p^{2\delta}n^{-1})$. When all factors are strong (i.e., $\delta=0$), we obtain $\hat \nu_{r+1}/\hat \nu_r = O_\P(n^{-1}),$ suggesting that $\hat r$ may not suffer from the increase in $p.$ See similar results in \cite{lam2012}. 
\end{remark}

To facilitate the consistency analysis of the ratio-based estimator $\hat r$ for $r$ and to avoid the case of ``0/0", we define a modified ratio-based estimator
\begin{equation}
\label{mod.ratio.est}
\hat r = \arg\min_{1\leq j\leq p} \frac{\hat{\nu}_{j+1} + \vartheta_n }{\hat{\nu}_j + \vartheta_n},
\end{equation}
where $\vartheta_n$ provides a lower bound
correction to $\hat \nu_j$ for $j >r$ and satisfies the conditions in Theorem~\ref{thm.r} below.

\begin{theorem}
\label{thm.r}
Let the conditions of Theorem~\ref{thm:eigval} hold, $\vartheta_np^{-2+2\delta} \rightarrow 0,\vartheta_nn^2p^{-2-2\delta} \to \infty$ and $\vartheta_np^{2-2\delta}\to \infty.$ Then as $p, n \rightarrow \infty,$ it holds that
$\pr(\hat r = r) \to 1.$
\end{theorem}

\begin{remark}
(i) Theorem~\ref{thm.r} shows that the modified $\hat r$ in (\ref{mod.ratio.est}) is a consistent estimator of $r$ 
In practice, provided that $\vartheta_n$ is usually hard to be specified, we still use (\ref{eq:fac-num}) to estimate $r,$ leading to good performance in our empirical studies.

(ii) With the aid of Theorem~\ref{thm.r}, our estimation of ${\cal C}(\bA)$ is asymptotically adaptive to $r.$ To this end, let $\widecheck \bA =(\widehat \bgamma_1, \dots, \widehat\bgamma_{\hat r})$ and ${\cal C}(\widecheck\bA)=\text{span}\{\widehat \bgamma_1, \dots, \widehat \bgamma_{\hat r}\}$ be the estimator of ${\cal C}(\bA)$ with $r$ estimated by $\hat r.$ Then it holds that for any constant $C>0$ that
\begin{eqnarray*}
\pr\big(p^{-\delta}n^{1/2}{\cal D}\big( {\cal C}(\bA), {\cal C}(\widecheck\bA)\big)>C\big)
&\leq & \pr\big(p^{-\delta}n^{1/2}{\cal D}\big( {\cal C}(\bA), {\cal C}(\widecheck\bA)\big)>C | \hat r=r\big)
+ \pr(\hat r \neq r)\\
&\leq & \pr\big(p^{-\delta}n^{1/2}{\cal D}\big( {\cal C}(\bA), {\cal C}(\widehat\bA)\big)>C | \hat r=r\big) + o(1),
\end{eqnarray*}
which together with Theorem~\ref{thm:eigvec} yield $\mathcal D \big(\mathcal C(\bA), \mathcal C(\widecheck \bA)\big) = O_\P(p^{\delta}n^{-1/2}+p^{\delta-1})$.
\end{remark}

\section{Sparse Factor Model}
\label{sec:ultrahigh}
Despite the phenomenon of ``curse of dimensionality" being avoided when all factors are strong, our method does not guarantee a parsimonious and interpretable model in the presence of weak factors. In real applications, it is quite common that many estimated factor loadings are close to zero especially when $p$ is large, see the gene expression study in \cite{carvalho2008high} and the sea surface air pressure records example in \cite{lam2012}. 
Such phenomenon motivates us to propose a sparse factor model 
by imposing sparsity assumptions on the factor loading space $\cC(\bA)$ of (\ref{eq:model}).  
In this section, we target to develop the regularized estimation under sparsity constraints, which not only leads to the enhanced intepretability in practice, but also theoretically enlarges the dimension of the feature space that our proposed sparse factor model can handle compared with the nonsparse factor model with weak factors. 

We consider two complementary notions of subspace sparsity defined in terms of the factor loading matrix $\bA$: row sparsity and column sparsity, which are consistent with the definitions in \cite{vu2013}. Intuitively, the row sparsity entails that only a small subset of components in $\bY_t(\cdot)$ are driven by the common factors $\bX_t(\cdot),$ thus making the row sparse factor loading space generated by a small number of variables, independent of the choice of the basis. The column sparsity, on the other hand, corresponds to the case that each common factor $X_{tj}(\cdot)$ has impact on only a small fraction of components of $\bY_t(\cdot)$ and hence the column sparse factor loading space has orthonormal basis consisting of sparse vectors. We begin by introducing parameter spaces of ``approximately column sparse" and ``approximately row sparse" factor loading matrices respectively defined in Conditions~\ref{con:colsparse} and \ref{con:rowsparse} below. Let $\ba_i$ be the $i$-th row vector of $\bA = (A_{ij})_{p \times r}.$ 

\begin{condition}[Column sparsity]\label{con:colsparse}
    The matrix $\bA$ is from the class 
    $$\cV(\tau,c_1(p), L) = \Big\{\bA: \max\limits_{1\leq j\leq r}\sum_{i=1}^p|A_{ij}|^\tau\leq c_1(p),\ \max\limits_{1\leq i\leq p, 1 \leq j \leq r}|A_{ij}|\leq L \Big\},\quad \tau\in [0, 1).$$
\end{condition}

\begin{condbis}{con:colsparse}[Row sparsity]\label{con:rowsparse}
 The matrix $\bA$ is from the class 
    $$\cV^*(\tau,c_1(p), L) = \Big\{\bA: \sum_{i=1}^p \|\ba_i\|^\tau\leq c_1(p),\  \max\limits_{1\leq i\leq p, 1 \leq j \leq r}|A_{ij}|\leq L \Big\},\quad\tau\in [0, 1).$$
\end{condbis}

The parameters $c_1(p)$ and $\tau$ together control the column or row sparsity in the factor loading matrix $\bA.$ In the special case of $\tau=0,$ $\cV(0,c_1(p), L)$ and $\cV^*(0,c_1(p), L)$ correspond to the truly sparse situations, in which $\bA$ has at most $c_1(p)$ nonzero entries on each column under Condition~\ref{con:colsparse} or $c_1(p)$ nonzero row vectors under Condition~\ref{con:rowsparse}. By H\"{o}lder's inequality, it is easy to check that $\cV^*(\tau, c_1(p), L)\subseteq \cV(\tau, c_{r,\tau}c_1(p), L)$, where $c_{r,\tau} = 1$ for $\tau=0$ and $r^{1-\tau/2}$ for $\tau\in(0,1).$ Provided that $r$ is fixed, our subsequent analysis will focus on the class of column sparse factor loading matrices, $\cV(\tau,c_1(p), L).$

Our estimation procedure in Section~\ref{sec:method} is developed based on the estimation of $\bSigma_{yy}^{(k)}$'s. When $p$ grows faster than $n^{1/2},$ it 
is known that the sample autocovariance functions $\widehat\bSigma_{yy}^{(k)}$'s are no longer consistent estimators. Nevertheless, under the sparse factor model setup, the decomposition of $\bSigma_{yy}^{(k)}$ in (\ref{eq:cov-relation}) suggests that our enforced column sparsity in $\bA$ is inherited by the functional sparsity structure in $\bSigma_{yy}^{(k)},$ thus enabling us to possibly construct the threhsolding-based estimator for $\bSigma_{yy}^{(k)}$ to ensure the consistency. Before introducing the estimator, we slightly modify Condition~\ref{con:xeps} to accommodate the sparse setting and present a lemma that reveals the functional sparsity pattern in $\bSigma_{yy}^{(k)}.$

\begin{condbis}{con:xeps}\label{con:xeps2}
(i) Let $\bSigma^{(k)}_{x\varepsilon}(u,v) = \big\{\Sigma^{(k)}_{x\varepsilon, ij}(u,v)\big\}_{r \times p}$ and $\max\limits_{1\leq l\leq r} \sum_{j=1}^p \|\Sigma^{(k)}_{x\varepsilon,lj}\|^\tau_{\cS}\leq c_2(p)$ for $k=1,\dots, k_0$; (ii) $\cov\{\bX_t(u), \bvarepsilon_{t+k}(v)\}={\bf 0}$ for any $k\geq 0$ and $(u,v) \in \cU\times\cU$.
\end{condbis}

Condition~\ref{con:xeps2}(i) requires relatively weak correlations between $\bX_{t+k}(\cdot)$ and $\bvarepsilon_{t}(\cdot).$ As long as $c_2(p)=o(p^{(1-\delta)/2}),$ Condition~\ref{con:xeps}(i) follows directly from Condition~\ref{con:xeps2}(i). Moreover, Condition~\ref{con:xeps2}(i)
relies on the Hilbert-Schmidt norm to encourage the functional sparsity in $\bSigma_{x\varepsilon}^{(k)}$, i.e., each common factor $X_{(t+k)l}(\cdot)$ is only correlated with a few components of $\bvarepsilon_t(\cdot).$ With imposed sparsity constraints in Conditions~\ref{con:xeps2} and  \ref{con:colsparse}, it can be inferred from (\ref{eq:cov-relation}) 
that $\bSigma^{(k)}_{yy}$ is functional sparse.
Lemma~\ref{lem:sparse} in Section~\ref{supp.lem} of Supplementary Material shows that the functional sparsity patterns in columns/rows of $\bSigma_{yy}^{(k)}$ are determined by parameters $c_1(p)$ and $c_2(p)$ with smaller values yielding functional sparser $\bSigma_{yy}^{(k)}.$ 

To obtain a functional sparse estimator for $\bSigma_{yy}^{(k)},$ we apply the hard functional thresholding rule, which combines functional versions of hard thresholding and shrinkage 
based on the Hilbert-Schmidt norm of functions, on entries of the sample autocovariance function $\widehat\bSigma_{yy}^{(k)}.$ Then the functional thresholding estimator is constructed as
\begin{equation}\label{th-est}
    \mathcal T_{\eta_k}(\widehat \bSigma^{(k)}_{yy})(u,v) = \   \Big[\widehat{\Sigma}^{(k)}_{yy,ij}(u,v) I\big\{\|\widehat{\Sigma}^{(k)}_{yy,ij}\big\|_{\cS}\geq \eta_k\big\}\Big]_{p\times p},\quad u,v \in \cU,
\end{equation}
where $I(\cdot)$ is the indicator function and $\eta_k \geq 0$ is the thresholding parameter. 
Under mild regularity conditions, Lemma~\ref{lem:nonasym} in Section~\ref{supp.lem} of the Supplementary Material yields 
\begin{equation}
\label{linfty.Sigma}
 \max_{1\leq i,j\leq p} \|\widehat{\Sigma}^{(k)}_{yy,ij}-\Sigma^{(k)}_{yy,ij}\|_{\cS}=O_\P\left(\cM_y\sqrt{\frac{\log p}{n}}\right),   
\end{equation}
where $\cM_y$ is the functional stability measure \cite[]{guo2020consistency} defined in (\ref{df.fsm}) in  Section~\ref{ap_cond} of the Supplementary Material. 
The rate under the functional version of $\ell_{\infty}$ norm in (\ref{linfty.Sigma}) plays a crucial role in our theoretical analysis under an ultra-high-dimensional regime and, in particular, suggests us to set the thresholding level as $\eta_k \asymp \cM_y(n^{-1} \log p)^{1/2}.$
Replacing $\widehat\bSigma^{(k)}_{yy}$ in (\ref{eq:Mhat}) with $\cT_{\eta_k}(\widehat \bSigma^{(k)}_{yy}),$ we obtain the corresponding estimator for $\bM:$
\begin{equation}\label{eq:Mhat-thresh}
    \widetilde \bM = \sum_{k=1}^{k_0}\int \int \mathcal T_{\eta_k}(\widehat \bSigma^{(k)}_{yy})(u,v)\widehat\bW(v)\mathcal T_{\eta_k}(\widehat \bSigma^{(k)}_{yy})(u,v)^{\T}\,\mathrm d u\mathrm d v.
\end{equation}

To recover the column sparsity structure in $\cC(\bA),$ we perform sparse {principal component analysis} (PCA) \cite[]{vu2013} on $\widetilde \bM$ rather than the standard eigenanalysis for $\widehat \bM$ in Section~\ref{sec:est}. For matrices $\bA_1$ and $\bA_2$ with the same dimension, let $\langle\bA_1,\bA_2\rangle:=\text{trace}(\bA_1^\T\bA_2).$ We define $\widetilde \bK \in \eR^{p \times r}$ as a solution to the following optimization problem:
\begin{equation}
\label{eq:sparse_1}
    \widetilde \bK = \underset{\bK=(K_{jl})_{p \times r}}{\arg\max} ~~\langle \widetilde{\bM}, \bK \bK^{\T}\rangle
    ~~\text{subject to}~~ \bK^\T \bK = \bI_r,~\max\limits_{1\leq l\leq r}\sum_{j=1}^p|K_{jl}|^\tau \leq C_{\tau},
\end{equation}
where $C_{\tau}>0$ is a regularization parameter. Alternatively, to estimate the row sparse factor loading space, we can substitute the second constraint in (\ref{eq:sparse_1}) by $\sum_{i=1}^p \|\bk_i\|^{\tau} \leq \widetilde C_{\tau},$ where $\bk_i$ denotes the $i$-th row vector of $\bK$ and $\widetilde C_{\tau}>0$ is a regularization parameter. It is worth noting that, without the sparsity constraint in 
(\ref{eq:sparse_1}), the optimization problem degenerates to the ordinary PCA. Despite being challenging to solve \eqref{eq:sparse_1} due to the non-convex constraint, some efficient and computationally tractable algorithms have been developed, see, e.g., under the truly sparse case ($\tau=0$), the combinatorial approaches  \cite[]{moghaddam2006generalized,d2008optimal,mackey2009deflation}, the semi-definite relaxation \cite[]{d2007direct} and its variants, and the random-projection-based method \cite[]{gataric2020sparse}. We refer to \cite{zou2018selective} for review on recent developments for sparse PCA.

We now present the asymptotic analysis of $\cC(\widetilde \bK)$ in the following theorem. Note that $\cC(\bK)=\cC(\bA)$ under the white noise assumption.

\begin{theorem}\label{thm:hdeigspace}
    Let Conditions~\ref{con:white}, \ref{con:x}, \ref{con:eps}, \ref{con:xeps2}, \ref{con:colsparse} and \ref{con:flp}--\ref{con:fsm} in Section~\ref{ap_cond} of 
    the Supplementary Material hold and ${\cal M}_y^2 \log p n^{-1} \to 0.$ Then as $p,n \rightarrow \infty,$ it holds that:
    $$\nu_r\mathcal D\Big(\mathcal C(\bK), \mathcal C({\widetilde \bK})\Big) = O_\P\left[c_1(p)\big\{c_1(p)+c_2(p)\big\}\mathcal{M}_{y}^{1-\tau}\Big(\frac{\log p}{n}\Big)^{\frac{1-\tau}{2}}\right].$$
\end{theorem}
\begin{remark}
(i) The convergence rate of ${\mathcal D}\big(\mathcal C(\bK), \mathcal C({\widetilde \bK})\big)$ is governed by both dimensionality parameters $\{n, p, c_1(p), c_2(p)\}$ and internal parameters (${\cal M}_y, \nu_r, \tau$). It is easy to see that the rate is better when $\nu_r$ is large and $\{c_1(p), c_2(p), {\cal M}_y, \tau\}$ are small.

(ii) Under the truly sparse case ($\tau=0$) with $c_1(p)\gtrsim c_2(p),$ and $|A_{ij}|\asymp \gamma$ for $(i,j)$ such that $A_{ij} \neq 0,$ it follows from the framework in Section~\ref{sec:theory} that $c_1(p) \gamma^2 \asymp p^{1-\delta}$ under Condition~\ref{con:strg}, which together with Theorem~\ref{thm:hdeigspace} imply that 
\begin{equation}
\label{rate.thm3}
    \nu_r\mathcal D\big(\mathcal C(\bK), \mathcal C({\widetilde \bK})\big)=O_\P \big\{p^{2-2\delta} \gamma^{-4}{\cal M}_y (\log p)^{1/2} n^{-1/2}\big\}.
\end{equation}
By comparison, Theorem~\ref{thm:eigvec} and (\ref{nu_r_bound}) 
of Supplementary Material lead to the rate $\nu_r{\mathcal D}\big(\mathcal C(\bK), \mathcal C({\widetilde \bK})\big)=O_\P \big(p^{2-\delta}n^{-1/2}\big),$ which is slower than that in (\ref{rate.thm3}) for larger values of $\gamma$ or $\delta$ (i.e., smaller values of $c_1(p)$ provided that $\gamma \asymp 1$). Hence, when the magnitudes of nonzero entries in $\bA$ become larger or the factors are weaker in the sense of Condition~\ref{con:strg} (i.e., $\bA$ is sparser in the sense of Condition~\ref{con:colsparse}), our regularized estimation benefits more from the imposed sparsity and enjoys faster convergence rate than the ordinary method in Section~\ref{sec:method}.
\end{remark}

\section{Simulation Studies}
\label{sec:sim}
In each simulated scenario, we generate $p$-vector of functional time series by 
\begin{equation}
\label{eq:sim-model}
 \bY_t(\cdot) = \kappa_0 \bA \widebar \bX_t(\cdot) + \bvarepsilon_t(\cdot),~~t = 1,\dots,n,
\end{equation} 
where the parameter $\kappa_0>0$ controls the strength of common factors $\bX_t(\cdot)=\kappa_0 \widebar \bX_t(\cdot)$ and the entries of $\bA\in\mathbb R^{p\times r}$ are sampled from $\text{Uniform}[-\sqrt 3 p^{-\delta/2}, \sqrt 3 p^{-\delta/2}]$ with $\delta \in [0,1].$ Hence Condition~\ref{con:strg} is satisfied, 
in which $\delta=0$ (or $\delta>0$) corresponds to the case of strong (or weak) factors. 
To mimic the infinite-dimensionality of functional data, we generate each scaled latent factor by $\widebar X_{tl}(\cdot)=\sum_{i=1}^{50} \xi_{tli} \phi_{i}(\cdot)$ for $l=1, \dots, r$ over $\cU=[0,1],$ where $\{\phi_{i}(\cdot)\}_{i=1}^{50}$ is a $50$-dimensional Fourier basis function and the basis coefficients $\bxi_{ti}=(\xi_{t1i}, \dots, \xi_{tri})^{\T}$ are generated from a vector autoregressive model, $\bxi_{ti}=\bV\bxi_{(t-1)i} + \bepsilon_i$ with $\bV=(\rho^{|l-l'|+1})_{1 \leq l,l' \leq r}$ and the innovation $\bepsilon_{ti}=(\epsilon_{t1i}, \dots, \epsilon_{tri})^{\T}$
consisting of independent $N(0, i^{-2\iota})$ components. We set $\rho=0.45$ and $\iota=0.75.$ For the idiosyncratic component $\bvarepsilon_t(\cdot),$ we consider the following three scenarios.
\newcounter{bean}
\setcounter{bean}{0}
\begin{center}
	\begin{list}
		{\textsc{Scenario} \arabic{bean}.}{\usecounter{bean}}
		\item For each $j=1, \dots, p,$ we generate $\varepsilon_{tj}(\cdot) = \sum_{i=1}^{20} 2^{-(i+1)} \widetilde Z_{tji}\phi_i(\cdot),$ where $\widetilde Z_{tji}$'s are independent standard normal.
		
		\item We generate $\bvarepsilon_{t}(\cdot) = \bH\widebar\bvarepsilon_{t}(\cdot),$ where $\bH =5^{-1}\text{diag}(h_1,\dots,h_p),$ $h_j$'s are sampled uniformly from $\{1,\dots,10\}$ and each $\bar\varepsilon_{tj}(\cdot)$ is generated in the same way as $\varepsilon_{tj}(\cdot)$ in Scenario~1.
		
        \item We fix $\kappa_0=1$ and generate $\varepsilon_{tj}(\cdot) = \kappa_1\varepsilon_{t0}(\cdot) + \bar \varepsilon_{tj}(\cdot),$ where $\kappa_1 > 0,$ $\varepsilon_{t0}(\cdot) = \sum_{i=1}^4 \widebar Z_{ti}\phi_i(\cdot)$ and  each $\widebar Z_{ti}$ is sampled independently from $N(0,1).$ 

        \item For each $j=1, \dots, p,$ we generate $\varepsilon_{tj}(\cdot) = \sum_{i=1}^{20} 2^{-(i+1)} \widecheck Z_{tji}\phi_i(\cdot),$ where, for each $i=1, \dots, 20,$ $\{\widecheck Z_{tji}\}$ are generated from an AR(1) model $\widecheck Z_{tji} = 0.5 \widecheck Z_{(t-1)ji} + \theta_{tji}$ with $\theta_{tji}$'s being independent standard normal.
	\end{list}
\end{center}
In Scenario~1, each $\varepsilon_{tj}(\cdot)$ is white noise with the identical covariance function across $j,$ while Scenario~2 corresponds to the heterogeneous case with different covariance functions. Scenario~3 consists of $(r+1)$ actual common factors, among which the additional factor $\varepsilon_{t0}(\cdot)$ is independent of other factors in $\bX_t(\cdot),$ and its signal strength is determined by the parameter $\kappa_1.$ The underlying factor loading matrix is $(\bA, {\bf 1}_p),$ where ${\bf 1}_p$ denotes the $p$-vector of ones. For simplicity, we still denote the factor loading matrix by $\bA.$ Scenario~4 relaxes the white noise assumption by allowing for weak serial correlations in $\{\bvarepsilon_t(\cdot)\}.$


\subsection{Ordinary case}
\label{sim:ordi}
We compare our proposed method based on $\widehat \bM$ with two competing methods based on $\widehat \bM_1$ and $\widehat \bM_2.$ 
Since our experimental results suggest that $\hat r$ and $\mathcal C(\widehat \bA)$ are generally insensitive to the choice of $q$ (cannot be too small) and $k_0,$ we set $q=12$ and $k_0=4$. A sensitivity analysis can be found in Figure~\ref{fig-rob} in Section~\ref{supp.emp} of the Supplementary Material. In each setting, we generate $n=100$ serially dependent observations of $p=50,100,200$ functional variables based on $r=4$ functional factors. We ran each simulation $100$ times. The sample performance of three approaches is examined in terms of their abilities of correctly identifying the number of factors and the estimation accuracy in recovering the factor loading space, respectively measured by the relative frequency estimate for $\pr(\hat r=r)$ and $\mathcal D \big(\mathcal C(\bA), \mathcal C(\widehat \bA)\big)$ using the correct $r.$ For each of three comparison methods, Figure~\ref{fig1} and Figure~\ref{fig3} in Section~\ref{supp.emp} of the Supplementary Material plot average relative frequencies $\hat r=r$ as the factor strength increases when the factors are strong (i.e., $\delta=0$) and weak with $\delta=0.5$, respectively. 
Figure~\ref{fig2} and Figure~\ref{fig4} in Section~\ref{supp.emp} of the Supplementary Material plot the corresponding average estimation errors for $\mathcal C(\bA).$
See also Figure~\ref{fig-cross} in Section~\ref{supp.emp} for the scenario where the idiosyncratic components are cross-correlated.

\begin{figure}[!hbt]
\captionsetup[subfigure]{labelformat=empty}
\centering
\begin{subfigure}{0.32\linewidth}
  \caption{\scriptsize{$p=50$}}
  \vspace{-0.25cm}
  \includegraphics[width=5cm,height=3.5cm]{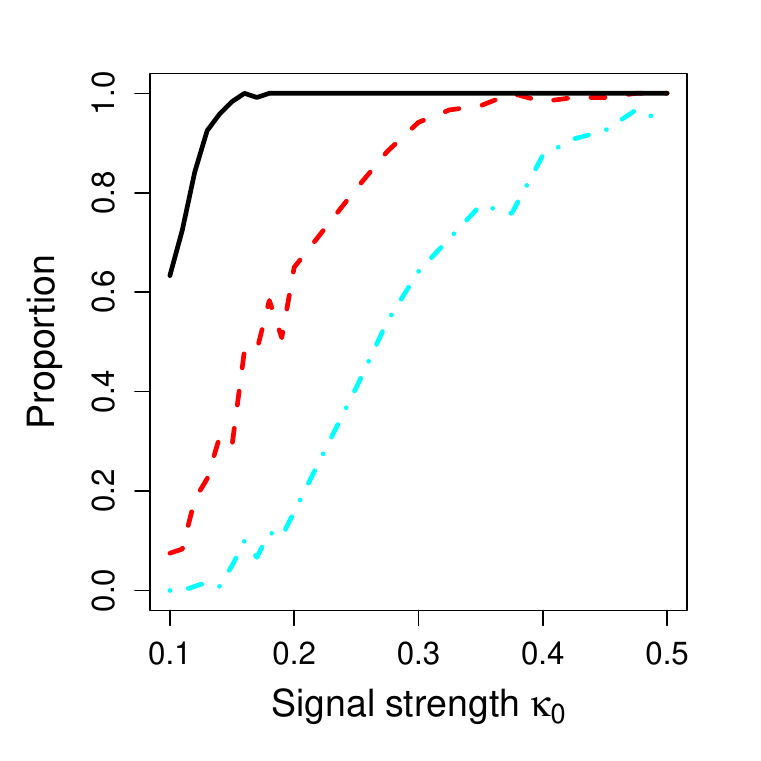}
  \label{fig:1-1-50}
  \vspace{-0.25cm}
  \caption{\scriptsize{$p=50$}}
  \vspace{-0.25cm}
  \includegraphics[width=5cm,height=3.5cm]{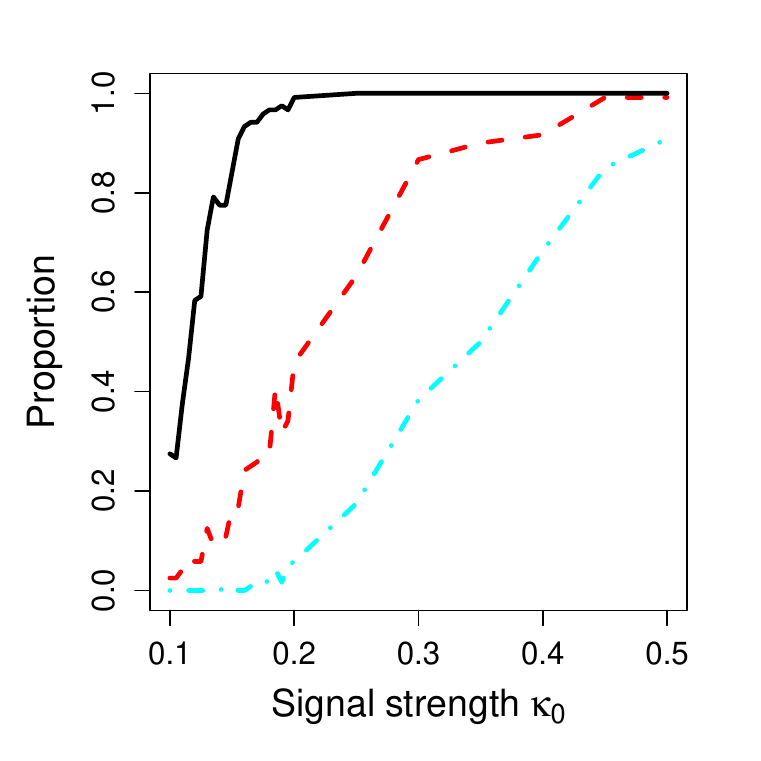}
  \label{fig:1-2-50}
  \vspace{-0.25cm}
  \caption{\scriptsize{$p=50$}}
  \vspace{-0.25cm}
  \includegraphics[width=5cm,height=3.5cm]{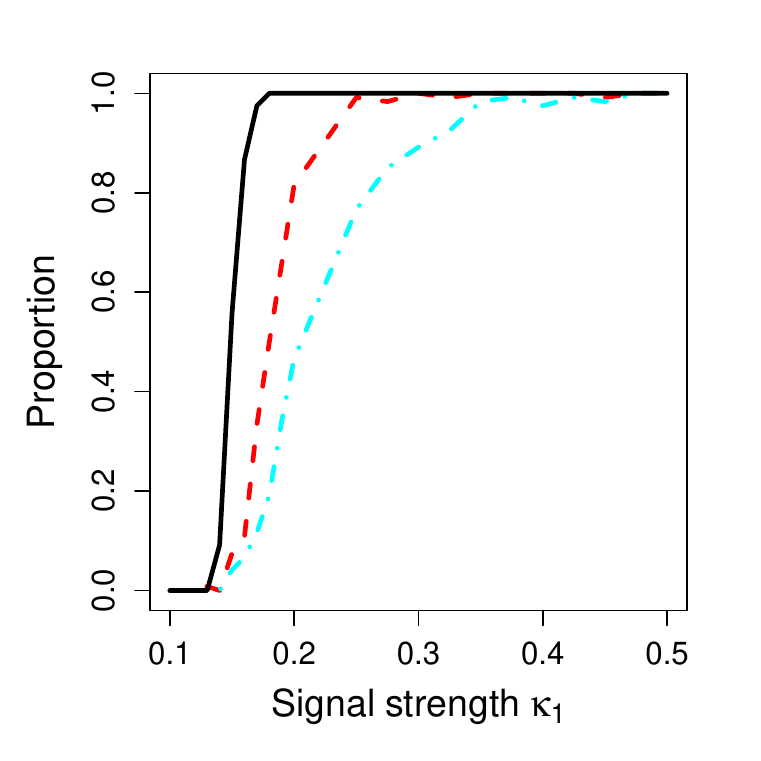}
  \label{fig:1-3-50}
  \vspace{-0.25cm}
  \caption{\scriptsize{$p=50$}}
  \vspace{-0.25cm}
  \includegraphics[width=5cm,height=3.5cm]{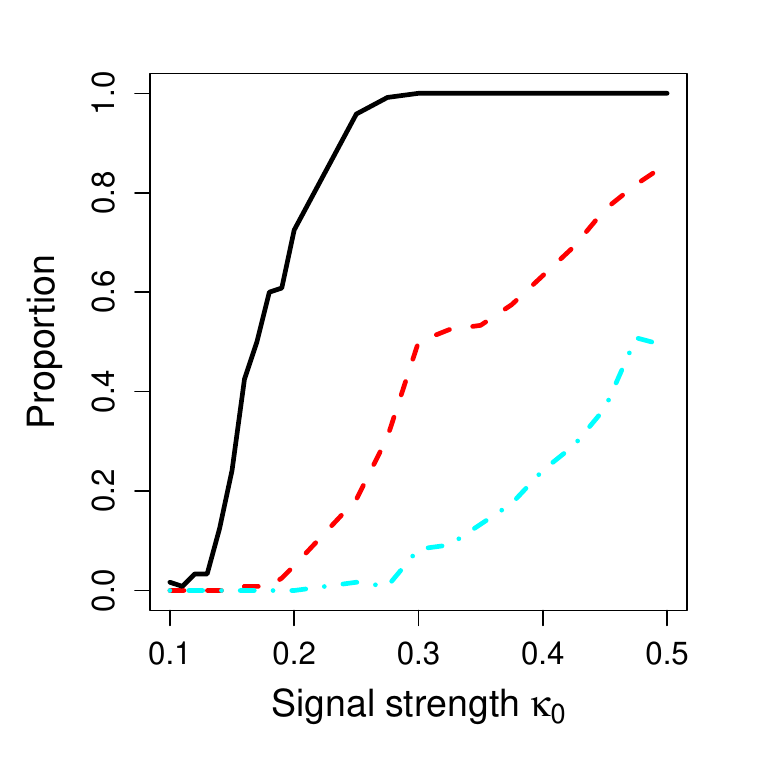}
  \label{fig:1-4-50}
\end{subfigure}
\begin{subfigure}{0.32\linewidth}
  \caption{\scriptsize{$p=100$}}
  \vspace{-0.25cm}
  \includegraphics[width=5cm,height=3.5cm]{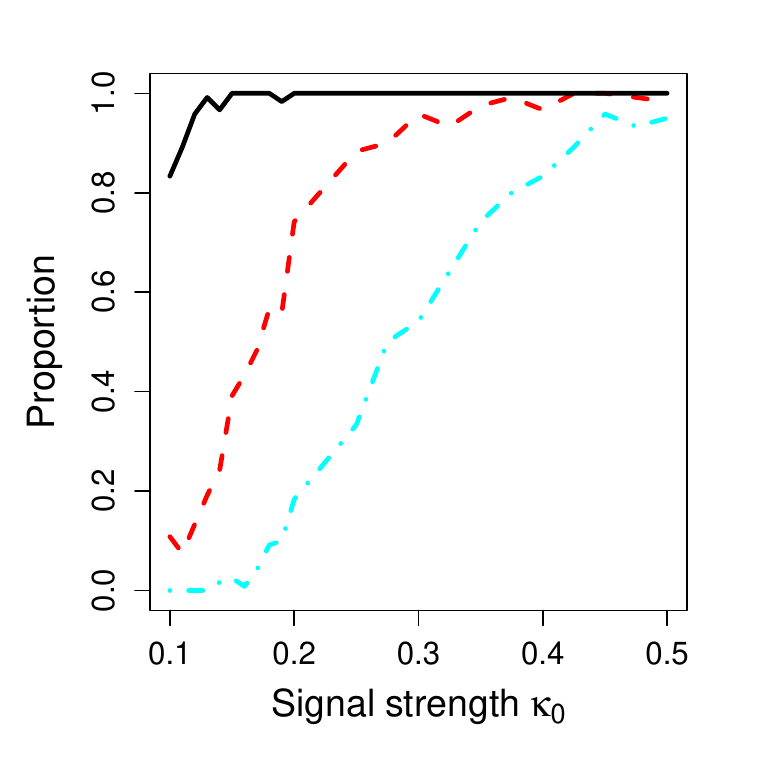}
  \label{fig:1-1-100}
  \vspace{-0.25cm}
  \caption{\scriptsize{$p=100$}} 
  \vspace{-0.25cm}
  \includegraphics[width=5cm,height=3.5cm]{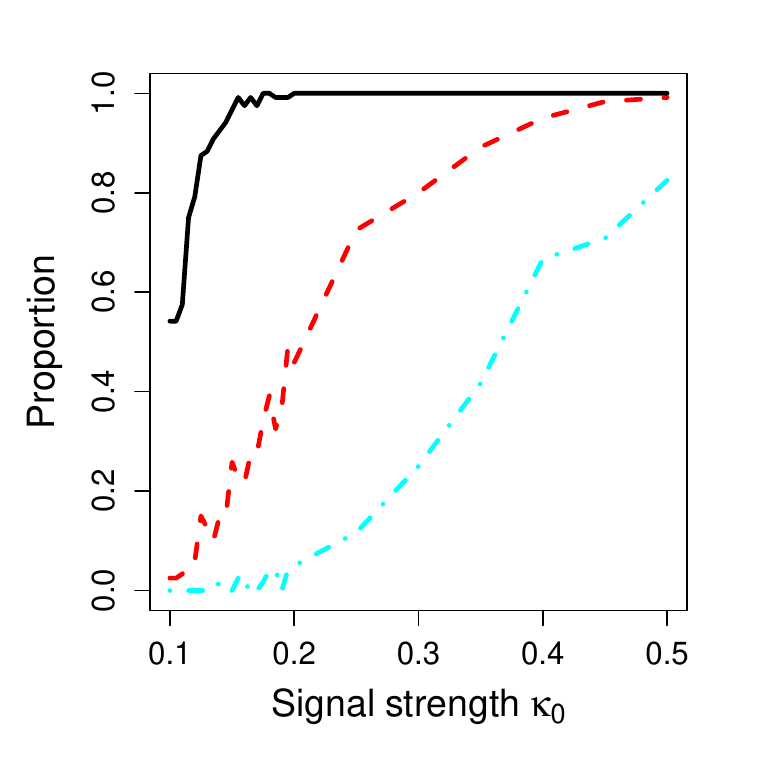}
  \label{fig:1-2-100}
  \vspace{-0.25cm}
  \caption{\scriptsize{$p=100$}}
  \vspace{-0.25cm}
  \includegraphics[width=5cm,height=3.5cm]{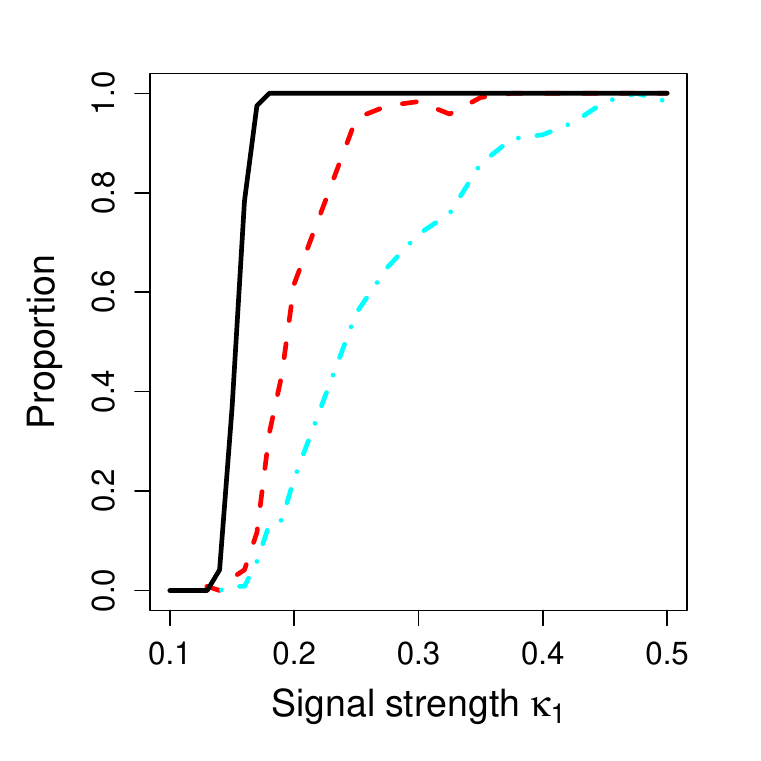}
  \label{fig:1-3-100}
  \vspace{-0.25cm}
  \caption{\scriptsize{$p=100$}}
  \vspace{-0.25cm}
  \includegraphics[width=5cm,height=3.5cm]{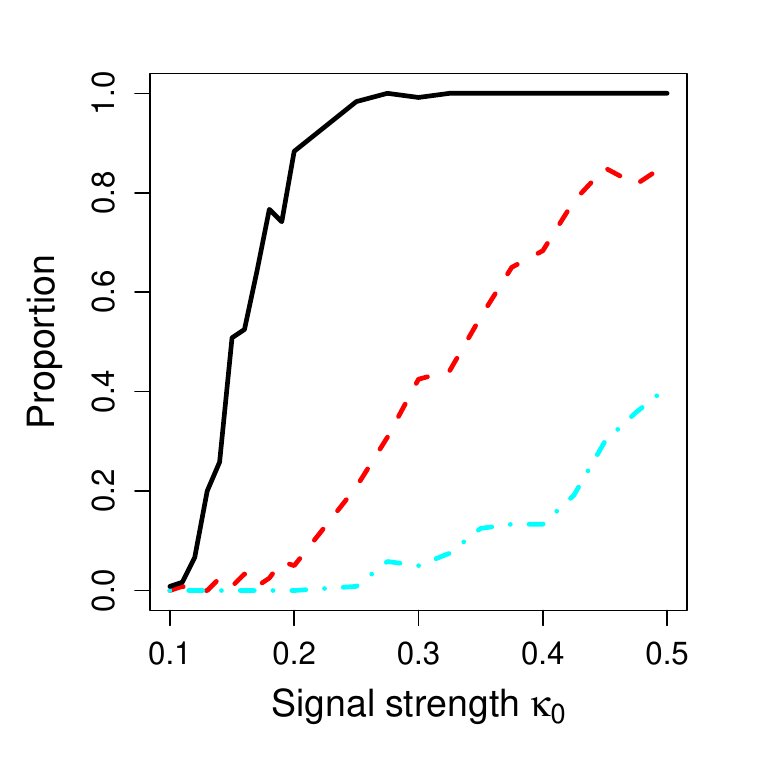}
  \label{fig:1-4-100}
\end{subfigure}
\begin{subfigure}{0.32\linewidth}
  \caption{\scriptsize{$p=200$}}
  \vspace{-0.25cm}
  \includegraphics[width=5cm,height=3.5cm]{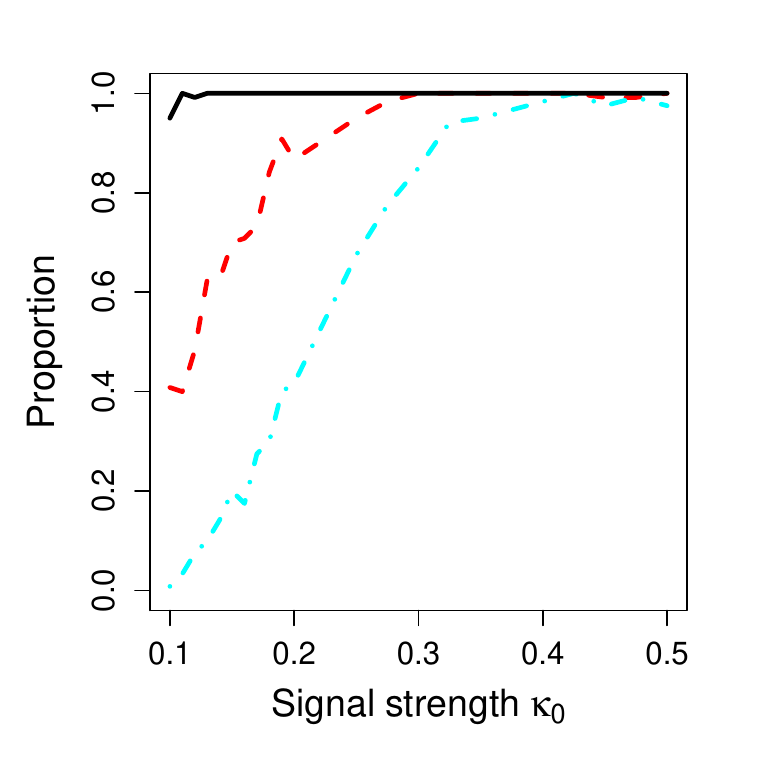}
  \label{fig:1-1-200}
  \vspace{-0.25cm}
  \caption{\scriptsize{$p=200$}}
  \vspace{-0.25cm}
  \includegraphics[width=5cm,height=3.5cm]{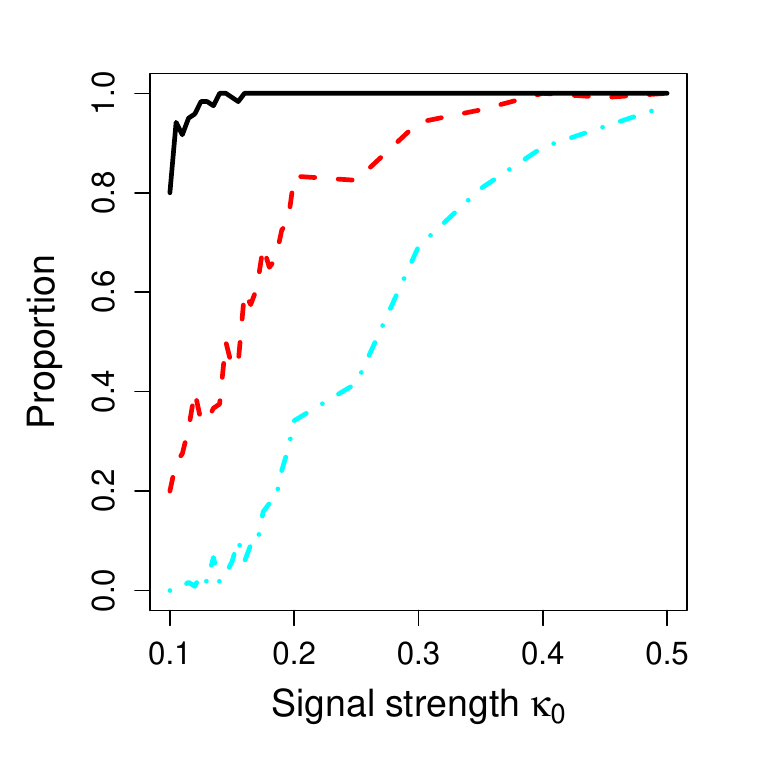}
  \label{fig:1-2-200}
  \vspace{-0.25cm}
  \caption{\scriptsize{$p=200$}}
  \vspace{-0.25cm}
  \includegraphics[width=5cm,height=3.5cm]{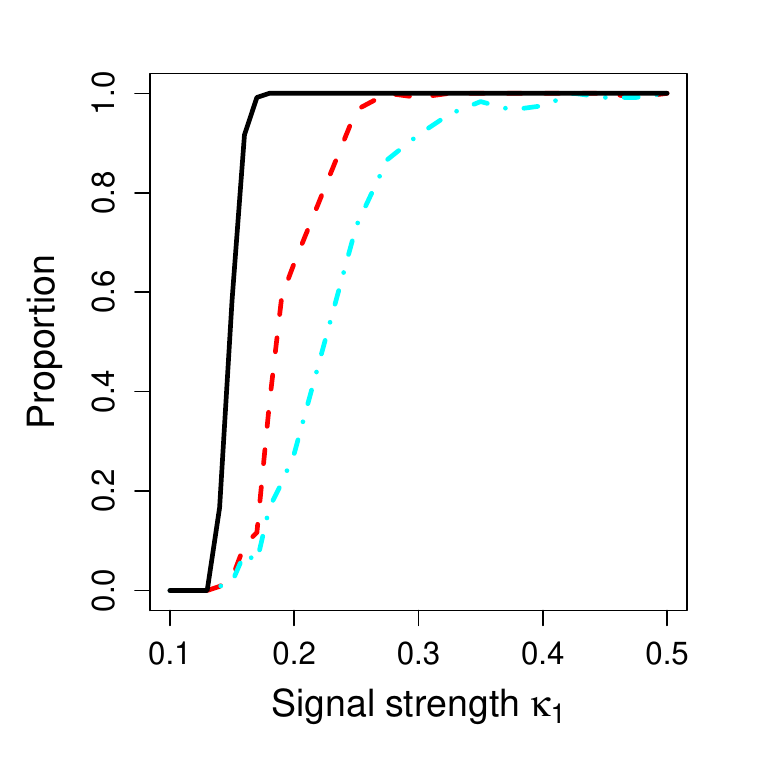}
  \label{fig:1-3-200}
  \vspace{-0.25cm}
  \caption{\scriptsize{$p=200$}}
  \vspace{-0.25cm}
  \includegraphics[width=5cm,height=3.5cm]{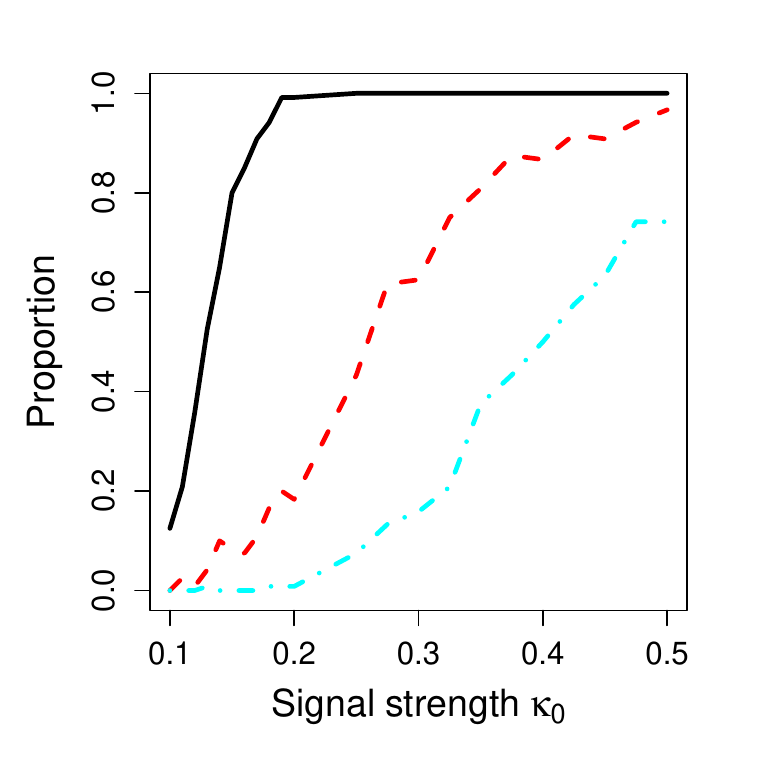}
  \label{fig:1-4-200}
\end{subfigure}
\centering
\begingroup
\linespread{1}
\caption{\label{fig1}{\it \footnotesize Scenario~1 (top row), Scenario~2 (2nd row), Scenario~3 (3rd row) and Scenario~4 (bottom row)} for $p=50,$ $100,$ $200$ when factors are strong: Plots of average relative frequency estimates for $\pr(\hat r = r)$ against $\kappa_0$ or $\kappa_1$ for three methods based on $\widehat\bM$ (black solid), $\widehat\bM_1$ (red dashed) and $\widehat\bM_2$ (cyan dash dotted).
}
\endgroup
\vspace{-0.2cm}
\end{figure}

\begin{figure}[!hbt]
\captionsetup[subfigure]{labelformat=empty}
\centering
\begin{subfigure}{0.32\linewidth}
  \caption{\scriptsize{$p=50$}}
  \vspace{-0.25cm}
  \includegraphics[width=5cm,height=3.5cm]{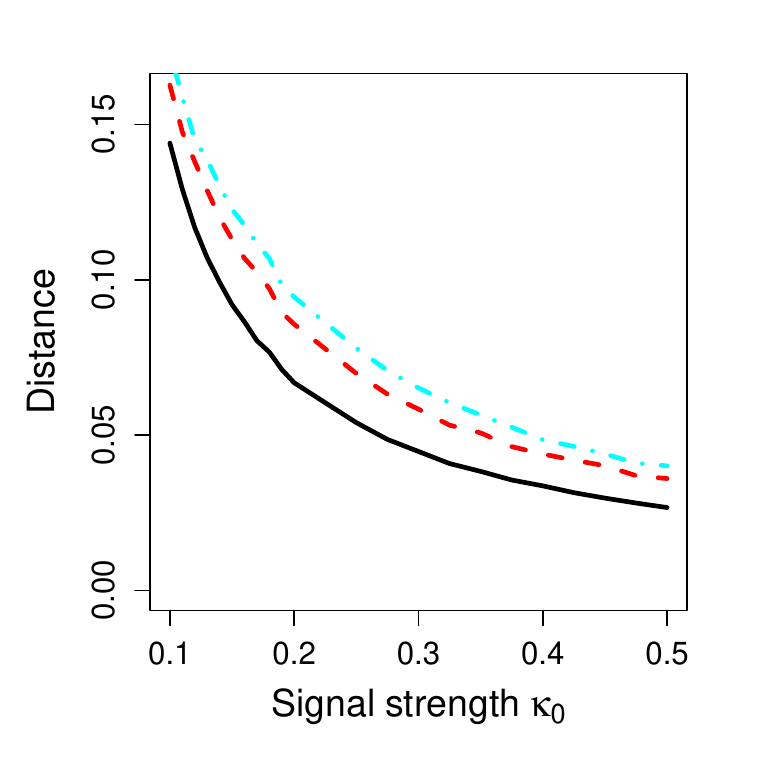}
  \label{fig:3-1-50}
  \vspace{-0.25cm}
  \caption{\scriptsize{$p=50$}}
  \vspace{-0.25cm}
  \includegraphics[width=5cm,height=3.5cm]{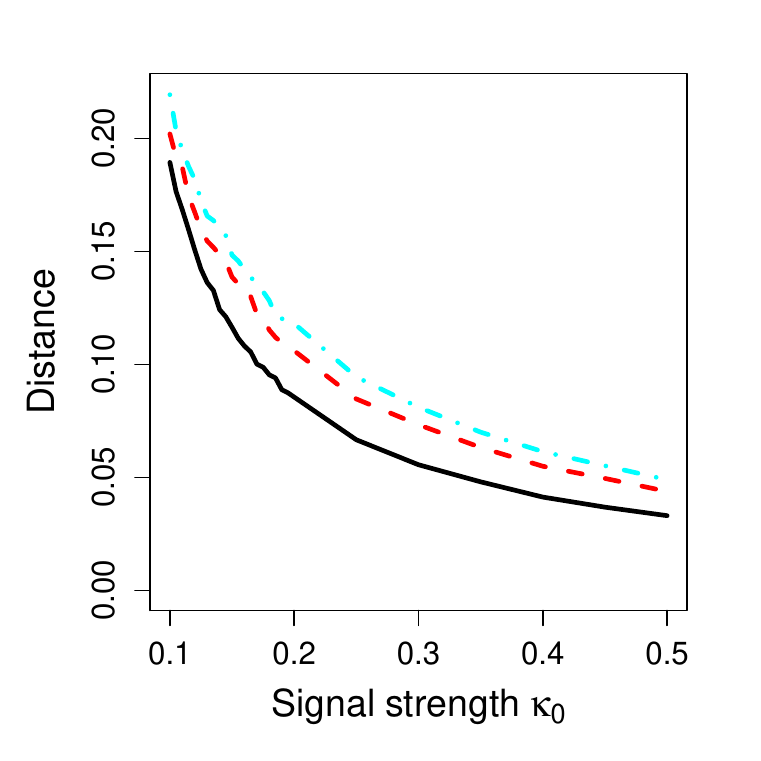}
  \label{fig:3-2-50}
  \vspace{-0.25cm}
  \caption{\scriptsize{$p=50$}}
  \vspace{-0.25cm}
  \includegraphics[width=5cm,height=3.5cm]{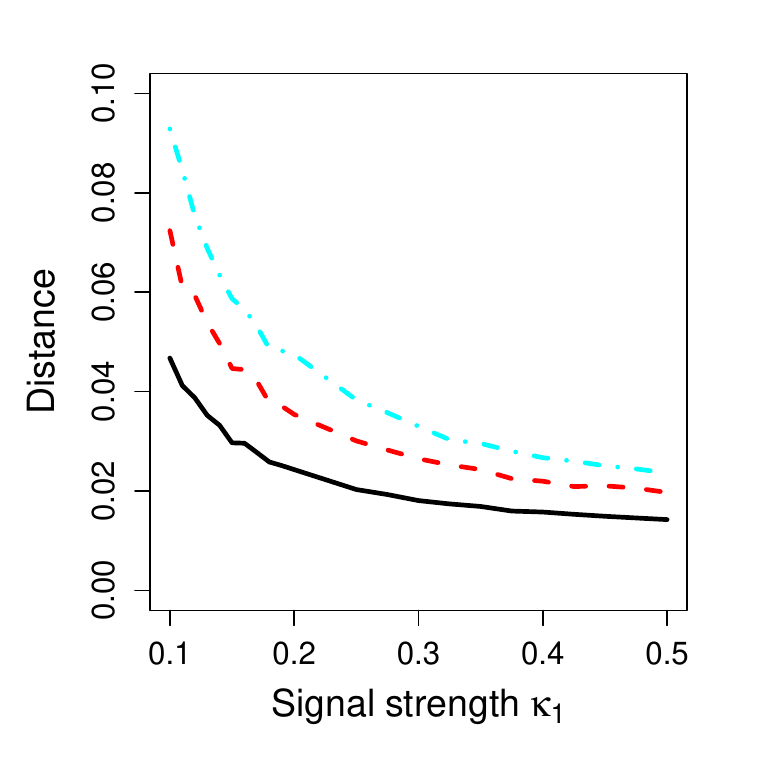}
  \label{fig:3-3-50}
  \vspace{-0.25cm}
  \caption{\scriptsize{$p=50$}}
  \vspace{-0.25cm}
  \includegraphics[width=5cm,height=3.5cm]{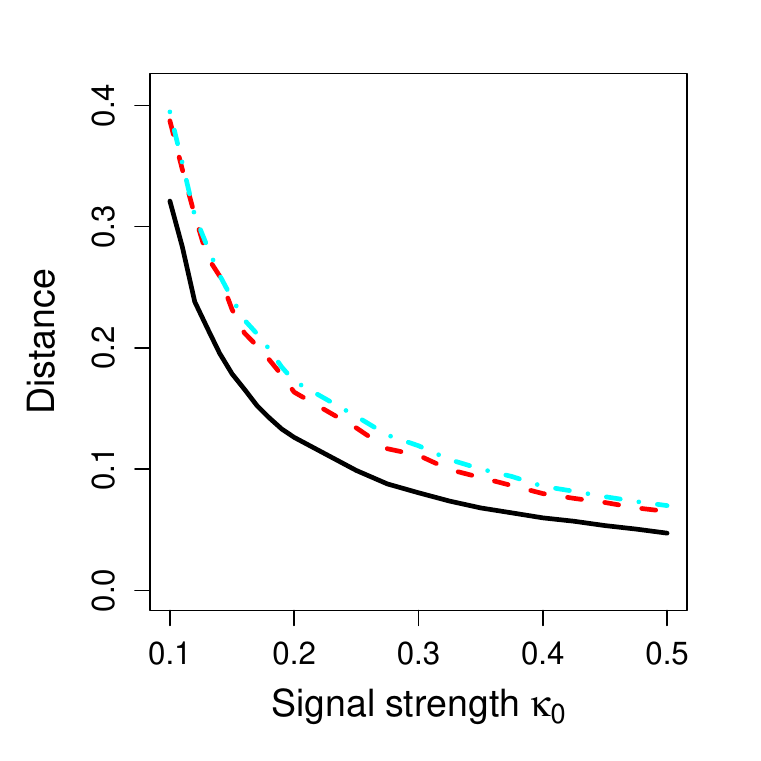}
  \label{fig:3-4-50}
\end{subfigure}
\begin{subfigure}{0.32\linewidth}
  \caption{\scriptsize{$p=100$}}
  \vspace{-0.25cm}
  \includegraphics[width=5cm,height=3.5cm]{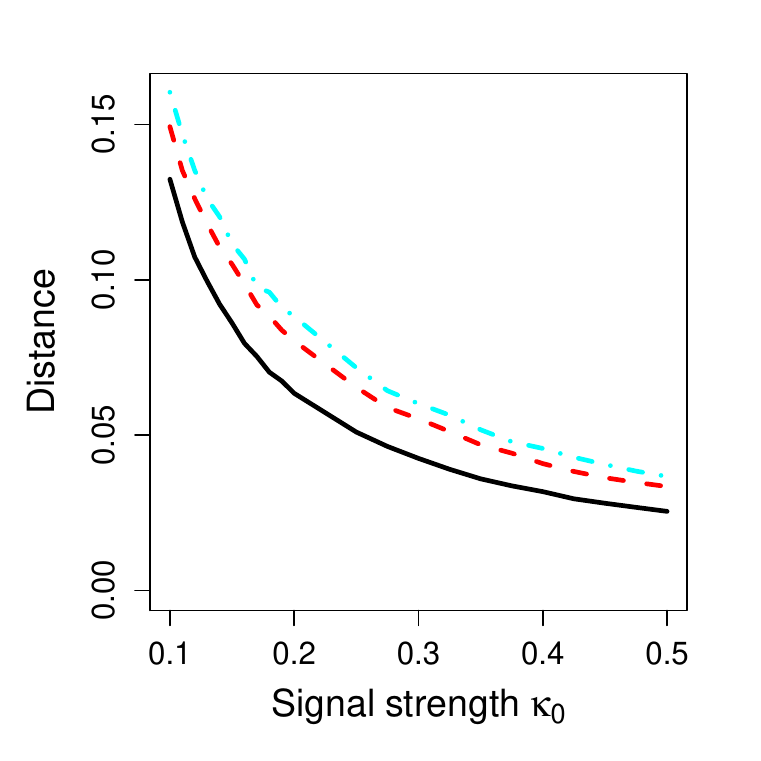}
  \label{fig:3-1-100}
  \vspace{-0.25cm}
  \caption{\scriptsize{$p=100$}} 
  \vspace{-0.25cm}
  \includegraphics[width=5cm,height=3.5cm]{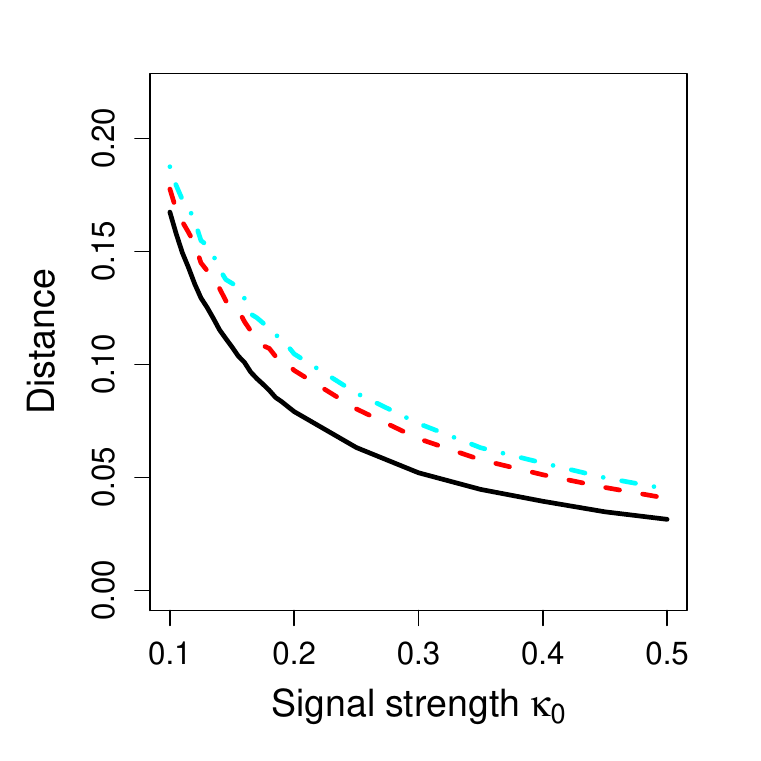}
  \label{fig:3-2-100}
  \vspace{-0.25cm}
  \caption{\scriptsize{$p=100$}}
  \vspace{-0.25cm}
  \includegraphics[width=5cm,height=3.5cm]{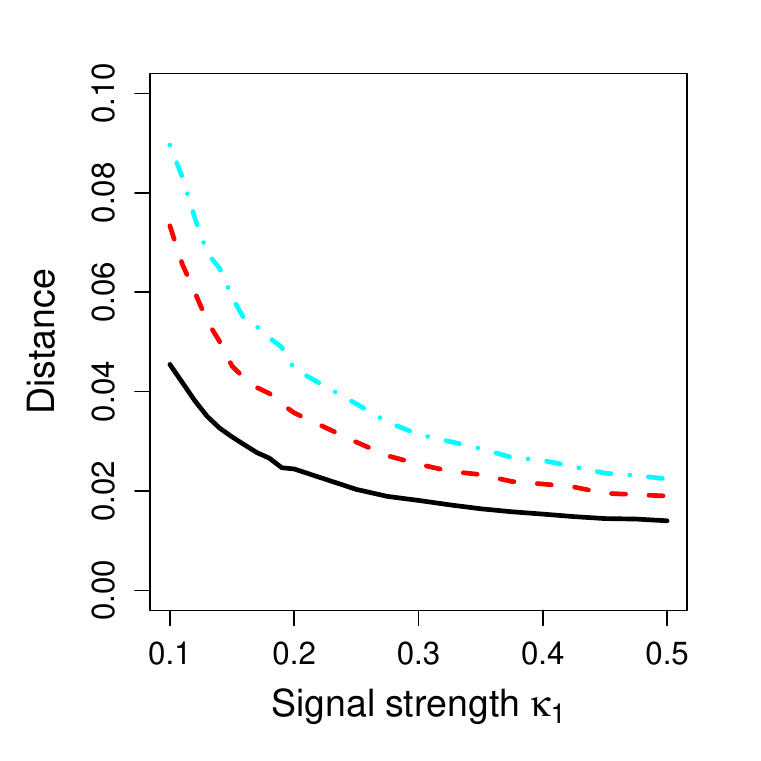}
  \label{fig:3-3-100}
  \vspace{-0.25cm}
  \caption{\scriptsize{$p=100$}}
  \vspace{-0.25cm}
  \includegraphics[width=5cm,height=3.5cm]{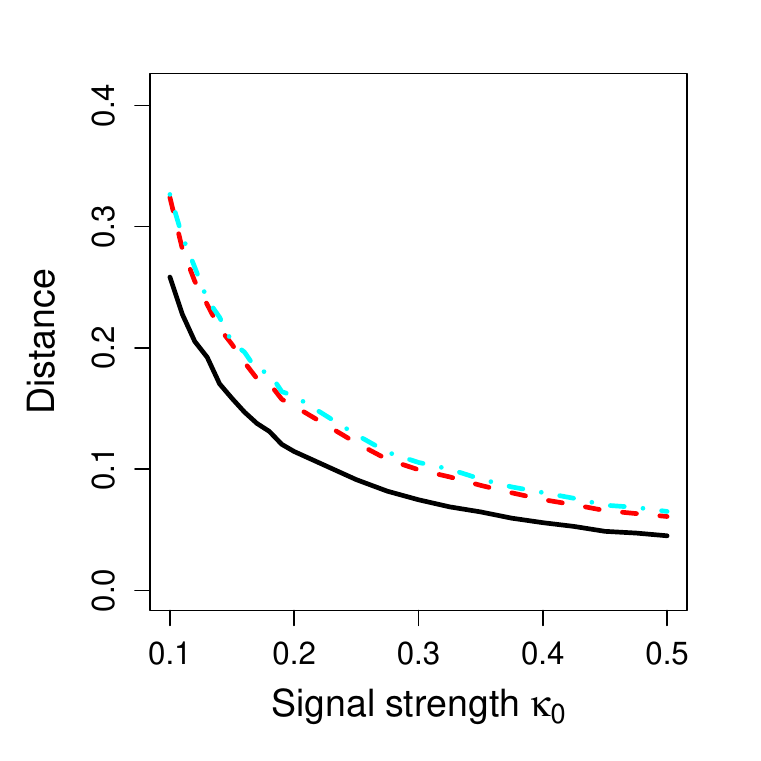}
  \label{fig:3-4-100}
\end{subfigure}
\begin{subfigure}{0.32\linewidth}
  \caption{\scriptsize{$p=200$}}
  \vspace{-0.25cm}
  \includegraphics[width=5cm,height=3.5cm]{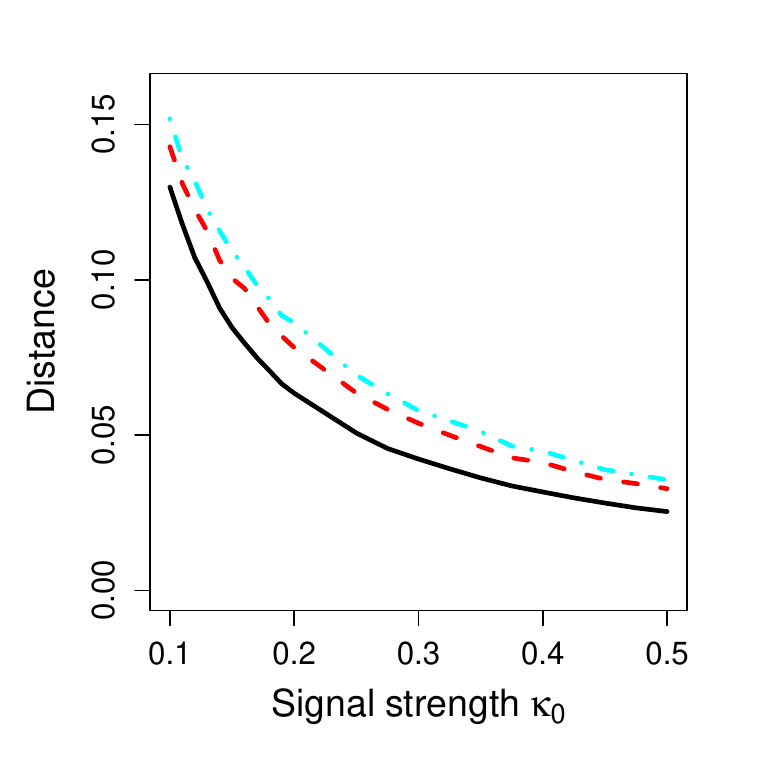}
  \label{fig:3-1-200}
  \vspace{-0.25cm}
  \caption{\scriptsize{$p=200$}}
  \vspace{-0.25cm}
  \includegraphics[width=5cm,height=3.5cm]{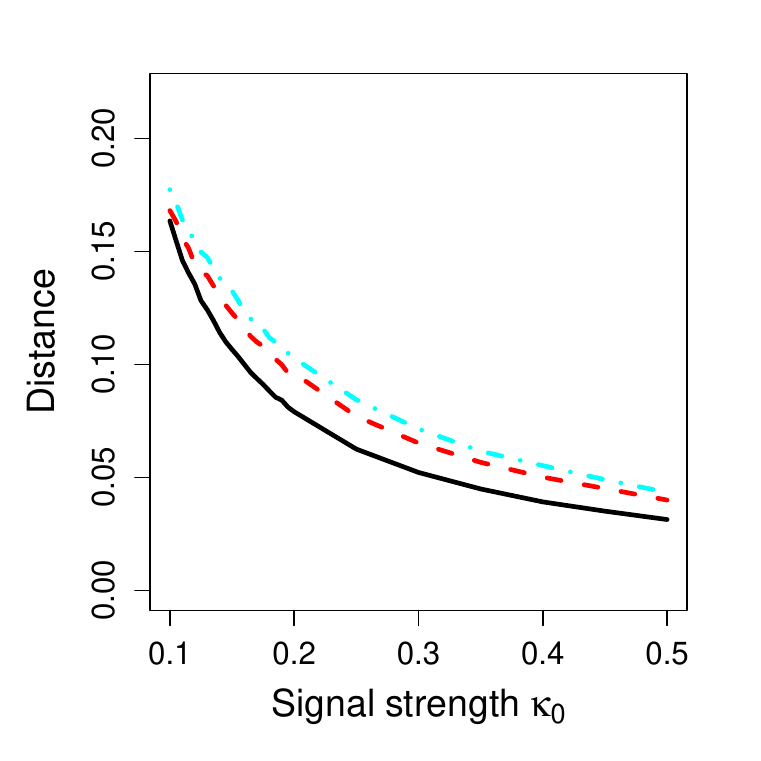}
  \label{fig:3-2-200}
  \vspace{-0.25cm}
  \caption{\scriptsize{$p=200$}}
  \vspace{-0.25cm}
  \includegraphics[width=5cm,height=3.5cm]{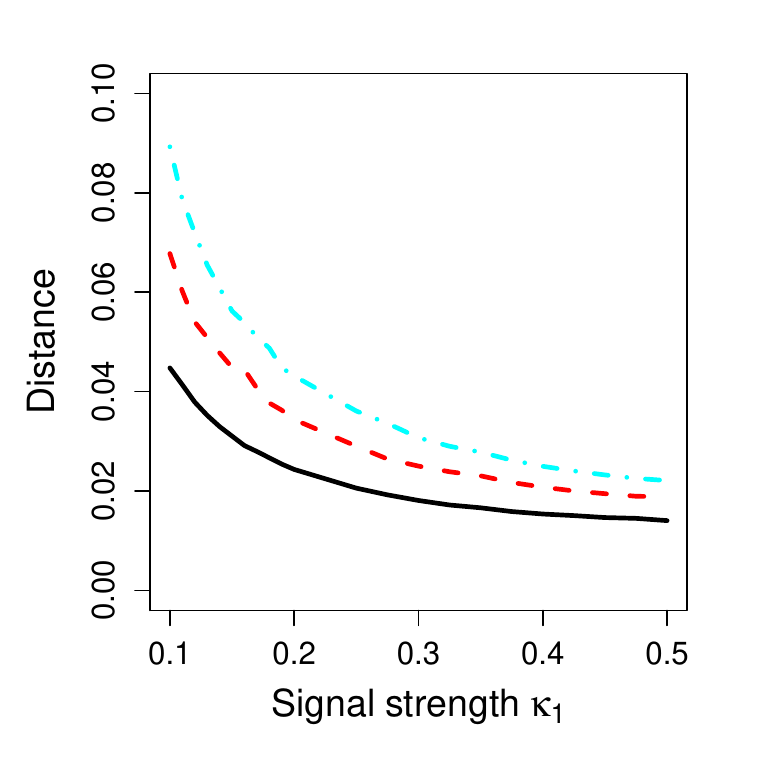}
  \label{fig:3-3-200}
  \vspace{-0.25cm}
  \caption{\scriptsize{$p=200$}}
  \vspace{-0.25cm}
  \includegraphics[width=5cm,height=3.5cm]{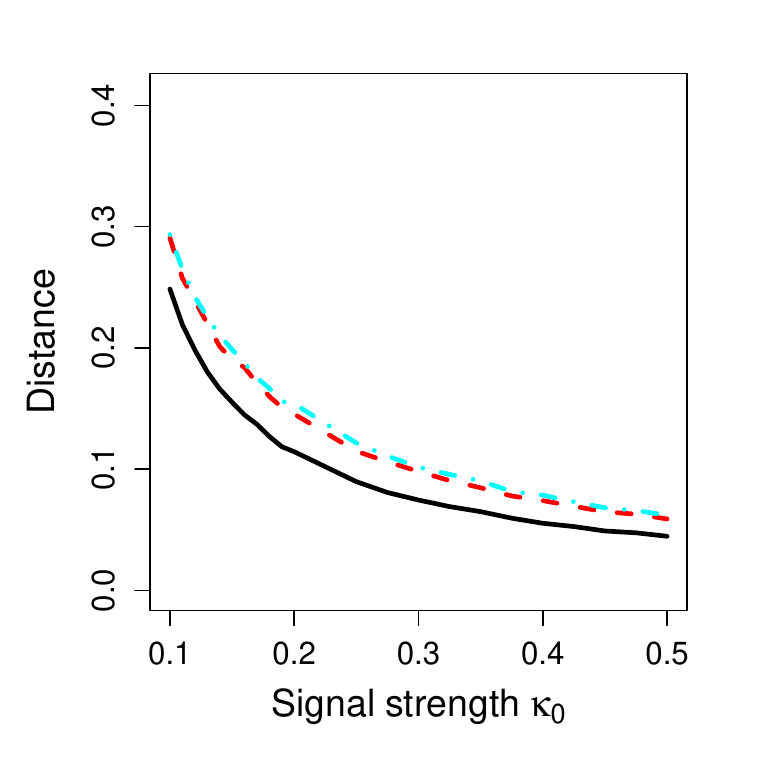}
  \label{fig:3-4-200}
\end{subfigure}
\centering
\begingroup
\linespread{1}
\caption{\label{fig2}{\it \footnotesize Scenario~1 (top row), Scenario~2 (2nd row), Scenario~3 (3rd row) and Scenario~4 (bottom row)} for $p=50,$ $100,$ $200$ when the factors are strong: Plots of average $\mathcal D \big(\mathcal C(\bA), \mathcal C(\widehat \bA)\big)$ against $\kappa_0$ or $\kappa_1$ for three methods based on $\widehat\bM$ (black solid), $\widehat\bM_1$ (red dashed), $\widehat\bM_2$ (cyan dash dotted).
}
\endgroup
\vspace{-0.2cm}
\end{figure}

Several conclusions can be drawn from Figures~\ref{fig1}--\ref{fig2} and \ref{fig3}--\ref{fig-cross}. 
First, our proposed method based on $\widehat\bM$ provides highly significant improvements in accuracies for the identification of $r$ and the recovery of $\cC(\bA)$ over the competing methods in all scenarios we consider.
The improvement of our method involving the weight matrix is larger for the heterogeneous case and even more substantial for the case when the factors are weak.
Between two competitors, the method based on $\widehat\bM_1$ outperforms that based on $\widehat\bM_2,$ providing empirical evidence for Remark~\ref{rmk.weight}.
Second, the estimation for $r$ and $\cC(\bA)$ performs better as the strength of factors increases (i.e., $\kappa_0$ or $\kappa_1$ increases or $\delta$ decreases), which is in line with our theoretical results in Section~\ref{sec:theory}. In Figure~\ref{fig1}, our method makes the sharpest progress as the factors become stronger, while the two comparison methods require much higher strength of factors to compete. 
Third, when the factors are strong, we observe the phenomenon of ``blessing of dimensionality'' in Figures~\ref{fig1}--\ref{fig2} in the sense that the estimation improves as $p$ increases from $50$ to $200.$ The improvement is due to the increase of the information from added components on the factors. Under Scenario~3 with an extra factor independent of the others, the information and noise on factors increase simultaneously as $p$ grows, and hence the estimation does not necessarily improve.
Fourth, under the weak factor setting, many entries of $\bA$ are quite close to zero. As $p$ enlarges, the enhanced information on the factors is accompanied with the increase of noise. While Figure~\ref{fig3} reveals that increasing $p$ does not necessarily lead to the improved estimation for $r$, we observe in Figure~\ref{fig4} that the estimation for $\cC(\bA)$ gets worse for larger values of $p$, complying with the result in Theorem~\ref{thm:eigvec}.

\subsection{Sparse case}\label{sim:sparse}
In this section, we conduct some simulations to evaluate the performance of the {functional-thresholding-and-sparse-PCA-based approach} (TSPCA) developed in Section~\ref{sec:ultrahigh} to estimate the sparse factor model. Two kinds of subspace sparsity constraints are imposed on $\bA.$ Specifically, for the row sparsity, we randomly select $80\%$ rows of $\bA$ to be zero vectors, while for the column sparsity, we randomly set $80\%$ elements within each column of $\bA$ as zeros. We focus on Scenario~1 with $\kappa_0=1,$ $n=100,$
and generate nonzero entries of $\bA$ from $\text{Uniform}[-\sqrt 3, \sqrt 3].$ 

Implementing TSPCA requires choosing the thresholding parameter $\eta_k$ in (\ref{th-est}) and the columnwise number of nonzero elements $C_{0}$ in (\ref{eq:sparse_1}). To select the optimal $\hat\eta_k$ for each $k=1, \dots, k_0,$ we implement a $G$-fold {cross-validation} approach \cite[]{cai2011adaptive}. To be specific, we first sequentially divide the set $\{1, \dots, n\}$ into $G$ blockwise groups $\cD_1, \dots, \cD_G$ of approximately equal size. We then treat
the $g$-th group as a validation set, compute the sample lag-$k$ autocovariance functions $\widehat\bSigma_{yy}^{(k),(g)}(u,v)$ and $\widehat\bSigma_{yy}^{(k),(-g)}(u,v)$ for $u,v \in \cU$ based on the validation set and the remaining $G-1$ groups, respectively, and repeat the above procedure $G$ times. We finally select $\widehat \eta_k$ by minimizing 
$R_k(\eta_k) = {G}^{-1}\sum_{g=1}^G \big\|{\mathcal T}_{\eta_k}\big(\widehat \bSigma^{(k),(-g)}_{yy}\big)- \widehat \bSigma^{(k),(g)}_{yy}\big\|_{\cS,\tF}^2.$
We adopt a similar {cross-validation} method to select the optimal $\widehat C_0.$
Given the $g$-th group as a validation set, we obtain the solution $\widetilde \bK^{(-g)}$ to the constrained optimization problem in (\ref{eq:sparse_1}) with $\tau=0$ based on the remaining $G-1$ groups. To solve this problem, we apply sparse PCA \cite[]{moghaddam2006generalized,mackey2009deflation} to $\widetilde \bM^{(-g)},$ which is formed by (\ref{eq:Mhat-thresh}) using ${\mathcal T}_{\widehat\eta_k}\big(\widehat \bSigma^{(k),(-g)}_{yy}\big)$ for $k=1, \dots, k_0.$
We also obtain $\widehat \bK^{(g)}$ by carrying out an eigenanalysis for $\widehat \bM^{(g)}$ based on the validation set. The above procedure is repeated $G$ times and $\widehat C_0$ is selected by minimizing 
$D(C_0) = {G}^{-1}\sum_{g=1}^G \cD\big(\cC(\widetilde \bK^{(-g)}), \cC(\widehat \bK^{(g)})\big).$ Although the time break created by each leave-out validation set possibly jeopardize the autocovariance structure on the remaining $G-1$ groups via $k_0$ mis-utilized lagged terms, its effect on the estimation of $\bSigma_{yy}^{(k)(-g)}$'s is negligible especially when $n$ is sufficiently large. Hence our proposed {cross-validation} approach does not place a practical constraint.
\begin{table}[ht]
\centering
\captionsetup{width=\textwidth}
\begingroup
\linespread{1}
\caption{\footnotesize The mean and standard error (in parentheses) of the distance between the estimated and true sparse factor loading spaces 
over $100$ simulation runs. All entries have been multiplied by $10^2$ for formatting reasons.}\label{tab.tspca}
\endgroup
\vspace{0.15cm}
\resizebox{5.7in}{!}{
    \scriptsize
    \begin{tabular}{cccccccc}
    \hline
    Sparsity &$p$ & PCA & TSPCA & Sparsity &$p$ & PCA & TSPCA \\
    \hline
    &$100$ & $3.86 (0.53)$ & $1.71 (0.31)$&           &$100$ & $4.27 (0.83)$ & $3.55 (1.09)$\\

    Row &$200$ & $3.79 (0.56)$ & $1.74 (0.26)$&Column &$200$ & $4.25 (0.65)$ & $3.69 (0.59)$\\
    &$400$ & $3.73 (0.66)$ & $1.76 (0.29)$&           &$400$ & $4.03 (0.74)$ & $3.17(0.88)$\\
    \hline
    \end{tabular}
}
\end{table}

We compare TSPCA with the ordinary PCA-based approach by performing an eigenanlysis on $\widehat \bM$. The estimation quality is measured in terms of the distance between the estimated and true sparse factor loading spaces using the correct $r.$ We report the numerical summaries in Table~\ref{tab.tspca} for $p=100, 200, 400.$ It is obvious to see that TSPCA 
uniformly improves the PCA-based estimation, demonstrating the advantage of our regularized estimation procedure to fit sparse factor models. The improvement for the row sparsity case is more significant than that for the column sparsity case. Compared with the functional sparsity patterns in Lemma~\ref{lem:sparse} under the column sparsity constraint, it follows from the decomposition of $\bSigma_{yy}^{(k)}$ in (\ref{eq:cov-relation}) that the row sparsity in $\bA$ can lead to functional sparser $\bSigma_{yy}^{(k)}$'s. In this sense, TSPCA benefits more from the row sparsity structure, thus resulting in enhanced improvement.

\section{Real Data Analysis}
\label{sec:data}
\subsection{U.K. Temperature Data}
\label{sec:temp.data}
Our first data set, which is available at \url{https://www.metoffice.gov.uk/research/climate/maps-and-data/historic-station-data}, consists of monthly average temperature collected at $p=22$ measuring stations in the U.K. from 1959 to 2020 ($n=62$). Let $Y_{tj}(u_k)$ ($t=1, \dots, n, j=1, \dots,p, k=1, \dots, 12$) denotes the average temperature during month $u_k=k \in \cU=[1,12]$ of year $1958+t$ at the $j$-th measuring station. The observed temperature curves are smoothed based on $10$-dimensional Fourier basis to capture the periodic structures over the annual cycle. The smoothed curve series exhibits very small autocorrelations beyond $k=2,$ so we use $k_0=2$ in computing $\widehat \bM.$ 
The ratio-based estimator suggests $\hat r=1,$ though we illustrate further results using $\hat r=3.$ We also conduct multiple testing of univariate functional white noise using \cite{zhang2016white}'s method on component residual curve series after taking out the dynamical component driven by $\hat r$ factors. Within multiple testing framework, we apply the Benjamini--Hochberg correction in adjusting p-values. With the target false discovery rate level 10\%, we obtain 0 (or 8) out of 22 nulls are rejected when $\hat r=3$ (or $\hat r=1$), suggesting no overwhelming evidence against the component white noise assumption for the idiosyncratics with $\hat r=3.$

Figure~\ref{fig:temp} displays spatial heatmaps of estimated factor loading matrix and the rotated matrix using the varimax procedure to maximize the sum of the variances of the squared loadings. Some interesting patterns can be observed from those heatmaps. Compared with the original results in Figure~\ref{fig:temp}, the varimax rotation brings the factor loading matrix closer to a ``simple structure", where (i) each component in $\bY_t(\cdot)$ has a high loading on one specific factor but near-zero loadings on other factors and (ii) each factor has high impact on a few components of $\bY_t(\cdot)$ with high loadings on this factor while the remaining variables have near-zero loadings on this factor. The varimax rotation leads to larger color variations in heatmaps for better interpretation.
For example, it is apparent that the first factor via varimax rotation influences the dynamics in the southeast, while the original factor has uniform impact in all locations. Hence we focus on the interpretation of remaining factors after varimax rotation. Specifically, the second factor mainly impacts the dynamics of the northern region. The third factor can roughly be viewed as the main driving force for the dynamics in the middle north.
\begin{figure}[!hbt]
	\centering
	\begin{subfigure}{0.32\linewidth}
	  \caption{\scriptsize{1st factor (original)}}
	  \vspace{-0.9cm}
	  \includegraphics[width = 4.5cm,height = 6cm]{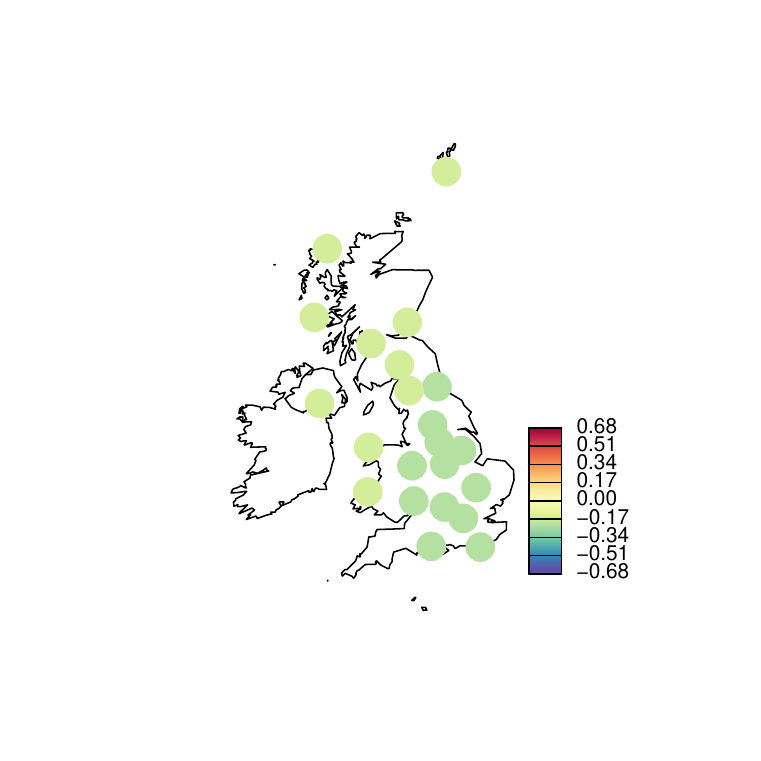}
	  \vspace{-1.5cm}

	  \caption{\scriptsize{1st factor (rotated)}}
	  \vspace{-0.9cm}
	  \includegraphics[width = 4.5cm,height = 6cm]{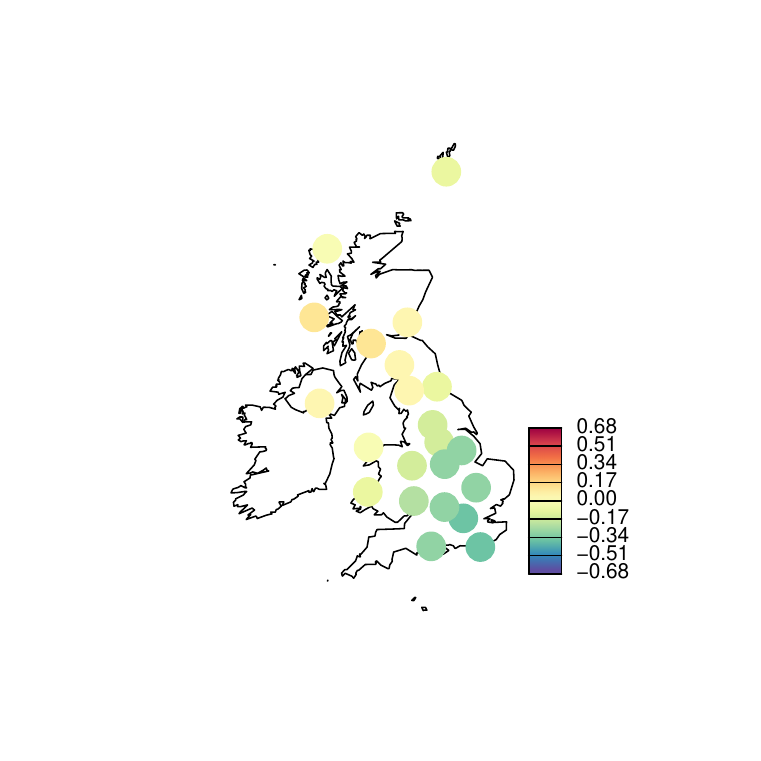}
	\end{subfigure}\hspace{-0.15cm}	
	\begin{subfigure}{0.32\linewidth}
		\caption{\scriptsize{2nd factor (original)}}
		\vspace{-0.9cm}
		\includegraphics[width = 4.5cm,height = 6cm]{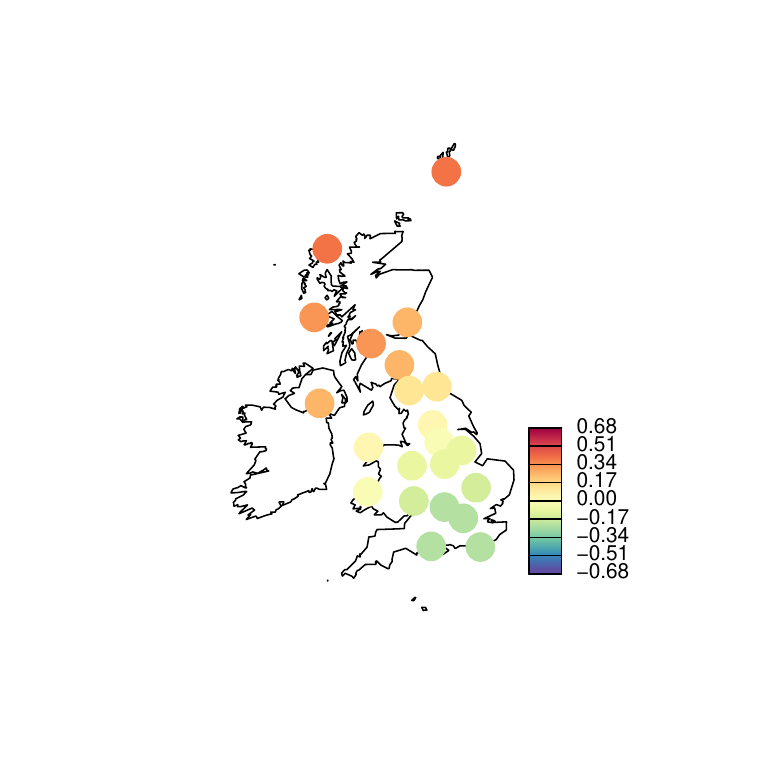}
	  \vspace{-1.5cm}
  
		\caption{\scriptsize{2nd factor (rotated)}}
		\vspace{-0.9cm}
		\includegraphics[width = 4.5cm,height = 6cm]{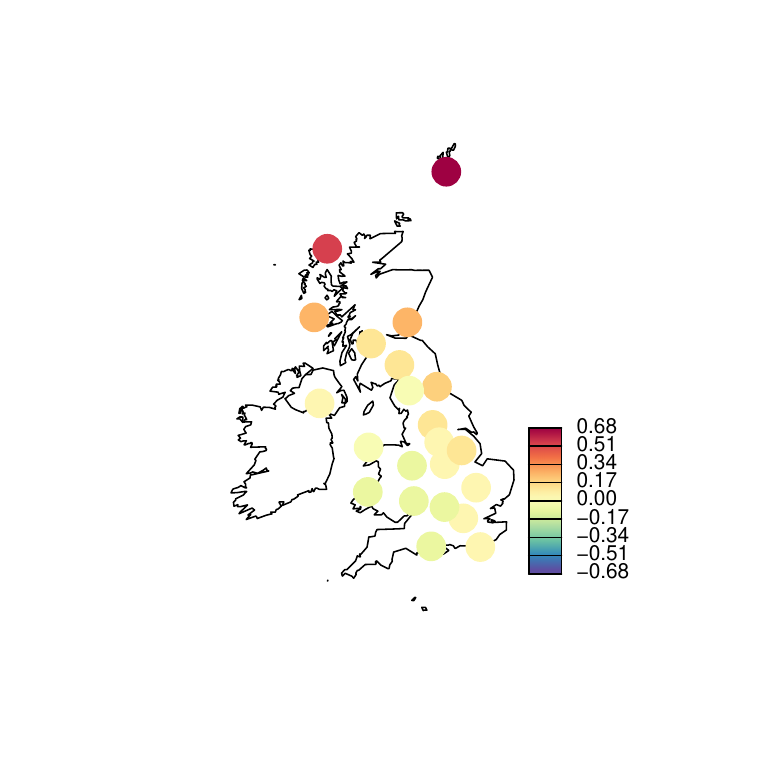}
	\end{subfigure}\hspace{-0.15cm}	
	\begin{subfigure}{0.32\linewidth}
		\caption{\scriptsize{3rd factor (original)}}
		\vspace{-0.9cm}
		\includegraphics[width = 4.5cm,height = 6cm]{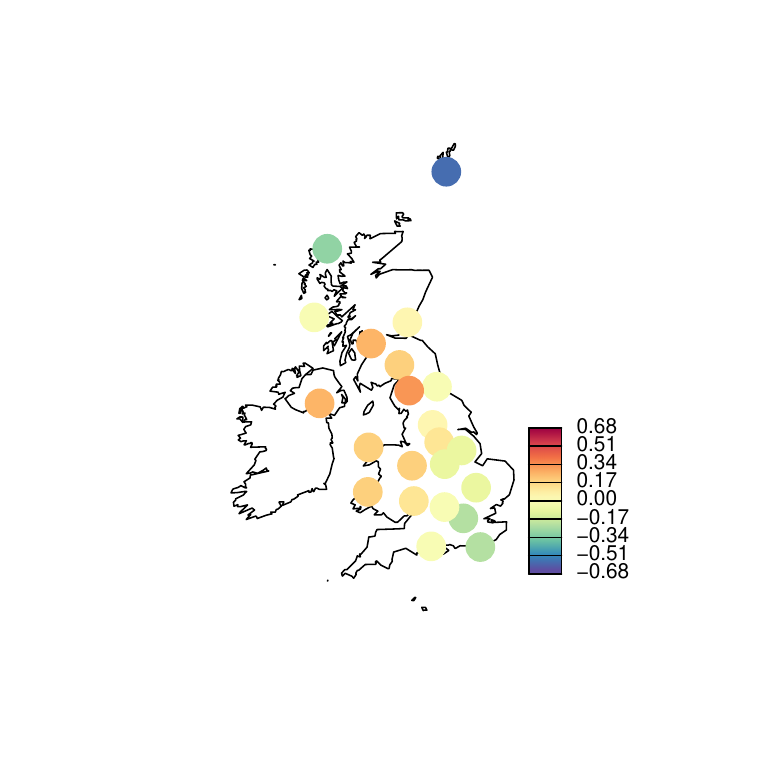}
	  \vspace{-1.5cm}
  
		\caption{\scriptsize{3rd factor (rotated)}}
		\vspace{-0.9cm}
		\includegraphics[width = 4.5cm,height = 6cm]{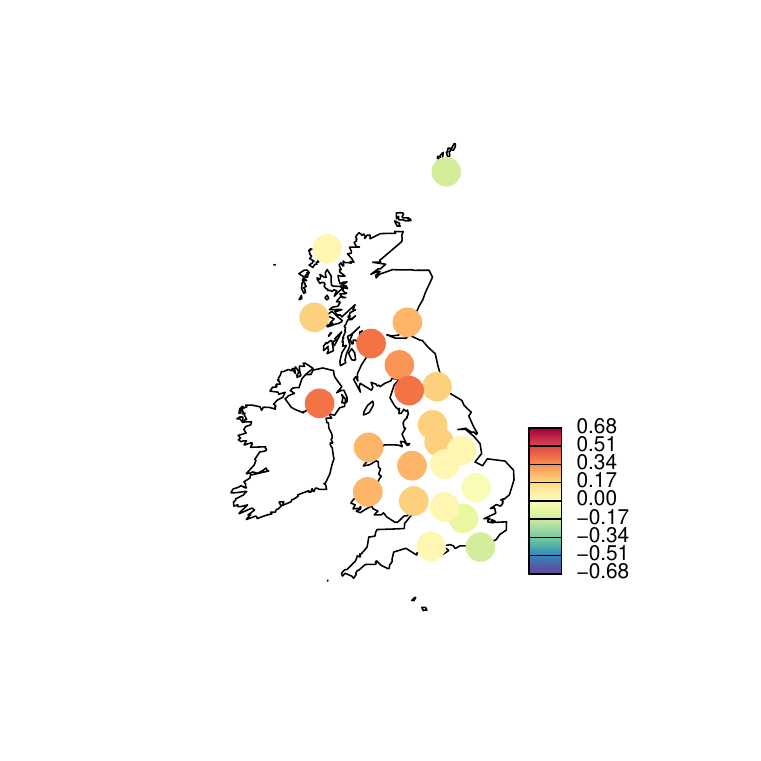}
	\end{subfigure}	
        \vspace{-1.5cm}
	\centering
        \begingroup
        \linespread{1}
	\caption{\label{fig:temp}{\it \footnotesize
	Spatial heatmaps based on estimated factor loading matrix (top row) and varimax-rotated loading matrix (bottom row) of 22 U.K. locations.}}
        \endgroup
\end{figure}

\subsection{Japanese Mortality Data}
\label{sec:mort.data}

Our second data set contains age-specific and gender-specific mortality rates for $p=47$ Japanese prefectures from 1975 to 2017 ($n=43$). This data set was also analyzed in \cite{gao2019high}. Due to those sparse observations at old ages, we focus on data for ages below 96. We apply a log transformation to mortality rates and denote by $Y_{tj}(u_k)$ ($t=1, \dots, n, j=1, \dots, p, k=1, \dots, 96$) the log mortality rate of people aged $u_k = k-1 \in \cU=[0,95]$ living in the $j$-th prefecture during the year $1974+t.$ We then perform smoothing for observed curves and replace the missing values via smoothing splines. The estimation of model (\ref{eq:model}) is done by choosing $k_0=2$ 
and we use $\hat r=2$ in subsequent analysis. To enhance interpretability for identified factors under a high-dimensional $p>n$ regime, we also implement TSPCA to estimate the sparse factor loading matrix.

Figure~\ref{fig:female} and Figure~\ref{fig:male} in 
the Supplementary Material show spatial heatmaps of varimax-rotated loading matrix and sparse loading matrices with different sparsity levels for Japanese females and males, respectively. 
For TSPCA, we implement the cross-validation method to select the optimal $\hat\eta_k$'s and set the columnwise sparsity in estimated loading matrix to 27/47 or 32/47 for better visualization of the results. A more systematic method to determine the sparsity level, e.g., via a significance testing, needs to be developed.
Compared with the original results, both varimax rotation and TSPCA lead to enhanced interpretability in the sense of recovering the factor loading matrix with ``simple structure". Among two competitors, TSPCA tends to reduce some near-zero loadings to exactly zero, thus providing a parsimonious model with more interpretable results than varimax rotation. In particular, we can observe clear regional patterns from those heatmaps for TSPCA, where each of two factors influences the dynamics of some complementary regions of Japan in all settings. Take the results under the sparsity 27/47 in Figure~\ref{fig:female} as an example, the first factor serves as the main driving force in regions of Hokkaido, Kanto, Chubu and Kansal, while Tohoku and Shikoku are heavily loaded regions on the second factor.
\begin{figure}[!htb]

	\centering
	\begin{subfigure}{0.32\linewidth}
	  \caption{\scriptsize{1st factor (rotated)}}
	  \vspace{-0.5cm}
	  \includegraphics[width=6cm,height=5cm]{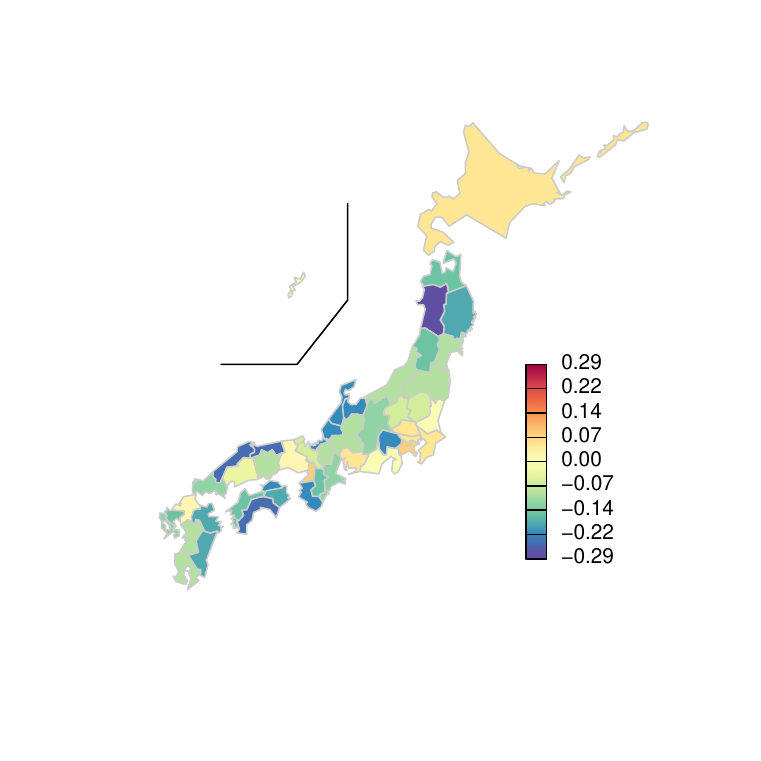}
	  \vspace{-2cm}

	  \caption{\scriptsize{2nd factor (rotated)}}
	  \vspace{-0.5cm}
	  \includegraphics[width=6cm,height=5cm]{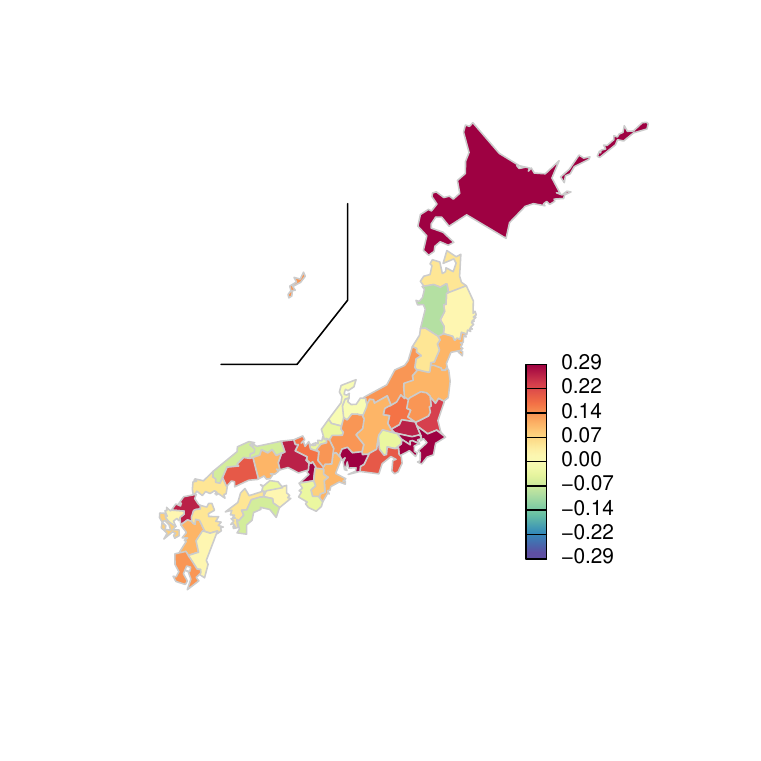}
	\end{subfigure}\hspace{-0.15cm}	
	\begin{subfigure}{0.32\linewidth}
	  \caption{\scriptsize{1st factor (27 zeros)}}
	  \vspace{-0.5cm}
	  \includegraphics[width=6cm,height=5cm]{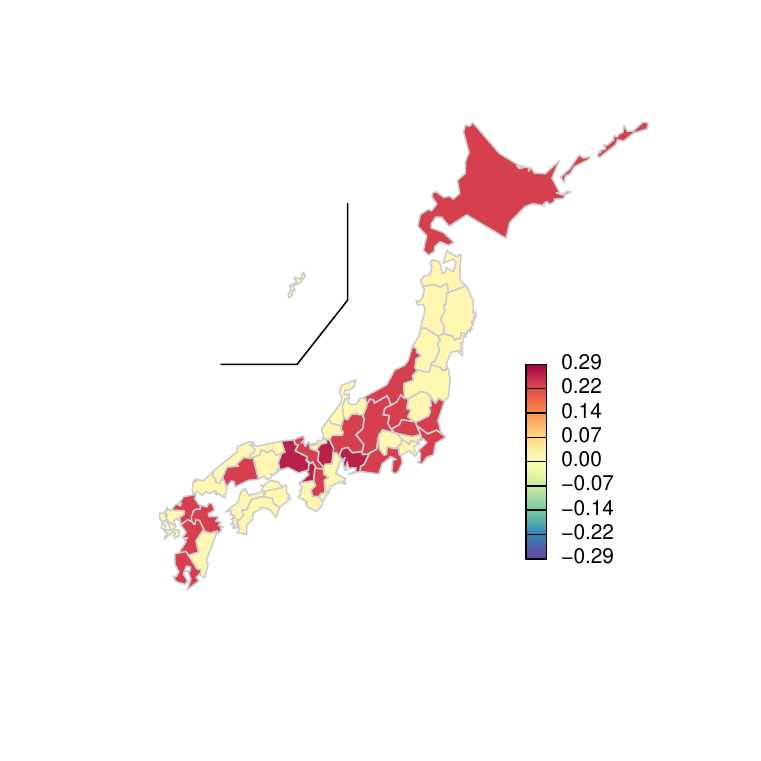}
	  \vspace{-2cm}

	  \caption{\scriptsize{2nd factor (27 zeros)}}
	  \vspace{-0.5cm}
	  \includegraphics[width=6cm,height=5cm]{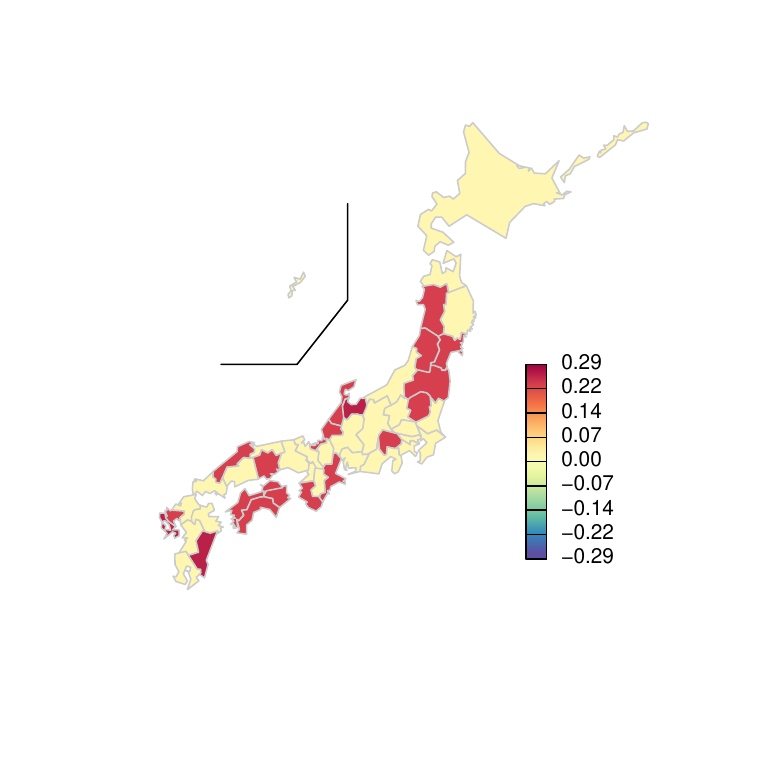}
	\end{subfigure}\hspace{-0.15cm}
	\begin{subfigure}{0.32\linewidth}
		\caption{\scriptsize{1st factor (32 zeros)}}
	  \vspace{-0.5cm}
	  \includegraphics[width=6cm,height=5cm]{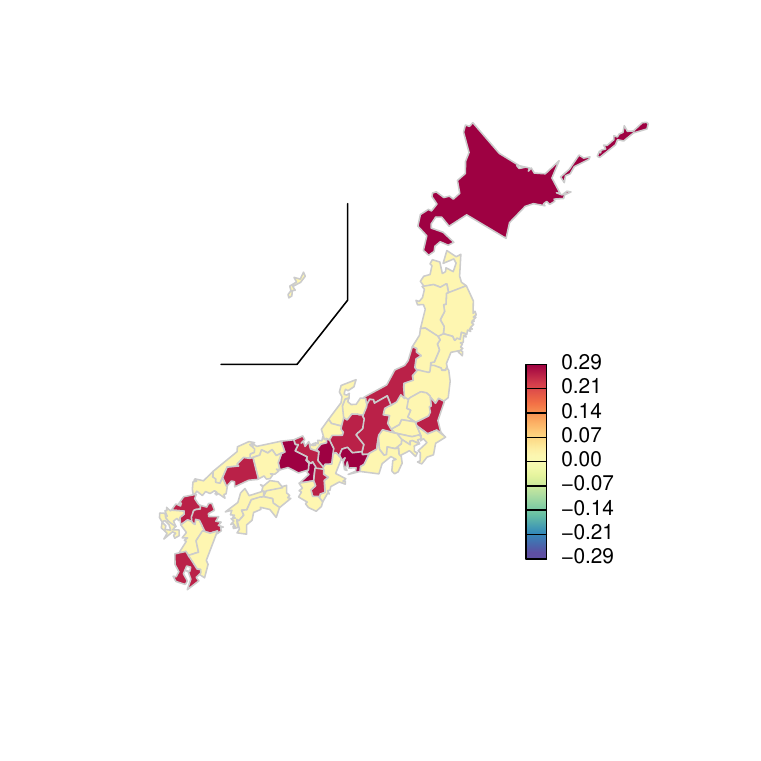}
	  \vspace{-2cm}

	  \caption{\scriptsize{2nd factor (32 zeros)}}
	  \vspace{-0.5cm}
	  \includegraphics[width=6cm,height=5cm]{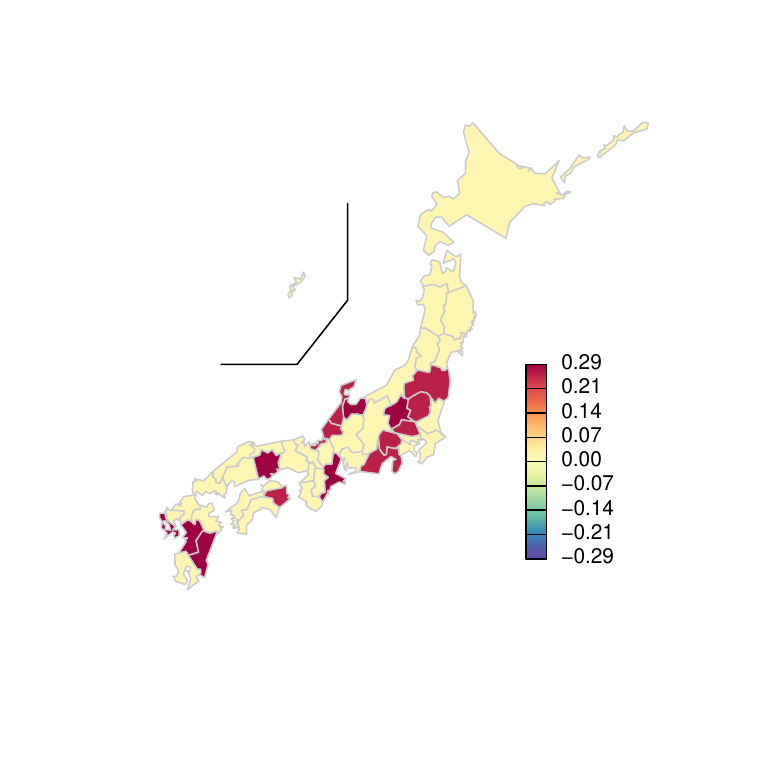}
	\end{subfigure}
 	  \vspace{-1.25cm}
	\centering
        \begingroup
        \linespread{1}
	\caption{\label{fig:female}{\it  \footnotesize Spatial heatmaps based on varimax-rotated loadings (left column) and sparse loadings with 27 zeros (middle column) and 32 zeros (right column) of 47 prefectures on two factors for Japanese females.
	}}
        \endgroup
\end{figure}

To further illustrate the developed methodology, we set upon the task of predicting high-dimensional functional time series $\bY_t(\cdot).$ Specifically, we incorporate the functional factor model framework into the $h$-step-ahead prediction (denoted as FFM) consisting of three steps below.
\begin{enumerate}
    \item[1.] Apply our proposal to estimate model (\ref{eq:model}) based on past observations $\{\bY_t(\cdot)\}_{t=1}^T,$ thus obtaining the estimated factor loading matrix $\widehat \bA$ and number of functional factors $\hat r.$
    \item[2.]  Compute $\hat r$ estimated factors by $\widehat \bX_t(\cdot) = \widehat \bA^\T \bY_t(\cdot).$ 
    For each $k \in \{1, \dots, \hat r\},$ predict $\widehat X_{(t+h)k}(\cdot)$ based on past fitted values for the $k$-th common factor, $\{\widehat X_{tk}(\cdot)\}_{t=1}^T.$ 
    \item[3.]  The $h$-step head predict for $\bY_{T+h}(\cdot)$ is $\widehat \bY_{T+h}(\cdot) = \widehat \bA\widehat \bX_{T+h}(\cdot).$
\end{enumerate}
The second step is on the prediction of univariate functional time series. For each common factor, we obtain the $h$-step-ahead prediction based on the best fitted functional ARMA model \cite[]{klepsch2017} according to the BIC criterion. We then develop a sparse version of FFM (denoted as SFFM) by performing TSPCA in the first step to estimate sparse $\widehat\bA.$ For comparison, we implement an alternative factor model based prediction method of \cite{tavakoli2023b} (denoted as TNH) by firstly estimating their factor model (see (\ref{ffm1}) in Section~\ref{sec:dis}), then making use of the corresponding best fitted ARMA models to predict the scalar factor time series $\widetilde \bX_{t}$ and the functional idiosyncratic component $\widetilde \bvarepsilon_t(\cdot),$ and finally predicting $\bY_t(\cdot)$ based on (\ref{ffm1}). We also adopt the method of \cite{gao2019high} (denoted as GSY) to predict $p$ component series of $\bY_t(\cdot)$ jointly, as well as the univariate prediction method of \cite{aue2015prediction} (denoted as ANH) to predict each $Y_{tj}(\cdot)$ separately. 

To evaluate the predictive accuracy, we use the expanding window approach. The data is divided into a training set and a test set consisting of the first $n_1$ and the last $n_2$ observations, respectively.
For any integer $h>0,$ we implement each fitting method on the training set, obtain $h$-step-ahead prediction on the test data based on the fitted model, increase the training size by one and repeat this procedure $n_2-h+1$ times to compute the $h$-step-ahead {mean absolute prediction error} (MAPE) and {mean squared prediction error} (MSPE) by
$         \text{MAPE}(h) = \{p\times(n_2-h+1)\times 96 \}^{-1}\sum_{j=1}^p\sum_{t=n_1+h}^{n}\sum_{k=1}^{96} \big|\widehat Y_{tj}(u_k) - Y_{tj}(u_k)\big|$ and
$\text{MSPE}(h) = \{p\times(n_2-h+1)\times 96 \}^{-1}\sum_{j=1}^p\sum_{t=n_1+h}^{n}\sum_{k=1}^{96} \big\{\widehat Y_{tj}(u_k) - Y_{tj}(u_k)\big\}^2,$ respectively. 
The resulting MAPE and MSPE values are summarized in Table~\ref{tab.forecast}. It is obvious that FFM, SFFM and TNH significantly outperform GSY and ANH in all settings. Among the three winners, our proposed FFM provides the highest predictive accuracies for all cases, and, at the same time, SFFM is slightly inferior possibly due to the bias introduced by the enforced sparsity. It is worth mentioning that our additional experiments show that the predictive performance of FFM and SFFM can be further improved as $\hat r$ increases beyond $2,$ whereas the best predictive performance of TNH is already attained with only one scalar factor as presented in Table~\ref{tab.forecast}.


\begin{table}[!htbp]
	\centering
        \begingroup
        \linespread{1}
	\caption{\footnotesize MAPEs and MSPEs of FFM, SFFM and three competing methods for Japanese male and female mortality rates. All entries have been multiplied by $10$ for formatting reasons.}\label{tab.forecast}
        \endgroup
	\vspace{0.15cm}
	\resizebox{5.7in}{!}{
	\footnotesize
	\begin{tabular}{cc|ccccc|ccccc}
		\hline
		& &\multicolumn{5}{c|}{MAPE}&\multicolumn{5}{c}{MSPE}\\
		\hline
	    %

        \multirow{4}{*}{Male} & $h$ & FFM & SFFM & TNH & GSY & ANH & FFM & SFFM & TNH & GSY & ANH\\
		\multirow{4}{*}{} & 1 & 1.33 & 1.36 & 1.33 & 1.75 & 1.42 & 0.48 & 0.49 & 0.49 & 0.84 & 0.59\\
        \multirow{4}{*}{} & 2 & 1.37 & 1.43 & 1.37 & 1.88 & 1.47 & 0.51 & 0.52 & 0.52 & 0.92 & 0.63\\
        \multirow{4}{*}{} & 3 & 1.43 & 1.49 & 1.44 & 1.87 & 1.53 & 0.54 & 0.55 & 0.55 & 0.86 & 0.67\\\hline
	    
	    \multirow{4}{*}{Female} & $h$ & FFM & SFFM & TNH & GSY & ANH & FFM & SFFM & TNH & GSY &ANH\\
		\multirow{4}{*}{} & 1 & 1.40 & 1.44 & 1.41 & 1.77 & 1.52 & 0.61 & 0.64 & 0.68 & 0.87 & 0.82\\
        \multirow{4}{*}{} & 2 & 1.45 & 1.51 & 1.45 & 1.84 & 1.56 & 0.66 & 0.69 & 0.68 & 0.94 & 0.99\\
        \multirow{4}{*}{} & 3 & 1.50 & 1.56 & 1.50 & 1.77 & 1.66 & 0.71 & 0.74 & 0.74 & 0.89 & 1.14\\\hline
		\end{tabular}
	}
\end{table}

\section{Discussion}
\label{sec:dis}
It is noteworthy that \cite{hallin2023a} develops an alternative functional factor model with scalar factors and functional loadings. Their methodology allows different components of multivariate functional time series to reside in different Hilbert spaces or a mix of functional and scalar time series, which is not possible with our approach. Nevertheless, for all applications mentioned in Section~\ref{sec:intro} and explored in Section~\ref{sec:data}, the corresponding components of multivariate functional time series all lie in the same Hilbert space. In such a common practical setting, their model reduces to
\begin{equation}
    \label{ffm1}
    \bY_t(\cdot) = \widetilde \bA(\cdot) \widetilde \bX_t + \widetilde \bvarepsilon_t(\cdot),~~t=1, \dots, n,
\end{equation}
where $\widetilde\bX_t = (\widetilde X_{t1}, \dots, \widetilde X_{t\tilde r})^{\T}$ is a set of latent factor time series and $\widetilde\bA(\cdot)=\{\widetilde A_{jl}(\cdot)\}_{p \times \tilde r}$ is the unknown functional factor loading matrix. 
Both (\ref{eq:model}) and (\ref{ffm1}) generate useful factor models for high-dimensional functional time series $\bY_t(\cdot),$ but are designed to tackle rather different situations. 
Note one crucial question in functional time series modelling is how to characterize the functional and time series structures. Our factor model (\ref{eq:model}) with static factor loadings assumes that both structures are inherited from $r$ common functional time series factors $\bX_t(\cdot)$ with reduced dimension from $p \times \infty$ to $r \times \infty$ before subsequent analysis. By comparison, factor model (\ref{ffm1}) treats $\tilde r$ common factors in $\widetilde \bX_t$ as $\tilde r$-dimensional vector time series, which reduces the dimension from $p \times \infty$ to a much lower value $\tilde r,$ while the infinite-dimensional structure is maintained  in the functional factor loading matrix $\widetilde \bA(\cdot).$

We next discuss a possible way to apply our method to estimate model (\ref{ffm1}). 
Given an orthonormal basis $\{\phi_i(\cdot)\}_{i=1}^{\infty},$
we expand the functional objects in (\ref{ffm1}) by $Y_{tj}(\cdot)=\sum_{i=1}^{\infty} \xi_{tji} \phi_{i}(\cdot),$
$\widetilde A_{jl}(\cdot)=\sum_{i=1}^{\infty} a_{jli} \phi_{i}(\cdot),$
$\widetilde \varepsilon_{tj}(\cdot)=\sum_{i=1}^{\infty} \epsilon_{tji}\phi_{i}(\cdot)$
and hence rewrite (\ref{ffm1}) as
\begin{equation}
    \label{ffm2}
    \xi_{tji} = \sum_{l=1}^{\tilde r} a_{jli} \widetilde X_{tl} + \epsilon_{tji},~~i=1, \dots, \infty.
\end{equation}
To simplify notation, we assume the same truncated dimension $M$ across $j$ and stack the basis coefficients $\{\xi_{tjl}\}_{1 \leq j \leq p, 1 \leq l \leq L},$  $\{a_{jkl}\}_{1 \leq j \leq p, 1 \leq l \leq L}$ and $\{\epsilon_{tjl}\}_{1 \leq j \leq p, 1 \leq l \leq L}$ 
$\xi_{tji},$ $a_{jli}$ and $\epsilon_{tji}$ for $j=1, \dots, p$ and $i=1, \dots, M$ to $pM$-dimensional vectors $\bxi_t, \ba_{l}$ and $\bepsilon_t,$ respectively. As a result, (\ref{ffm2}) can be equivalently represented as the following factor model for $pM$-dimensional vector time series $\bxi_1, \dots, \bxi_n$:
\begin{equation}
    \label{ffm3}
    \bxi_t = \bLambda \widetilde \bX_{t} + \widetilde \bepsilon_{t},
\end{equation}
where the factor loading matrix $\bLambda=(\ba_1, \dots, \ba_{\tilde r}) \in {\mathbb R}^{pM \times \tilde r}$
and the noise vector $\widetilde \bepsilon_{t} \in {\mathbb R}^{pM}$ is decomposed as the sum of $\bepsilon_{t}$ and $\bepsilon'_t$ formed by truncation errors. Then we can integrate our suggested weight matrix  into the estimation of $\tilde r$ and ${\cal C}(\bLambda)$  for the factor model (\ref{ffm3}). Hence the space spanned by columns of $\widetilde \bA(\cdot)$ can be recovered accordingly. 


We finally identify three important directions for future research.
The first direction considers the extension of (\ref{eq:model}) to other forms of functional factor models. One possible extension admits the representation
$\bY_t(u)=\bA(u)\bX_t(u)+ \bvarepsilon_t(u),u \in \cU,$
which treats both the latent factor time series $\{\bX_t(\cdot)\}_{t=1}^n$ and the factor loading matrix $\bA(\cdot)=\{A_{jl}(\cdot)\}_{p \times r}$ as functional objects in the sense of the concurrent functional model \cite[]{Bramsay1}. On the other hand, inspired from the functional linear regression with functional response, we can consider another more generalized functional factor model in the form of
$\bY_t(u) = \int_{\cU} \bA(u,v)\bX_t(v) {\rm d}v+ \bvarepsilon_t(u),u \in \cU,$ where $\bA(\cdot,\cdot)=\{A_{jl}(\cdot, \cdot)\}_{p \times r}$ is the operator-valued factor loading matrix with $p \times r$ entries of bivariate functions. It is interesting to develop estimation procedures to fit the above two models. However, compared to the fittings for models~\eqref{eq:model} and (\ref{ffm1}), this would pose more complicated challenges that require further investigation.

Second, our estimation procedure is developed by assuming that the idiosyncratic component is a white noise sequence. It is also interesting to estimate the factor model~\eqref{eq:model} following the framework in \cite{bai2002determining} and \cite{fan2013}, where the idiosyncratic noise is allowed to exhibit serial correlations.
Note the covariance function $\bSigma_{yy}^{(0)}(u,v)$ or its integral over $\cU\times\cU$ can not be used directly in the decomposition since they are not nonnegative definite and symmetric.
Under the column-orthogonality of $\bA$ and the uncorrelatedness between $\bX_t(\cdot)$ and $\bvarepsilon_t(\cdot),$ we can slightly modify our scheme and consider decomposing the nonnegative definite and symmetric matrix $\int_{\cU}\int_{\cU} \bSigma^{(0)}_{yy}(u,v) \bSigma^{(0)}_{yy}(u,v)^\T\,\mathrm d u \mathrm d v$ as the sum of the leading term $\bA\big\{\int_{\cU}\int_{\cU} \bSigma^{(0)}_{xx}(u,v) \bSigma^{(0)}_{xx}(u,v)^\T\,\mathrm d u \mathrm d v\big\}\bA^{\T}$ and the remaining three terms of smaller orders. By imposing suitable eigenvalue conditions in a similar spirit to those in \cite{fan2013}, the common factors are asymptotically identifiable and hence the factor loading space can be recovered by carrying out an eigenanalysis for $\int_{\cU}\int_{\cU} \widehat\bSigma^{(0)}_{yy}(u,v) \widehat\bSigma^{(0)}_{yy}(u,v)^\T\,\mathrm d u \mathrm d v.$

Third, our estimation procedure is naturally adaptable to fit the factor model for high-dimensional scalar time series \cite[]{lam2012}. We also believe that such procedure can be extended to deal with the factor modelling for high-dimensional matrix-valued time series \cite[]{wang2019} or even tensor-valued time series \cite[]{chen2021}. Despite the integration step being no longer needed, we can still incorporate the suggested weight matrix to account for the heterogeneous effect and improve the estimation efficiency.
These topics are beyond the scope of the current paper and will be pursued elsewhere.

\linespread{1}\selectfont
\bibliographystyle{chicago}
\bibliography{ref}

\newpage
\spacingset{1.7}
\begin{center}
	{\noindent \bf \Large Supplementary material  to ``Factor Modelling for High-Dimensional Functional Time Series"}\\
\end{center}
\begin{center}
	{\noindent Shaojun Guo, Xinghao Qiao, Qingsong Wang and Zihan Wang}
\end{center}
\bigskip

\setcounter{page}{1}
\setcounter{section}{0}
\renewcommand\thesection{S.\arabic{section}}
\setcounter{equation}{0}
\renewcommand{\theequation}{S.\arabic{equation}}

This supplementary material contains additional regularity conditions in Section~\ref{ap_cond}, proofs of main theoretical results in Section~\ref{supp.pf}, some auxiliary lemmas and their proofs in Section~\ref{supp.lem}, and additional empirical results supporting Sections~\ref{sim:ordi} and \ref{sec:mort.data} in Section~\ref{supp.emp}.

\section{Additional regularity conditions}
\label{ap_cond}
This section contains some additional regularity conditions mentioned in Section~\ref{sec:ultrahigh}. To study theoretical properties in an ultra-high-dimensional regime, we need to rely on non-asymptotic error bounds on $\widehat\Sigma_{yy,ij}^{(k)}$'s and will use the functional stability measure of $\{\bY_t(\cdot)\}_{t \in \eZ}$ \textcolor{blue}{(Guo and Qiao, 2023)}. 

Before presenting regularity conditions, we first solidify some notation. We denote the $p$-fold Cartesian product of $\mathbb H=L_2(\cU)$ by
$\eH^p = \eH \times \cdots \times \eH.$ For $\bw = (w_1, \dots, w_g)^{\T}, \bg = (g_1, \dots, g_p)^{\T} \in \eH^p,$ we define $\langle \bw, \bg \rangle = \sum_{j=1}^p w_j g_j.$ For an integral operator $\bK: \eH^p \rightarrow \eH^p$ induced from the kernel function $\bK=(K_{ij})_{p \times p},$ 
$$\bK(\bw)(u)= \Big(\sum_{j=1}^p \langle K_{1j}(u, \cdot), w_j(\cdot)\rangle, \dots,  \sum_{j=1}^p \langle K_{pj}(u, \cdot), w_j(\cdot)\rangle\Big)^{\T}$$ for $\bw \in \eH^p.$ To simplify notation, we use $\bK$ to denote both the kernel function and the operator.
We then define Hilbert space valued sub-Gaussian random variable, which corresponds to an infinite-dimensional analog of the sub-Gaussian random vector in $\eR^p.$

\begin{definition}\label{def-subgauss} Let $W(\cdot)$ be a random variable in $\eH$ and $\Sigma_W: \eH \rightarrow \eH$ be the covariance operator of $W(\cdot)$. Then $W(\cdot)$ is a sub-Gaussian process if there exists a constant $\alpha >0$ such that for all $w \in \eH,$
$\eE[\exp\{\langle w, W - \eE(W)\rangle\}] \leq \exp\{2^{-1}\alpha^2\langle w,\Sigma_{W}(w)\rangle\}.$
\end{definition}

\begin{condition}\label{con:flp} 
(i) $\{\bY_t(\cdot)\}_{t \in \eZ}$ is a sequence of multivariate
functional linear processes with sub-Gaussian errors, i.e., 
$  \bY_t(\cdot) = \sum_{l=0}^{\infty}\bB_l(\bvarepsilon_{t-l})
$ for any $t \in \eZ,$
where $\bB_l=(B_{l,jk})_{p \times p}$ with each $B_{l,jk} \in \eH \otimes \eH,$ $\bvarepsilon_t = (\varepsilon_{t1}, \dots, \varepsilon_{tp})^{\T} \in \eH^p$ and the components in $\{\bvarepsilon_t\}_{t \in \eZ}$ are independent mean-zero sub-Gaussian processes satisfying Definition~\ref{def-subgauss};
(ii) The coefficient functions satisfy $\sum_{l=0}^{\infty} \|B_{l,ij}\|_{\cS, \infty}=O(1);$
(iii) The covariance functions of $\{\varepsilon_{tj}(u)\}_{j=1}^p$ satisfy $\max_{j}\int_{\cU}\cov\{\varepsilon_{tj}(u),\varepsilon_{tj}(u)\}du=O(1).$
\end{condition}

\begin{condition}\label{con:fsm}
For $\{\bY_t(\cdot)\}_{t \in \eZ},$ its spectral density function $\bbf_{y,\theta}= (2\pi)^{-1} \sum_{k \in \eZ}\bSigma_{yy}^{(k)} e^{-ik\theta}$ with the frequency $\theta \in [-\pi, \pi]$ exists, and its functional stability measure defined as 
    \begin{equation}
        \label{df.fsm}
        \cM_{y}=2 \pi \cdot \underset{\theta \in [-\pi, \pi], \bw \in \eH_0^p}{\text{esssup}} \frac{\langle \bw, \bbf_{y,\theta}(\bw)\rangle}{\langle \bw, \bSigma_{yy}^{(0)}(\bw) \rangle} < \infty, 
    \end{equation}
    where $\eH_0^p = \big\{\bw \in \eH: \langle \bw,  \bSigma_{yy}^{(0)}(\bw)\rangle \in (0, \infty)\big\}.$
\end{condition}

The multivariate functional linear process in Condition~\ref{con:flp}(i) generalizes the multivariate (or functional) linear process to the functional (or multivariate) setting. Under this setting, the functional stability measure $\cM_y$ in (\ref{df.fsm}) can be expressed based on $\bbf_{y,\theta}$ and $\bSigma_{yy}^{(0)},$ both of which have explicit expressions \textcolor{blue}{(Fang et al., 2022)}. See other discussions about $\cM_y$ in \textcolor{blue}{Guo and Qiao (2023)}.  
Under Conditions~\ref{con:flp} and \ref{con:fsm}, the rate in (\ref{linfty.Sigma}) can be implied from Lemma~\ref{lem:nonasym} in Section~\ref{supp.lem} of the Supplementary Material.  
In general, we can relax Conditions~\ref{con:flp}(ii) and (iii) by allowing both $\sum_{l=0}^{\infty} \|B_{l,ij}\|_{\cS, \infty}$ and $\max_{j}\int_{\cU}\cov\{\varepsilon_{tj}(u),\varepsilon_{tj}(u)\}du$ to grow at very slow rates as $p$ diverges, then the rate in (\ref{linfty.Sigma}) will depend on these two terms.

\section{Proofs of main theoretical results}
\label{supp.pf}
This section contains all technical proofs of main theoretical results.
In addition to the notations defined in the main paper, we summarize here more notation to be used throughout this supplementary material. For a matrix $\bB=(B_{ij})_{p \times q},$ we denote its matrix $\ell_1$ norm by 
$\|\bB\|_1 = \max_{1\leq j\leq q}\sum_{i=1}^{p}|B_{ij}|.$
For a matrix of bivariate functions $\bK = (K_{ij}(\cdot, \cdot))_{p\times q}$ with each $K_{ij}\in  L_2(\cU\times\cU)$, we define the functional version of 
the matrix $\ell_1$-norm by 
$\|\bK\|_{\cS,1} = \max_{1\leq j\leq q} \sum_{i=1}^p \|K_{ij}\|_{\cS}.$ 

\subsection{Proof of Theorem~\ref{thm:eigvec}}
\label{supp.pf.thm1}
We first assume that Condition~\ref{con:bound} holds, and the procedures can be extended to the case when Condition~\ref{con:white} holds. Recall that 
$$
		\widehat\bM = \sum_{k=1}^{k_0} \iint \widehat\bSigma^{(k)}_{yy}(u,v) \widehat\bW(v) \widehat\bSigma^{(k)}_{yy}(u,v)^\T \, {\rm d}u {\rm d}v,
$$		
$$
		\widecheck\bM = \sum_{k=1}^{k_0} \iint \bSigma^{(k)}_{yy}(u,v) \widehat\bW(v)\bSigma^{(k)}_{yy}(u,v)^\T \, {\rm d}u {\rm d}v,
$$
where
$
\widehat\bW(v) = \bQ\big\{\bQ^\T\widehat{\bSigma}_{yy}^{(0)}(v,v)\bQ\big\}^{-1}\bQ^\T.$ If $\{\bvarepsilon_{t}(\cdot)\}$ is not a white noise sequence, then the decomposition \eqref{eq:cov-relation} should be rewritten as
\begin{equation}
    \label{eq:cov-relation_bound}
    \bSigma^{(k)}_{yy}(u,v) = \bA \bSigma^{(k)}_{xx}(u,v) \bA^{\T} + \bA \bSigma^{(k)}_{x\varepsilon}(u,v)+\bSigma_{\varepsilon\varepsilon}^{(k)}(u,v), ~~u,v \in \cU, ~k \geq 1.
\end{equation}
In this way, $\cC(\bA)\neq\cC(\widecheck{\bM})=\cC(\bK).$ We further decompose $\widecheck\bM=\widecheck{\bM}_{\cL}+\widecheck{\bM}_{\cR}$ into two parts, where $\widecheck{\bM}_{\cL}$ contains the information of the column space of $\bA,$ i.e., $\cC(\bA)=\cC(\widecheck{\bM}_{\cL}),$ and $\widecheck{\bM}_{\cR}$ is the residual part, which is zero if Condition~\ref{con:white} holds, or $\Vert\widecheck{\bM}_{\cR}\Vert=o_\P(\Vert\widecheck{\bM}_{\cL}\Vert)$ if Condition~\ref{con:bound} holds such that $\cC(\bA)$ can be approximated by $\cC(\bK)$. Specifically,
\begin{equation}
    \label{eq:M_R}
    \begin{aligned}
        \widecheck{\bM}_{\cL}=&\bA\left[\sum_{k=1}^{k_0}\iint\Big\{\bSigma_{xx}^{(k)}(u,v)\bA^{\T}+\bSigma_{x\varepsilon}^{(k)}(u,v)\Big\}\widehat{\bW}(v)\Big\{\bSigma_{xx}^{(k)}(u,v)\bA^{\T}+\bSigma_{x\varepsilon}^{(k)}(u,v)\Big\}^{\T}{\rm d}u {\rm d}v\right]\bA^{\T},\\
        \widecheck{\bM}_{\cR}=&\sum_{k=1}^{k_0}\iint\big\{\bA\bSigma_{xx}^{(k)}(u,v)\bA^{\T}+\bA\bSigma_{x\varepsilon}^{(k)}(u,v)\big\}\widehat{\bW}(v)\bSigma_{\varepsilon\varepsilon}^{(k)}(u,v)^{\T}{\rm d}u {\rm d}v\\
        &+\sum_{k=1}^{k_0}\iint\bSigma_{\varepsilon\varepsilon}^{(k)}(u,v)\widehat{\bW}(v)\big\{\bA\bSigma_{xx}^{(k)}(u,v)\bA^{\T}+\bA\bSigma_{x\varepsilon}^{(k)}(u,v)\big\}^{\T}{\rm d}u {\rm d}v\\
        &+\sum_{k=1}^{k_0}\iint\bSigma_{\varepsilon\varepsilon}^{(k)}(u,v)\widehat{\bW}(v)\bSigma_{\varepsilon\varepsilon}^{(k)}(u,v)^{\T}{\rm d}u {\rm d}v.
    \end{aligned}
\end{equation}

\color{black}
Noting that $\widehat\bW(v)$ is scale-invariant with respect to $\bQ,$ we assume here that the entries of $\bQ = (Q_{ij})_{p \times q}$ are independent $N(0, p^{-1})$ instead of $N(0,1)$ to ensure that 
the entries of $\bQ^\T{\widehat\bSigma}_{yy}^{(0)}(v,v)\bQ$ are $O_\P(1)$ instead of $O_\P(p^2).$ Provided that the maximum lag $k_0$ is fixed, without loss of generality we only need to consider the case $k_0=1.$ We organize our proof in the following five steps. 

{\bf Step~1}.
    We will show that $\bQ^\T\widehat{\bSigma}_{yy}^{(0)}(v,v)\bQ$ is positive definite and symmetric with high probability uniformly over $\cU,$ i.e., there exists some positive constant $C>0$ such that, with probability tending to one, 
    \begin{equation}
    \label{M0_bound}
    \inf_{v \in \cU} \lambda_{\min}\big\{\bQ^\T\widehat{\bSigma}_{yy}^{(0)}(v,v)\bQ\big\} > C.  
    \end{equation}

Observe that
    $$
\bQ^\T\widehat{\bSigma}_{yy}^{(0)}(v,v)\bQ = \bQ^\T\bSigma_{yy}^{(0)}(v,v)\bQ +\bQ^\T \left\{\widehat{\bSigma}_{yy}^{(0)}(v,v) - \bSigma_{yy}^{(0)}(v,v)\right\}\bQ.
$$ 
Then we have 
    \begin{equation}
	\label{eigen.ieq}
\left|\lambda_k\Big\{\bQ^\T\widehat{\bSigma}_{yy}^{(0)}(v,v)\bQ \Big\} - \lambda_k\Big\{\bQ^\T\bSigma_{yy}^{(0)}(v,v)\bQ \Big\}\right|
 \le  \left\|\bQ^\T \left\{\widehat{\bSigma}_{yy}^{(0)}(v,v) - \bSigma_{yy}^{(0)}(v,v)\right\}\bQ\right\|_\tF,
    \end{equation}
where $\lambda_k(\bB)$ is the $k$-th largest eigenvalue of a symmetric matrix $\bB.$
By the Hanson-Wright inequality and the union bound of probability, we obtain that there exists some constant $c>0$ such that for any matrix $\bB\in\mathbb R^{p\times p}$ and $t>0,$  
\begin{eqnarray*}
\pr\left\{\frac{\big\|\bQ^\T \bB \bQ -  \mathbb{E}(\bQ^\T \bB \bQ)\big\|_\tF}{\|\bB\|_\tF} \ge t qp^{-1} \right\} \le  2q^2\exp\{ - c  \min(t^2,t)\}.
\end{eqnarray*}
It follows from the concentration inequality above with $\bB=\widehat{\bSigma}_{yy}^{(0)}(v,v)-{\bSigma}_{yy}^{(0)}(v,v),$  the independence between $\bQ$ and $\widehat{\bSigma}_{yy}^{(0)}(v,v),$ the Lipschitz-continuity of $\bSigma_{yy}^{(0)}(v,v)$ and the uniform convergence rate in Condition~\ref{con:rootn}(ii)(iii) that 
\begin{eqnarray*}
&&\sup_{v\in \cU}\left\|\bQ^\T \left\{\widehat{\bSigma}_{yy}^{(0)}(v,v) - \bSigma_{yy}^{(0)}(v,v)\right\}\bQ\right\|_\tF 
\\&\le& O_\P\Big\{qp^{-1} \{\log(qn)\}^{1/2} \sup_{v \in \cU}\|\widehat{\bSigma}_{yy}^{(0)}(v,v) - \bSigma_{yy}^{(0)}(v,v)\|_\tF\Big\}\\
&=& O_\P\big\{q \log(qn)n^{-1/2}\big\}.
\end{eqnarray*}
Combining the above result with (\ref{eigen.ieq}) and the fact that
$$
\inf_{v\in \cU}\lambda_q\{\bQ^\T \bSigma_{yy}^{(0)}(v,v) \bQ\} \ge \inf_{v \in \cU}\lambda_q\{\bQ^\T \bSigma_{\varepsilon\varepsilon}^{(0)}(v,v) \bQ\} >c>0
$$ under Condition~\ref{con:eps}(ii) implies that (\ref{M0_bound}) holds with high probability. 


{\bf Step~2}.
The  bounds of $r$ largest eigenvalues of $\widecheck \bM$ (i.e., $\nu_1\geq \cdots \geq \nu_r$) satisfy
    \begin{equation}
        \label{nu_r_bound}
        \nu_j \asymp p^{2-2\delta}, ~~j = 1,\cdots,r,
    \end{equation}
 with probability tending to one. 
 
First, note that
    \begin{equation}\label{eq:M-order}
    \nu_r\leq \nu_1\leq \|\widecheck \bM\|\leq \iint\|\bSigma^{(1)}_{yy}(u,v)\|^2\|\widehat \bW(v)\|\,{\rm d} u{\rm d} v.
    \end{equation}
    Applying the fact $\|\bQ\| \asymp 1 + o_\P(1)$ and (\ref{M0_bound}) from Step~1, we have that
    \begin{equation}
        \label{bd.Q}
        \sup_{v\in\cU}\|\widehat \bW(v)\| \le \|\bQ\|^2\big[\inf_{v\in\cU} \lambda_{\min}\{\bQ^T \widehat{\bSigma}_{yy}^{(0)}(v,v)\bQ\}\big]^{-1} \lesssim 1+o_\P(1).
    \end{equation}

 It follows from Conditions~\ref{con:strg}--\ref{con:xeps}, the decomposition (\ref{eq:cov-relation_bound}) and Lemma~\ref{lem:xeps} that 
    \begin{eqnarray}
    \nonumber
            \iint \big\|\bSigma^{(1)}_{yy}(u,v)\big\|^2\,{\rm d}u{\rm d}v&\leq& \iint \left(\big\|\bA\bSigma_{xx}^{(1)}(u,v)\bA^{\T}\big\| + \big\|\bA\bSigma_{x\varepsilon}^{(1)}(u,v)\big\|+\big\|\bSigma_{\varepsilon\varepsilon}^{(1)}(u,v)\big\|\right)^2\,{\rm d}u{\rm d}v\\\nonumber
            &\leq& 3\Big\{\big\|\bA\big\|^4\iint\big\|\bSigma_{xx}^{(1)}(u,v)\big\|^2\,{\rm d}u{\rm d}v + \big\|\bA\big\|^2\iint\big\|\bSigma_{x\varepsilon}^{(1)}(u,v)\big\|^2{\rm d}u{\rm d}v\Big.\\
            & &\Big.+\iint\big\|\bSigma_{\varepsilon\varepsilon}^{(1)}(u,v)\big\|^2{\rm d}u{\rm d}v\Big\}\\
            &=& O\Big\{p^{2-2\delta}+p^{1-\delta}\iint\big\|\bSigma_{x\varepsilon}^{(1)}(u,v)\big\|^2\,{\rm d}u{\rm d}v+1\Big\},\label{eq:cov-yy}
    \end{eqnarray}
    where, by Condition~\ref{con:bound}, $\iint\big\|\bSigma_{\varepsilon\varepsilon}^{(1)}(u,v)\big\|^2\,{\rm d}u{\rm d}v\lesssim\sup_{(u,v)\in\cU\times\cU}\Vert\bSigma_{\varepsilon\varepsilon}^{(1)}(u,v)\Vert^2=O(1)$, and by Condition~\ref{con:xeps},
    \begin{equation}\label{eq:cov-xeps}
    \begin{aligned}
        \iint\big\|\bSigma_{x\varepsilon}^{(1)}(u,v)\big\|^2 \,{\rm d} u{\rm d} v 
        &\leq \big\|\bSigma_{x\varepsilon}^{(1)}\big\|_{\cS,\tF}^2 
        = O\Big\{\max\limits_{1\leq i\leq r}\sum_{j=1}^p\big\|\Sigma_{x\epsilon,ij}^{(1)}\big\|^2_{\cS}\Big\}\\
        &\leq O\Big\{\big\|\bSigma_{x\varepsilon}^{(1)}\big\|_{\cS,\infty}^2\Big\} = o(p^{1-\delta}).
    \end{aligned}
    \end{equation} 
    Combining \eqref{eq:M-order}--\eqref{eq:cov-xeps} yields that $\nu_r\lesssim p^{2-2\delta}\{1+o_\P(1)\}.$ 

     We next give a lower bound on $\nu_r.$ By Conditions~\ref{con:x}, \ref{con:strg}, Lemma~3 of \textcolor{blue}{Wang and Xi (1997)} and (\ref{M0_bound}) from Step~1, we first obtain that
    \begin{equation*}
        \begin{aligned}
            I_1 & =\lambda_{r}\left\{\iint \bSigma^{(1)}_{xx}(u,v)\bA^{\T}\widehat\bW(v)\bA\bSigma^{(1)}_{xx}(u,v)^\T \,{\rm d}u{\rm d}v\right\}\\
            &\ge \lambda_{r}\left\{\iint \bSigma^{(1)}_{xx}(u,v)\bSigma^{(1)}_{xx}(u,v)^\T\,{\rm d}u{\rm d}v\right\} \inf_{u,v\in \cU}\lambda_{r}\left\{\bA^{\T}\widehat\bW(v) \bA\right\}\\
            & \gtrsim \lambda_{r}(\bA^\T\bQ\bQ^\T\bA) \cdot \inf_{v \in \cU}\lambda_{q - r + 1}\left(\big\{\bQ^T\widehat{\bSigma}_{yy}^{(0)}(v,v)\bQ\big\}^{-1}\right)\\
            & \gtrsim p^{1-\delta} \{1+o_\P(1)\}.
            \end{aligned}
    \end{equation*}
    
    It follows from \eqref{bd.Q}, \eqref{eq:cov-xeps}, Lemma~\ref{lem:xeps}, and Conditions~\ref{con:bound} and \ref{con:strg} that
    \begin{equation}
        \label{eq:I_234_bound}
        \begin{aligned}
            I_2 & = \left\|\iint \bSigma^{(1)}_{x\varepsilon}(u,v)\widehat\bW(v)\bSigma^{(1)}_{xx}(u,v)\bA^{\T}\,{\rm d}u{\rm d}v\right\|\\
            &\le \left\{\iint \|\bSigma^{(1)}_{x\varepsilon}(u,v))\|^2 \,{\rm d}u{\rm d}v\iint \|\bSigma^{(1)}_{xx}(u,v)\|^2 \,{\rm d}u{\rm d}v \right\}^{1/2} \sup_{v\in \cU}\|\widehat\bW(v)\| \|\bA\| = o_\P(p^{1-\delta}),\\
            I_3 & = \left\| \iint \big\{\bA\bSigma_{xx}^{(1)}(u,v)\bA^{\T}+\bA\bSigma_{x\varepsilon}^{(1)}(u,v)\big\}\widehat{\bW}(v)\bSigma_{\varepsilon\varepsilon}^{(1)}(u,v)^{\T} {\rm d}u{\rm d}v\right\|=O_\P(p^{1-\delta}),\\
            I_4 & =\left\|\iint\bSigma_{\varepsilon\varepsilon}^{(1)}(u,v)\widehat{\bW}(v)\bSigma_{\varepsilon\varepsilon}^{(1)}(u,v)^{\T} {\rm d}u{\rm d}v\right\|=O_\P(1).
        \end{aligned}
    \end{equation}
    By Weyl's inequality, the decomposition (\ref{eq:cov-relation_bound}) and Condition~\ref{con:strg}, $\nu_r$ can be bounded from below by
    \begin{equation*}\label{thm:eigvec-pf2}
    \begin{aligned}
        \nu_r & = \lambda_{r}\Big\{\iint \bSigma^{(1)}_{yy}(u,v)\widehat\bW(v)\bSigma^{(1)}_{yy}(u,v)^{\T} \,{\rm d}u{\rm d}v\Big\}\\
        &\geq \lambda_{r}(\bA^\T\bA)\cdot\lambda_r\Big[\iint \big\{\bSigma^{(1)}_{xx}(u,v)\bA^{\T}+\bSigma^{(1)}_{x\varepsilon}(u,v)\big\}\widehat\bW(v)\big\{\bSigma^{(1)}_{xx}(u,v)\bA^{\T}+\bSigma^{(1)}_{x\varepsilon}(u,v)\big\}^{\T} \,{\rm d}u{\rm d}v\Big]\\
        &\ge \|\bA\|_{\min}^2 \cdot \lambda_{r}\Big\{\iint \bSigma^{(1)}_{xx}(u,v)\bA^{\T}\widehat\bW(v)\bA\bSigma^{(1)}_{xx}(u,v)^\T\,{\rm d}u{\rm d}v\Big\}\\
        &~~~~+2\|\bA\|_{\min}^2\cdot\sigma_{r}\Big\{\iint \bSigma^{(1)}_{x\varepsilon}(u,v)\widehat\bW(v)\bSigma^{(1)}_{xx}(u,v)\bA^{\T}\,{\rm d}u{\rm d}v\Big\}\\
        &~~~~+2\sigma_{r}\Big[\iint \big\{\bA\bSigma_{xx}^{(1)}(u,v)\bA^{\T}+\bA\bSigma_{x\varepsilon}^{(1)}(u,v)\big\}\widehat{\bW}(v)\bSigma_{\varepsilon\varepsilon}^{(1)}(u,v)^{\T} {\rm d}u{\rm d}v\Big]\\
        &~~~~+\lambda_r\Big\{\iint\bSigma_{\varepsilon\varepsilon}^{(1)}(u,v)\widehat{\bW}(v)\bSigma_{\varepsilon\varepsilon}^{(1)}(u,v)^{\T} {\rm d}u{\rm d}v\Big\}\\
        &\ge \|\bA\|_{\min}^2  \cdot (I_1 - 2I_2) -2I_3-I_4 \gtrsim p^{2-2\delta}\{1+o_\P(1)\}.
    \end{aligned}
    \end{equation*}
    Hence we obtain that  $\nu_r \asymp p^{2-2\delta}$ with high probability.

\color{black}
{\bf Step~3}.
We will show that 
\begin{equation}
\label{err.bM}
\|\widehat\bM - \widecheck\bM\| = O_\P(p^{2-\delta}n^{-1/2}),
\end{equation}
and further
\begin{equation}
    \label{eq:err_bM_L}
    \|\widehat\bM - \widecheck\bM_{\cL}\| = O_\P(p^{2-\delta}n^{-1/2}+p^{1-\delta}).
\end{equation}
\color{black}
Observe that
 \begin{equation*}\label{eq:tilde-M}
	\begin{aligned}
	 &\widehat\bSigma^{(1)}_{yy}(u,v) \widehat\bW(v) \widehat\bSigma^{(1)}_{yy}(u,v)^\T - \bSigma^{(1)}_{yy}(u,v) \widehat\bW(v)\bSigma^{(1)}_{yy}(u,v)^\T\\
	 =& \big\{\widehat\bSigma^{(1)}_{yy}(u,v) - \bSigma^{(1)}_{yy}(u,v)\big\} \widehat\bW(v) \big \{\widehat\bSigma^{(1)}_{yy}(u,v) - \bSigma^{(1)}_{yy}(u,v)\big\}^\T \\
	  & + \big \{\widehat\bSigma^{(1)}_{yy}(u,v) - \bSigma^{(1)}_{yy}(u,v)\big\} \widehat\bW(v) \bSigma^{(1)}_{yy}(u,v)^\T\\
	  & + \bSigma^{(1)}_{yy}(u,v)\widehat\bW(v) \big \{\widehat\bSigma^{(1)}_{yy}(u,v) - \bSigma^{(1)}_{yy}(u,v)\big\}^\T.
\end{aligned}
    \end{equation*}

By  $\|\bQ\| \asymp 1 + o_\P(1)$, Condition~\ref{con:rootn}(i) and Markov's inequality, we have that
 \begin{equation*}
	\begin{aligned}
&\iint \left\|\left \{\widehat\bSigma^{(1)}_{yy}(u,v) - \bSigma^{(1)}_{yy}(u,v)\right\}\bQ \right\|^2{\rm d}u{\rm d}v 
& \le \sum_{i=1}^p \sum_{j=1}^p \left \|\widehat\Sigma^{(1)}_{yy,ij} - \Sigma^{(1)}_{yy,ij}\right\|_{\cS}^2 \|\bQ\|^2 = O_\P(p^2n^{-1}).
\end{aligned}
    \end{equation*}
Furthermore,
\eqref{eq:cov-yy} and \eqref{eq:cov-xeps} imply that 
\begin{equation*}
	\begin{aligned}
&\iint \left\|\bSigma^{(1)}_{yy}(u,v)\bQ \right\|^2{\rm d}u{\rm d}v \le \iint \left\|\bSigma^{(1)}_{yy}(u,v)\right\|^2 \, {\rm d}u{\rm d}v \, \|\bQ\|^2 = p^{2-2\delta}\{ 1+ o_\P(1)\}.
\end{aligned}
    \end{equation*}
It follows from the above decomposition, two upper bound results, Cauchy-Schwartz inequality and $\sup_{v \in \cU}\lambda_{\max}\big(\{\bQ^\T\widehat{\bSigma}_{yy}^{(0)}(v,v)\bQ\}^{-1}\big)= O_\P(1)$ from Step~1 that (\ref{err.bM}) holds.
Then by the definition of $\widecheck{\bM}_{\cR}$ in \eqref{eq:M_R}, it follows from \eqref{eq:I_234_bound} that $\Vert\widecheck{\bM}_{\cR}\Vert\le k_0(2I_3+I_4)=O_\P(p^{1-\delta})$, which implies \eqref{eq:err_bM_L}.

{\bf Step~4}. Combining (\ref{nu_r_bound}) and (\ref{eq:err_bM_L}) from Steps~2 and 3 with Lemma 3 and the proof of Theorem 1 in \textcolor{blue}{Lam et al. (2011)}, we have that there exists $\widebar{\bA}\in\eR^{p\times r}$ and an orthogonal matrix $\bU$ such that $\widehat{\bA} = \widebar{\bA}\bU$ and 
\begin{equation*}
\label{thm:eigvec-eq}
        \|\widebar \bA - \bA\| = O_\P(p^{\delta}n^{-1/2}+p^{1-\delta}).
 \end{equation*}
It then follows from the orthogonality of $\bA$ that
    \begin{equation*}\label{thm:dist-pf1}
        \begin{aligned}
            \mathcal D^2\Big(\mathcal C(\bA), \mathcal C(\widehat\bA)\Big) &= r^{-1}\tr\Big(\bI - \bA\bA^{\T}\widehat\bA\widehat{\bA}^{\T}\Big) = r^{-1} \tr(\bA^{\T}(\bI-\widehat\bA\widehat{\bA}^{\T})\bA)\\
            &= r^{-1}\Big\{\tr(\bA^{\T}(\bI-\widehat\bA\widehat{\bA}^{\T})\bA) - \tr(\bA^{\T}(\bI-\bA\bA^{\T})\bA)\Big\}\\
            &=r^{-1}\tr\Big(\bA^{\T}( \bA\bA^{\T} - \widehat\bA\widehat{\bA}^{\T})\bA\Big)\leq \Big\|\bA^{\T}( \bA\bA^{\T} - \widebar\bA\widebar{\bA}^{\T})\bA\Big\|,
        \end{aligned}
    \end{equation*}
    where the last term is further bounded by
    \begin{equation*}\label{thm:dist-pf2}
        \begin{aligned}
            \|-\bA^{\T}(\bA-\widebar{\bA})(\bA-\widebar{\bA})^{\T}\bA + (\bA-\widebar{\bA})^{\T}(\bA-\widebar{\bA})\| \leq 2\|\widebar{\bA}-\bA\|^2 = O_\P(p^{2\delta}n^{-1}+p^{2-2\delta}).
        \end{aligned}
    \end{equation*}
    The result in the first part of Theorem~\ref{thm:eigvec} follows immediately. 
    
    If Condition~\ref{con:white} holds, then $\widecheck{\bM}_{\cR}=\bzero,$ and the rate in \eqref{eq:err_bM_L} can be reduced to $\|\widehat\bM - \widecheck\bM_{\cL}\| = O_\P(p^{2-\delta}n^{-1/2}).$ Following the similar procedures, we can show that $\mathcal D\big(\mathcal C(\bA), \mathcal C(\widehat\bA)\big)=O_\P(p^{\delta}n^{-1/2}),$   
    which completes the proof. \hfill$\square$

\color{black}
\subsection{Proof of Theorem~\ref{thm:eigval}}
\label{supp.pf.thm2}
(i) Denote $(\nu_j,\bgamma_j),(\tilde\nu_j,\widetilde\bgamma_j)$ and $(\hat \nu_j,\widehat {\bgamma}_j)$ by the eigenpairs of $\widecheck{\bM},\widecheck{\bM}_{\cL}$ and $\widehat \bM$, respectively. For simplicity we assume $\bgamma_j^\T\widehat\bgamma_j\geq 0$ and $\widetilde\bgamma_j^\T\widehat\bgamma_j\geq 0$ here. First note that for $j=1,2,\dots,r,$
    \begin{equation*}
        \begin{aligned}
            \hat\nu_j-\nu_j= \widehat{\bgamma}_j^\T\widehat\bM\widehat{\bgamma}_j-\bgamma_j^\T\widecheck\bM\bgamma_j=I_1+I_2+I_3+I_4+I_5,
        \end{aligned}
    \end{equation*}
    where
     $       I_1 = (\widehat{\bgamma}_j-\bgamma_j)^\T(\widehat\bM-\widecheck\bM)\widehat{\bgamma}_j,~ I_2 = (\widehat{\bgamma}_j-\bgamma_j)^\T\widecheck\bM(\widehat{\bgamma}_j-\bgamma_j),
            ~I_3 = (\widehat{\bgamma}_j-\bgamma_j)^\T\widecheck\bM\bgamma_j, 
            ~I_4 = \bgamma_j^\T(\widehat\bM-\widecheck\bM)\widehat{\bgamma}_j,
            ~I_5 = \bgamma_j^\T\widecheck\bM(\widehat{\bgamma}_j-\bgamma_j).$

    By $\|\widehat\bM-\widecheck\bM\| =O_\P(p^{2-\delta} n^{-1/2})$ and $\|\widecheck \bM\| = O_\P(p^{2-2\delta})$, we have $\|\widehat{\bgamma}_j-\bgamma_j\|=O_\P(p^{\delta}n^{-1/2})$. Then,
    $$|I_1| + |I_2| = O_\P(p^2n^{-1}), \qquad |I_3|+|I_4|+|I_5|=O_\P(p^{2-\delta}n^{-1/2}),$$
    which means that $|\hat{\nu}_j-\nu_j| = O_\P(p^{2-\delta}n^{-1/2})$ under $p^{\delta}n^{-1/2}=o(1).$

    To prove the result for $j=r+1,\dots,p$, we introduce 
    \begin{equation*}\label{eq:Mtilde}
        \widebar\bM = \sum_{k=1}^{k_0} \iint \widehat{\bSigma}^{(k)}_{yy}(u,v)\widehat \bW(v)\bSigma^{(k)}_{yy}(u,v)^{\T} \,\mathrm d u\mathrm d v.
    \end{equation*}
 For each $j = r+1, \ldots, p,$ consider the following decomposition 
    \begin{equation*}
        \begin{aligned}
            \hat\nu_j= \widehat{\bgamma}_j^\T\widehat\bM\widehat{\bgamma}_j = I_6+I_7+I_8,
        \end{aligned}
    \end{equation*}
    where
    $       I_6 = \widehat{\bgamma}_j^\T(\widehat\bM-\widebar{\bM} - \widebar{\bM}^\T  + \widecheck\bM_{\cL})\widehat{\bgamma}_j,
            ~I_7 = 2\widehat{\bgamma}_j^\T(\widebar{\bM}-\widecheck\bM_{\cL})(\widehat{\bgamma}_j-\widetilde\bgamma_j), ~I_8 = (\widehat{\bgamma}_j-\widetilde\bgamma_j)^\T\widecheck\bM_{\cL}(\widehat{\bgamma}_j-\widetilde\bgamma_j).$
    Note that, under Condition~\ref{con:rootn}(i), 
    $$
    |I_6| \le \|\widehat\bM-\widebar{\bM} - \widebar{\bM}^\T  + \widecheck \bM\| \lesssim \sum_{k=1}^{k_0} \|\widehat{\bSigma}^{(k)}_{yy} - \bSigma^{(k)}_{yy}\|_{\cS}^2 =O_\P(p^2n^{-1}). 
    $$
    By the similar arguments to prove Theorem \ref{thm:eigvec},  we can show that $\|\widebar\bM -  \widecheck\bM_{\cL}\| =O_\P(p^{2-\delta}n^{-1/2}+p^{1-\delta})$ and 
    $\|\widehat{\bgamma}_j-\widetilde{\bgamma}_j\|=O_\P(p^{\delta}n^{-1/2}+p^{\delta-1}),$ which together imply that
    $|I_7|+|I_8|=O_\P(p^2n^{-1}+1)$. Combining the above results yields that $\hat{\nu}_j=O_\P(p^2n^{-1}+1)$ for $j=r+1,\dots,p,$ which completes the  proof of the first part.

    \color{black}
    (ii) We next turn to prove the second part. It follows from $\nu_j\asymp p^{2-2\delta}$ for $j =1,\ldots,r$ in (\ref{nu_r_bound}) and $p^{\delta}n^{-1/2} = o(1)$ that
    $$
    \frac{|\hat{\nu}_j-\nu_j|}{\nu_j} = O_\P(p^{2-\delta}n^{-1/2}p^{2\delta -2}) = O_\P(p^{\delta}n^{-1/2}) = o_\P(1),~~j=1, \dots, r.
    $$
    As a result, for $j = 1,\ldots,r-1,$
    $$
    \frac{\hat \nu_{j+1}}{\hat \nu_j} = \frac{\nu_{j+1}}{\nu_j}\frac{1+\nu_{j+1}^{-1}(\hat \nu_{j+1} - \nu_{j+1})}{1+\nu_{j}^{-1}(\hat \nu_{j} - \nu_{j})} = 
    \frac{\nu_{j+1}}{\nu_j} \frac{1+o_\P(1)}{1+o_\P(1)} \asymp 1
    $$
    with probability tending to one, and
    $$
    \frac{\hat \nu_{r+1}}{\hat \nu_r} = \frac{1}{\nu_{r}}\frac{\hat \nu_{r+1}}{1+\nu_{r}^{-1}(\hat \nu_{r} - \nu_{r})} = O_\P(p^{2\delta-2}p^2n^{-1}+p^{2\delta-2}) = O_\P(p^{2\delta} n^{-1}+p^{2\delta-2}).
    $$
    We complete the proof of the second part. \hfill$\square$
    \color{black}


\subsection{Proof of Theorem~\ref{thm.r}}
\label{supp.pf.thm3}
By Theorem~\ref{thm:eigval}, $|\hat \nu_j -\nu_j| = O_\P(\beta_n)$ for $j=1, \dots, r$ and $|\hat \nu_j| = O_\P(\tilde \beta_n)$ for $j=r+1, \dots, p,$ where $\beta_n=p^{2-\delta}n^{-1/2}$ and $\widetilde \beta_n = p^2n^{-1}+1.$ Under the event $\Omega_n =\{\nu_j \asymp p^{2-2\delta},j=1, \dots, r\},$ it follows from (\ref{nu_r_bound}) that $\pr(\Omega_n) \to 1.$ We next verify the following three conditions (i), (ii) and (iii):
\begin{enumerate}
\item[(i)] $(\vartheta_n + \widetilde \beta_n)/(\nu_r^2/\nu_1) \asymp (\vartheta_n + p^2n^{-1}+1)/p^{2-2\delta} \to 0;$
\item[(ii)] $\beta_n/\nu_r \asymp p^{2-\delta}n^{-1/2}/p^{2-2\delta} = p^{\delta}n^{-1/2}\to 0;$
\item[(iii)] $\widetilde \beta_n^2/(\vartheta_n \nu_r)\asymp(p^2n^{-1}+1)^2/(\vartheta_n p^{2-2\delta})\asymp\vartheta_n^{-1} n^{-2} p^{2+2\delta} +\vartheta_n^{-1}  p^{-2+2\delta} \to 0.$
\end{enumerate}
Under (i), (ii) and (iii), we apply Proposition~1 of \textcolor{blue}{Han et al. (2022)} and obtain 
$\pr(\hat r=r\mid\Omega_n) \to 1$ with $\hat r$ defined in (\ref{mod.ratio.est}). Noting that
$\pr(\hat r \neq r) \leq \pr(\hat r \neq r\mid\Omega_n) + \pr(\Omega_n^C) \to 0,$ we complete the proof of this theorem.\hfill$\square$


\subsection{Proof of Theorem~\ref{thm:hdeigspace}}
\label{supp.pf.thm4}%
We organize our proof in the following three steps.

{\bf Step~1}. With the choice of $\eta_k \asymp \mathcal{M}_{y}(\log p/n)^{1/2},$ we will show that     
\begin{equation}
\label{err.threshold}
    \begin{aligned}
         \big\|\mathcal T_{\eta_k}\{\widehat \bSigma^{(k)}_{yy}\} - \bSigma^{(k)}_{yy}\big\|_{\cS,1} &= O_\P\left\{c_1(p)\mathcal{M}_{y}^{1-\tau}\Big(\frac{\log p}{n}\Big)^{\frac{1-\tau}{2}}\right\},\\
         \big\|\mathcal T_{\eta_k}\{\widehat \bSigma^{(k)}_{yy}\} - \bSigma^{(k)}_{yy}\big\|_{\cS,\infty} &= O_\P\left[\big\{c_1(p)+c_2(p)\big\}\mathcal{M}_{y}^{1-\tau}\Big(\frac{\log p}{n}\Big)^{\frac{1-\tau}{2}}\right].
    \end{aligned}
\end{equation}

By the definition of functional matrix $\ell_1$ norm $\|\cdot\|_{\cS, 1}$ and the triangle inequality, 
        \begin{equation*}
            \begin{aligned}
                \big\|\mathcal T_{\eta_k}\{\widehat \bSigma^{(k)}_{yy}\} - \bSigma^{(k)}_{yy}\big\|_{\cS,1} \leq \big\|\thresh - \bSigma^{(k)}_{yy}\big\|_{\cS,1} + \big\|\mathcal T_{\eta_k}\{\widehat \bSigma^{(k)}_{yy}\} - \thresh\big\|_{\cS,1}.
            \end{aligned}
        \end{equation*}
        
        For the first term, it follows from Lemma \ref{lem:sparse} that
        \begin{equation}
            \begin{aligned}
                 \big\|\thresh - \bSigma^{(k)}_{yy}\big\|_{\cS,1} &= \max\limits_{1\leq j\leq p}\sum_{i=1}^p \|\Sigma^{(k)}_{yy,ij}\|_{\cS}I\big\{\|\Sigma^{(k)}_{yy,ij}\|_{\cS}< \eta_k\big\}\\ 
                 &< \eta_k^{1-\tau}\max\limits_{1\leq j\leq p}\sum_{i=1}^p\|\Sigma^{(k)}_{yy,ij}\|_{\cS}^\tau\lesssim \eta_k^{1-\tau}c_1(p).
                 \label{bd.first}
            \end{aligned}
        \end{equation}
        For the second term, observe that
        \begin{align*}
                 \big\|\mathcal T_{\eta_k}\{\widehat \bSigma^{(k)}_{yy}\} - \thresh\big\|_{\cS,1} &= \max\limits_{1\leq j\leq p}\sum_{i=1}^p \big\|\widehat{\Sigma}_{yy,ij}^{(k)} I\{\|\widehat{\Sigma}_{yy,ij}^{(k)}\|_{\cS}\geq\eta_k\}- \Sigma_{yy,ij}^{(k)} I\{\|\Sigma_{yy,ij}^{(k)}\|_{\cS}\geq\eta_k\}\big\|_{\cS}\\
                 &\leq \max\limits_{1\leq j\leq p}\sum_{i=1}^p \big\|\widehat{\Sigma}_{yy,ij}^{(k)} - \Sigma_{yy,ij}^{(k)}\big\|_{\cS}I\big\{\|\widehat{\Sigma}_{yy,ij}^{(k)}\|_{\cS}\geq\eta_k,\|\Sigma_{yy,ij}^{(k)}\|_{\cS}\geq\eta_k\big\}\\
                 &~~~~+\max\limits_{1\leq j\leq p}\sum_{i=1}^p \big\|\widehat{\Sigma}_{yy,ij}^{(k)} \|_{\cS}I\big\{\|\widehat{\Sigma}_{yy,ij}^{(k)}\big\|_{\cS}\geq\eta_k,\|\Sigma_{yy,ij}^{(k)}\|_{\cS}<\eta_k\big\}\\
                 &~~~~+\max\limits_{1\leq j\leq p}\sum_{i=1}^p \big\|\Sigma_{yy,ij}^{(k)}\big\|_{\cS}I\big\{\|\widehat{\Sigma}_{yy,ij}^{(k)}\|_{\cS}<\eta_k,\|\Sigma_{yy,ij}^{(k)}\|_{\cS}\geq\eta_k\big\}\\
                 &:= I_1+I_2+I_3.
        \end{align*}
        
        Denote $Z = \max\limits_{1\leq i,j\leq p}\big\|\widehat{\Sigma}_{yy,ij}^{(k)} - \Sigma_{yy,ij}^{(k)}\big\|_{\cS}$. We first bound $I_1.$ By Lemma \ref{lem:sparse},
        \begin{equation*}
            \begin{aligned}
                 I_1 &\leq \max\limits_{1\leq i,j\leq p}\big\|\widehat{\Sigma}_{yy,ij}^{(k)} - \Sigma_{yy,ij}^{(k)}\big\|_{\cS}\cdot\max\limits_{1\leq j\leq p}\sum_{i=1}^p I\big\{\|\widehat{\Sigma}_{yy,ij}^{(k)}\|_{\cS}\geq\eta_k,\|\Sigma_{yy,ij}^{(k)}\|_{\cS}\geq\eta_k\big\}\\
                 &\leq Z\cdot \max\limits_{1\leq j\leq p}\sum_{i=1}^p I\big\{\|\Sigma_{yy,ij}^{(k)}\big\|_{\cS}\geq\eta_k\}\lesssim Z\eta_k^{-\tau}c_1(p).
            \end{aligned}
        \end{equation*}
        We next bound $I_2.$ By the triangle inequality and Lemma \ref{lem:sparse}, 
        \begin{equation*}
            \begin{aligned}
                 I_2 &\leq \max\limits_{1\leq j\leq p}\sum_{i=1}^p \big\|\widehat{\Sigma}_{yy,ij}^{(k)} - \Sigma_{yy,ij}^{(k)}\big\|_{\cS}I\big\{\|\widehat{\Sigma}_{yy,ij}^{(k)}\|_{\cS}\geq\eta_k,\|\Sigma_{yy,ij}^{(k)}\|_{\cS}<\eta_k\big\}\\
                 &~~~~+\max\limits_{1\leq j\leq p}\sum_{i=1}^p \big\|\Sigma_{yy,ij}^{(k)}\big\|_{\cS}I\big\{\|\Sigma_{yy,ij}^{(k)}\|_{\cS}<\eta_k\big\}\\
                 &\lesssim \max\limits_{1\leq j\leq p}\sum_{i=1}^p \big\|\widehat{\Sigma}_{yy,ij}^{(k)} - \Sigma_{yy,ij}^{(k)}\big\|_{\cS}I\big\{\|\widehat{\Sigma}_{yy,ij}^{(k)}\|_{\cS}\geq\eta_k,\|\Sigma_{yy,ij}^{(k)}\|_{\cS}<\eta_k\big\} + \eta_k^{1-\tau}c_1(p)\\
                 &:= I_4 + \eta_k^{1-\tau}c_1(p).
            \end{aligned}
        \end{equation*}
        
        We take certain $\theta\in (0,1).$ 
        Let $N(t)=\max\limits_{1\leq j\leq p} \sum_{i=1}^p I\big\{\|\widehat{\Sigma}_{yy,ij}^{(k)}-\Sigma_{yy,ij}^{(k)}\|_{\cS}\geq t\big\}.$ By the triangle inequality and Lemma~\ref{lem:sparse},
        \begin{equation*}
            \begin{aligned}
                 I_4 &\leq \max\limits_{1\leq j\leq p}\sum_{i=1}^p \big\|\widehat{\Sigma}_{yy,ij}^{(k)} - \Sigma_{yy,ij}^{(k)}\big\|_{\cS}I\big\{\|\widehat{\Sigma}_{yy,ij}^{(k)}\|_{\cS}\geq\eta_k,\|\Sigma_{yy,ij}^{(k)}\|_{\cS}\leq \theta\eta_k\big\}\\
                 &~~~~+ \max\limits_{1\leq j\leq p}\sum_{i=1}^p \big\|\widehat{\Sigma}_{yy,ij}^{(k)} - \Sigma_{yy,ij}^{(k)}\big\|_{\cS}I\big\{\|\widehat{\Sigma}_{yy,ij}^{(k)}\|_{\cS}\geq\eta_k,\theta\eta_k<\|\Sigma_{yy,ij}^{(k)}\|_{\cS}< \eta_k\big\}\\
                 &\leq Z \cdot \max\limits_{1\leq j\leq p}\sum_{i=1}^p I\big\{\|\widehat{\Sigma}_{yy,ij}^{(k)}-\Sigma_{yy,ij}^{(k)}\|_{\cS}\geq(1-\theta)\eta_k\big\} + Z\cdot \max\limits_{1\leq j\leq p}\sum_{i=1}^p I\big\{\|\Sigma_{yy,ij}^{(k)}\|_{\cS}>\theta\eta_k\big\}\\
                 &\lesssim Z \cdot N\big\{(1-\theta)\eta_k\big\} + Z(\theta\eta_k)^{-\tau}c_1(p).
            \end{aligned}
        \end{equation*}
        The above bounds imply that
        \begin{equation*}
            \begin{aligned}
                 I_2 \lesssim Z\cdot N\big\{(1-\theta)\eta_k\big\} + \eta_k^{1-\tau}c_1(p) +  Z(\theta\eta_k)^{-\tau}c_1(p).
            \end{aligned}
        \end{equation*}
        We finally bound $I_3.$ By the triangle inequality and Lemma \ref{lem:sparse}, 
        \begin{equation*}
            \begin{aligned}
                 I_3&\leq  \max\limits_{1\leq j\leq p}\sum_{i=1}^p \big\|\widehat{\Sigma}_{yy,ij}^{(k)} - \Sigma_{yy,ij}^{(k)}\big\|_{\cS}I\big\{\|\widehat{\Sigma}_{yy,ij}^{(k)}\|_{\cS}<\eta_k,\|\Sigma_{yy,ij}^{(k)}\|_{\cS}\geq \eta_k\big\}\\
                 &~~~~+\max\limits_{1\leq j\leq p}\sum_{i=1}^p \big\|\widehat{\Sigma}_{yy,ij}^{(k)}\big\|_{\cS}I\big\{\|\widehat{\Sigma}_{yy,ij}^{(k)}\big\|_{\cS}<\eta_k,\|\Sigma_{yy,ij}^{(k)}\|_{\cS}\geq \eta_k\}\\
                 &\leq Z\cdot \max\limits_{1\leq j\leq p}\sum_{i=1}^p I\big\{\|\widehat{\Sigma}_{yy,ij}^{(k)}\|_{\cS}\geq\eta_k\big\} + \eta_k \max\limits_{1\leq j\leq p}\sum_{i=1}^p I\big\{\|\Sigma_{yy,ij}^{(k)}\|_{\cS}\geq\eta_k\big\}\\
                 &\lesssim Z \eta_k^{-\tau}c_1(p) + \eta_k^{1-\tau}c_1(p).
            \end{aligned}
        \end{equation*}
        Combining the bounds for $I_1, I_2$ and $I_3$ yields that
        \begin{equation}
        \begin{aligned}
        \label{bd.second}
                 \big\|\mathcal T_{\eta_n}\{\widehat \bSigma^{(k)}_{yy}\} - \thresh\big\|_{\cS,1} \lesssim &Z \eta_k^{-\tau}c_1(p) + \eta_k^{1-\tau}c_1(p) \\ &+Z(\theta\eta_k)^{-\tau}c_1(p)
              + Z \cdot N\big\{(1-\theta)\eta_k\big\}.
        \end{aligned}
        \end{equation}
        By Lemma \ref{lem:nonasym} and the choice of $\eta_k =c_3 {\cal M}_y (\log p/n)^{1/2}=o(1)$ for sufficiently large $c_3,$
        \begin{equation*}
            \begin{aligned}
            \mathbb P\Big(N\big\{(1-\theta)\eta_k\big\}>0\Big)&=\mathbb P\Big(\max\limits_{1\leq i,j\leq p} \|\widehat{\Sigma}_{yy,ij}^{(k)}-\Sigma_{yy,ij}^{(k)}\|_{\cS}\geq (1-\theta)\eta_k\Big)\\
            &\leq c_1p^2\exp\{-c_2(1-\theta)^2c_3^2\log p\}\to 0.
            \end{aligned}
        \end{equation*}
        Hence $N\big\{(1-\theta)\eta_k\big\}=o_\P(1).$ Applying Lemma \ref{lem:nonasym} again, we obtain that $Z = O_\P(\eta_k),$ which together with (\ref{bd.first}) and (\ref{bd.second}) implies that
        $$ \big\|\mathcal T_{\eta_n}\{\widehat \bSigma^{(k)}_{yy}\} - \bSigma^{(k)}_{yy}\big\|_{\cS,1} = O_\P\left\{c_1(p)\mathcal{M}_{y}^{1-\tau}\Big(\frac{\log p}{n}\Big)^{\frac{1-\tau}{2}}\right\}.
        $$
        By the similar procedure, the second result of (\ref{err.threshold}) can be derived and the proof is omitted.
        
        {\bf Step~2}. We will show that
\begin{equation}
\label{err.Mtilde}
    \big\|\widetilde\bM -\widecheck \bM\big\| = O_\P\left[c_1(p)\big\{c_1(p)+c_2(p)\big\}\mathcal{M}_{y}^{1-\tau}\Big(\frac{\log p}{n}\Big)^{\frac{1-\tau}{2}}\right].
\end{equation}
Note that
\begin{equation*}
                \begin{aligned}
                   \big\|\widetilde \bM-\widecheck\bM\big\| =& O_\P\Bigg(\Big\|\iint\big[\mathcal T_{\eta_k}\{\widehat \bSigma^{(k)}_{yy}\}(u,v)-\bSigma^{(k)}_{yy}(u,v)\Big]\widehat\bW(v)\big[\mathcal T_{\eta_k}\{\widehat \bSigma^{(k)}_{yy}\}(u,v)-\bSigma^{(k)}_{yy}(u,v)\Big]^{\T}\,{\rm d}u{\rm d}v\Big\| \\
                    &~~~~+ \Big\|\iint\big[\mathcal T_{\eta_k}\{\widehat \bSigma^{(k)}_{yy}\}(u,v)-\bSigma^{(k)}_{yy}(u,v)\big]\widehat\bW(v)\bSigma^{(k)}_{yy}(u,v)^{\T}\,{\rm d}u{\rm d}v\Big\|\Bigg). 
                \end{aligned}
            \end{equation*} 
            
            By Lemma \ref{lem:int-mat} and (\ref{bd.Q}), (\ref{err.threshold}), we obtain that
            \begin{align*}
                 &\Big\|\iint\big[\mathcal T_{\eta_k}\{\widehat \bSigma^{(k)}_{yy}\}(u,v)-\bSigma^{(k)}_{yy}(u,v)\big]\widehat\bW(v)\big[\mathcal T_{\eta_k}\{\widehat \bSigma^{(k)}_{yy}\}(u,v)-\bSigma^{(k)}_{yy}(u,v)\big]^{\T}\,{\rm d}u{\rm d}v\Big\|\\
                 \lesssim& \big\|\mathcal T_{\eta_k}\{\widehat \bSigma^{(k)}_{yy}\}-\bSigma^{(k)}_{yy}\big\|_{\cS,1}\big\|\mathcal T_{\eta_k}\{\widehat \bSigma^{(k)}_{yy}\}-\bSigma^{(k)}_{yy}\big\|_{\cS,\infty}\big\{1+o_\P(1)\big\}\\
                 =& O_\P\left[c_1(p)\big\{c_1(p)+c_2(p)\big\}\mathcal{M}_{y}^{2(1-\tau)}\Big(\frac{\log p}{n}\Big)^{1-\tau}\right],
            \end{align*}
            \begin{equation*}
                \begin{aligned}
                   &\Big\|\iint\big[\mathcal T_{\eta_k}\{\widehat \bSigma^{(k)}_{yy}\}(u,v)-\bSigma^{(k)}_{yy}(u,v)\big]\widehat\bW(v)\bSigma^{(k)}_{yy}(u,v)^{\T}\,{\rm d}u{\rm d}v\Big\| \\
                   \lesssim & \Big[\|\mathcal T_{\eta_n}\{\widehat \bSigma^{(k)}_{yy}\}-\bSigma^{(k)}_{yy}\|_{\cS,1}\|\mathcal T_{\eta_n}\{\widehat \bSigma^{(k)}_{yy}\}-\bSigma^{(k)}_{yy}\|_{\cS,\infty}\Big]^{1/2} \Big[\|\bSigma^{(k)}_{yy}\|_{\cS,1}\|\bSigma^{(k)}_{yy}\|_{\cS,\infty}\Big]^{1/2}\big\{1+o_\P(1)\big\}.
                \end{aligned}
            \end{equation*}
            For the second term above, it follows that
            \begin{equation*}
                \begin{aligned}
                    \|\bSigma^{(k)}_{yy}\|_{\cS,1} = \max\limits_{1\leq j\leq p}\sum_{i=1}^p \|\Sigma_{yy,ij}^{(k)}\|_{\cS}\leq \max\limits_{1\leq j\leq p}\sum_{i=1}^p \|\Sigma_{yy,ij}^{(k)}\|^\tau_{\cS}\|\Sigma_{yy,ij}^{(k)}\|^{1-\tau}_{\cS}\lesssim c_1(p),
                \end{aligned}
            \end{equation*}
            where $\tau \in (0,1)$ and $\|\Sigma_{yy,ij}^{(k)}\|_{\cS}$ is uniformly bounded under Condition~\ref{con:flp}. Following the similar procedure, we can show that $\|\bSigma^{(k)}_{yy}\|_{\cS,\infty}\lesssim c_1(p)+c_2(p)$. For the truly sparse case with $\tau=0$, it follows from the derivation in Section~\ref{lem:sparse-pf1} under Conditions~\ref{con:x}, \ref{con:eps}, \ref{con:colsparse} with fixed $r$ and $L$ that
            \begin{equation*}
                \begin{aligned}
                    \max\limits_{1\leq j\leq p}\sum_{i=1}^p\|\Sigma_{yy,ij}^{(k)}\|_{\cS} 
                    &\leq \max\limits_{1\leq j\leq p}\sum_{1 \leq m,m' \leq r} \sum_{i=1}^p |A_{im}A_{jm'}|\|\Sigma_{xx,mm'}^{(k)}\|_{\cS} + \max\limits_{1\leq j\leq p}\sum_{m=1}^r\sum_{i=1}^p|A_{im}|\|\Sigma^{(k)}_{x\epsilon,mj}\|_{\cS}\\
                    &\lesssim r^2L c_1(p) + rc_1(p)\asymp c_1(p).
                \end{aligned}
            \end{equation*}
            
            Similarly, under Condition~\ref{con:xeps2}, we can show that
            \begin{equation*}
                \begin{aligned}
                    \max\limits_{1\leq i \leq p}\sum_{j=1}^p\|\Sigma_{yy,ij}^{(k)}\|_{\cS} \lesssim r^2Lc_1(p) + rL c_2(p)\asymp c_1(p)+c_2(p).
                \end{aligned}
            \end{equation*}
            The above results together with (\ref{err.threshold}) from Step~1 implies (\ref{err.Mtilde}). 
            
{\bf Step~3}.
    Denote $\hat \epsilon = \mathcal D\big(\mathcal C(\bK), \mathcal C(\widetilde \bK)\big).$
    By the fact that $\hat \epsilon^2 = (2r)^{-1}\|\bK \bK^\T - \widetilde \bK \widetilde \bK^\T\|_\tF^2$ and Corollary 4.1 in \textcolor{blue}{Vu and Lei (2013)}, we obtain that
    \begin{equation*}
        \begin{aligned}
             2 r \nu_r \hat \epsilon^2 &\leq |\langle \widetilde\bM-\widecheck\bM, \widetilde\bK\widetilde\bK^\T-\bK\bK^\T \rangle|\\
             &\leq \left|\Big\langle \widetilde\bM-\widecheck\bM, \frac{\widetilde\bK\widetilde\bK^\T -\bK\bK^\T}{\|\widetilde\bK\widetilde\bK^\T -\bK\bK^\T\|_\tF}\Big\rangle \right|\cdot \sqrt{2r}\hat \epsilon, 
        \end{aligned}
    \end{equation*}
which further implies that
$$
             \nu_r \hat \epsilon \leq (2r)^{-1/2} \Big|\big\langle \widetilde\bM-\widecheck\bM, \boldsymbol {\widehat \Delta}\big\rangle \Big|,  
$$
    where $\boldsymbol{\widehat \Delta} = \frac{\widetilde\bK\widetilde\bK^\T -\bK\bK^\T}{\|\widetilde\bK\widetilde\bK^\T -\bK\bK^\T\|_\tF}$ and $\|\boldsymbol{ \widehat \Delta}\|_\tF = 1.$
    
    Note that $\boldsymbol{\widehat \Delta}$ is a matrix with rank at most $2 r$ and the associated singular values $\hat \sigma_1 \ge \hat \sigma_2 \ge \cdots \ge \hat \sigma_{2r}\ge 0.$ Then by the von-Neumann's trace inequality, we obtain that 
    \begin{align*}
             \nu_r \hat \epsilon & \leq (2r)^{-1/2} \big\|\widetilde\bM-\widecheck\bM\big\| \sum_{i=1}^{2r} \hat \sigma_i \\
             & \leq  (2r)^{-1/2} \big\|\widetilde\bM-\widecheck\bM\big\| \sqrt{2r \sum_{i=1}^{2r} \hat \sigma_i^2} 
              = \big\|\widetilde\bM-\widecheck\bM\big\| \|\boldsymbol{ \widehat \Delta}\|_\tF = \big\|\widetilde\bM-\widecheck\bM\big\|.
    \end{align*}
By (\ref{err.Mtilde}) from Step~2, we complete the proof of Theorem~\ref{thm:hdeigspace}.\hfill$\square$ 

\section{Auxiliary lemmas}
\label{supp.lem}
In this section, we present some auxiliary lemmas that are used in the proofs of theoretical results in Sections~\ref{sec:theory} and \ref{sec:ultrahigh}. 

The first lemma specifies the functional sparsity patterns in $\bSigma_{yy}^{(k)}$'s.

\begin{lemma}\label{lem:sparse}
    Under Conditions~\ref{con:x}, \ref{con:eps}, \ref{con:xeps2} and \ref{con:colsparse}, it holds that, for $k=1, \dots, k_0,$
    \begin{equation*}
    \begin{aligned}
        \max\limits_{1\leq j\leq p}\sum_{i=1}^p\|\Sigma^{(k)}_{yy,ij}\|^\tau_{\cS}\lesssim c_1(p)~~\text{and}~~ \max\limits_{1\leq i\leq p}\sum_{j=1}^p\|\Sigma^{(k)}_{yy,ij}\|^\tau_{\cS} \lesssim c_1(p)+c_2(p).
    \end{aligned}
    \end{equation*}
\end{lemma}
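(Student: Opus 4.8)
The plan is to start from the key decomposition \eqref{eq:cov-relation}, namely $\bSigma^{(k)}_{yy}(u,v) = \bA \bSigma^{(k)}_{xx}(u,v) \bA^{\T} + \bA \bSigma^{(k)}_{x\varepsilon}(u,v)$, and to control the functional $\ell_{\tau}$-column (resp.\ row) sums of the two summands separately, using the triangle/subadditivity inequality $(a+b)^{\tau} \le a^{\tau} + b^{\tau}$ valid for $a,b \ge 0$ and $\tau \in [0,1)$. The $(i,j)$ entry of the first term is $\ba_i^{\T} \bSigma^{(k)}_{xx}(u,v) \ba_j$, whose Hilbert--Schmidt norm is bounded, via Cauchy--Schwarz in $L_2(\cU^2)$ and the finite-moment Conditions~\ref{con:x}(i) and \ref{con:eps}(i), by $C\|\ba_i\|_1 \|\ba_j\|_1$ for an absolute constant $C$ (here $\|\cdot\|_1$ is the $\ell_1$ norm of an $r$-vector, which is $O(1)$ times the max entry since $r$ is fixed). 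The $(i,j)$ entry of the second term is $\ba_i^{\T} \bSigma^{(k)}_{x\varepsilon,\cdot j}(u,v)$ where $\bSigma^{(k)}_{x\varepsilon,\cdot j}$ is the $j$-th column, so its HS norm is at most $\sum_{l=1}^r |A_{il}| \, \|\Sigma^{(k)}_{x\varepsilon,lj}\|_{\cS}$.

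For the column bound ($\max_j \sum_i$), first I would handle the first summand: using $|A_{il}| \le L$ and $\sum_i |A_{il}|^{\tau} \le c_1(p)$ from Condition~\ref{con:colsparse}, one gets $\sum_i \big(\|\ba_i\|_1 \|\ba_j\|_1\big)^{\tau} \lesssim \|\ba_j\|_1^{\tau}\sum_i \|\ba_i\|_1^{\tau} \lesssim \sum_{l} \sum_i |A_{il}|^{\tau} \lesssim r\, c_1(p) \lesssim c_1(p)$, uniformly in $j$ (using $r$ fixed and $L$ bounded to absorb constants). For the second summand one swaps the order of summation: $\sum_i \big(\sum_l |A_{il}|\,\|\Sigma^{(k)}_{x\varepsilon,lj}\|_{\cS}\big)^{\tau} \le \sum_l \|\Sigma^{(k)}_{x\varepsilon,lj}\|_{\cS}^{\tau} \sum_i |A_{il}|^{\tau} \lesssim c_1(p) \sum_l \|\Sigma^{(k)}_{x\varepsilon,lj}\|_{\cS}^{\tau}$, and since each $\|\Sigma^{(k)}_{x\varepsilon,lj}\|_{\cS} \le (\mathbb E\|X_{(t+k)l}\|^2)^{1/2}(\mathbb E\|\varepsilon_{tj}\|^2)^{1/2} = O(1)$ this is $\lesssim c_1(p)$ uniformly in $j$. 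Combining the two via subadditivity of $x\mapsto x^{\tau}$ on the entrywise norms gives the first claimed bound $\max_j \sum_i \|\Sigma^{(k)}_{yy,ij}\|_{\cS}^{\tau} \lesssim c_1(p)$.

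For the row bound ($\max_i \sum_j$) the first summand is symmetric and again gives $\lesssim c_1(p)$ by the same computation with the roles of $i$ and $j$ swapped. The second summand is where Condition~\ref{con:xeps2}(i) enters: $\sum_j \big(\sum_l |A_{il}|\,\|\Sigma^{(k)}_{x\varepsilon,lj}\|_{\cS}\big)^{\tau} \le \sum_l |A_{il}|^{\tau} \sum_j \|\Sigma^{(k)}_{x\varepsilon,lj}\|_{\cS}^{\tau} \le \big(\sum_l L^{\tau}\big)\, c_2(p) \lesssim c_2(p)$ uniformly in $i$, using $\max_l \sum_j \|\Sigma^{(k)}_{x\varepsilon,lj}\|_{\cS}^{\tau} \le c_2(p)$ and $r$ fixed. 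Adding the two bounds yields $\max_i \sum_j \|\Sigma^{(k)}_{yy,ij}\|_{\cS}^{\tau} \lesssim c_1(p) + c_2(p)$, completing the proof. The only mild subtlety — not really an obstacle — is bookkeeping the absolute constants: one must be careful that the suppressed constants depend only on $r$, $L$, $k_0$, and the $O(1)$ moment bounds, all of which are fixed, so that the $\lesssim$ notation is legitimate and uniform over $k \in \{1,\dots,k_0\}$. No high-dimensional probabilistic input is needed here; this is a purely deterministic consequence of the model decomposition and the sparsity classes.
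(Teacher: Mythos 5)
Your proposal is correct and follows essentially the same route as the paper's proof: the entrywise decomposition from (\ref{eq:cov-relation}), the subadditivity $(a+b)^{\tau}\le a^{\tau}+b^{\tau}$, the Cauchy--Schwarz moment bounds $\|\Sigma^{(k)}_{xx,mm'}\|_{\cS}=O(1)$ and $\max_j\|\Sigma^{(k)}_{x\varepsilon,mj}\|_{\cS}=O(1)$, and then Condition~\ref{con:colsparse} for the column bound and Condition~\ref{con:xeps2}(i) for the extra $c_2(p)$ term in the row bound. Packaging the first summand through $\|\ba_i\|_1\|\ba_j\|_1$ rather than the explicit double sum over $m,m'$ is only a cosmetic difference.
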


\begin{proof}[Proof of Lemma~\ref{lem:sparse}]
Denote by $\bA = (A_{ij})_{p \times r}$. By the decomposition (\ref{eq:cov-relation}) and the fact $(x+y)^\tau \leq x^{\tau}+y^\tau$ for $x,y \geq 0$ and $\tau \in [0,1),$ we have that
    \begin{equation*}\label{lem:sparse-pf1}
        \begin{aligned}
            \big\|\Sigma^{(k)}_{yy,ij}\big\|^{\tau}_{\cS}&\leq \Big\|\sum_{1\leq m,m'\leq r} A_{im}\Sigma^{(k)}_{xx,mm'}A_{jm'}\Big\|_{\cS}^{\tau} + \Big\|\sum_{m=1}^r A_{im}\Sigma^{(k)}_{x\epsilon,mj}\Big \|_{\cS}^{\tau}\\
            &\leq \sum_{1\leq m,m'\leq r}\big\|A_{im}\Sigma^{(k)}_{xx,mm'}A_{jm'}\big\|_{\cS}^{\tau} + \sum_{m=1}^r\big\|A_{im}\Sigma^{(k)}_{x\epsilon,mj}\big\|_{\cS}^{\tau}\\
            &= \sum_{1\leq m,m'\leq r}\big|A_{im}A_{jm'}\big|^{\tau}\big\|\Sigma^{(k)}_{xx,mm'}\big\|_{\cS}^{\tau} + \sum_{m=1}^r\big|A_{im}\big|^{\tau}\big\|\Sigma^{(k)}_{x\epsilon,mj}\big\|_{\cS}^{\tau},
        \end{aligned}
    \end{equation*}
    where, under Conditions~\ref{con:x} and \ref{con:eps}, it holds that
    \begin{equation}
    \label{bd.moment}
    \begin{split}
    \big\|\Sigma^{(k)}_{xx,mm'}\big\|_{\cS}^2 =& \iint [E \{X_{(t+k)m}(u)X_{t,m'}(v)\}]^2\,{\rm d}u{\rm d}v \\
    \leq& E\{\|X_{(t+k)m}\|_{2}^2\}\, E(\|X_{tm'}\|_{2}^2) =O(1),\\
    \max\limits_{1\leq j\leq p}\big\|\Sigma^{(k)}_{x\epsilon,mj}\big\|_{\cS}^2 \leq& E\{\|X_{(t+k)m}\|_{2}^2\}\max\limits_{1\leq j\leq p}E(\|\epsilon_{tj}\|_{2}^2) =O(1).
    \end{split}
    \end{equation}
    
    For each $j$, summing over $i$ and applying (\ref{bd.moment})  under Condition~\ref{con:colsparse} with fixed $L$ and $r,$ we obtain that
    \begin{equation*}
        \begin{aligned}
            \max\limits_{1\leq j\leq p}\sum_{i=1}^p\big\|\Sigma^{(k)}_{yy,ij}\big\|^{\tau}_{\cS}
            &\leq \max\limits_{1\leq j\leq p}\sum_{1\leq m,m'\leq r}\sum_{i=1}^p\Big\{|A_{im}A_{jm'}|^{\tau}\|\Sigma^{(k)}_{xx,mm'}\|_{\cS}^{\tau}\Big\} \\
            &~~~~+ \max\limits_{1\leq j\leq p}\sum_{m=1}^r\sum_{i=1}^p\Big\{|A_{im}|^{\tau}\|\Sigma^{(k)}_{x\epsilon,mj}\|_{\cS}^{\tau}\Big\}\\
            &\leq c_1(p)\Big\{r^2\cdot\max\limits_{\substack{1\leq j\leq p,\\ 1\leq m,m' \leq r}}|A_{jm'}|^{\tau}\|\Sigma^{(k)}_{xx,mm'}\|_{\cS}^{\tau} + r\cdot\max\limits_{\substack{1\leq j\leq p,\\ 1\leq m\leq r}} \|\Sigma^{(k)}_{x\epsilon,mj}\|_{\cS}^{\tau}\Big\} \lesssim c_1(p).
        \end{aligned}
    \end{equation*}
    By the similar arguments above and Condition~\ref{con:xeps2},
        \begin{equation*}
        \begin{aligned}
            \max\limits_{1\leq i\leq p}\sum_{j=1}^p\big\|\Sigma^{(k)}_{yy,ij}\big\|^{\tau}_{\cS}
            &\leq \max\limits_{1\leq i\leq p}\sum_{1\leq m,m'\leq r}\sum_{j=1}^p\Big\{|A_{im}A_{jm'}|^{\tau}\|\Sigma^{(k)}_{xx,mm'}\|_{\cS}^{\tau}\Big\}\\
            &~~~~+ \max\limits_{1\leq i\leq p}\sum_{m=1}^r\sum_{j=1}^p\Big\{|A_{im}|^{\tau}\|\Sigma^{(k)}_{x\epsilon,mj}\|_{\cS}^{\tau}\Big\}\\
            &\leq c_1(p)r^2 \cdot\max_{\substack{1\leq i\leq p,\\ 1\leq m,m'\leq r}}|A_{im}|^{\tau}\|\Sigma^{(k)}_{xx,mm'}\|_{\cS}^{\tau} + c_2(p)r\cdot \max\limits_{\substack{1\leq i\leq p,\\ 1\leq m\leq r}} |A_{im}|^\tau\\
            &\lesssim c_1(p) + c_2(p).
        \end{aligned}
    \end{equation*}
Hence we complete the proof of this lemma. 
\end{proof}

\begin{lemma}\label{lem:intnorm}
    For a measurable function $f: \mathcal D \rightarrow \mathcal B$, where $\mathcal D\in \mathbb R^p$ is a compact set and $\mathcal B$ is a separable Banach space with norm $\|\cdot\|$, if the norm of $f$ has finite integral, then $\|\sint_{\mathcal D}f(x)\mathrm d x\|\leq \sint_{\mathcal D}\|f(x)\|\mathrm d x$.
\end{lemma}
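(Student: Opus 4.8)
The plan is to reduce this vector-valued (Bochner) integral inequality to an elementary scalar estimate by a duality argument, exactly as the remark ``a direct corollary of Hahn--Banach theorem'' suggests. Write $I = \int_{\mathcal D} f(x)\,\mathrm d x \in \mathcal B$. If $I = \mathbf{0}$ the inequality is trivial, so assume $I \neq \mathbf{0}$. By the Hahn--Banach theorem there exists a bounded linear functional $\phi \in \mathcal B^{*}$ with $\|\phi\|_{\mathcal B^{*}} = 1$ and $\phi(I) = \|I\|$ (the norm-attaining functional at $I$).

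The key step is the interchange $\phi\big(\int_{\mathcal D} f(x)\,\mathrm d x\big) = \int_{\mathcal D} \phi\big(f(x)\big)\,\mathrm d x$, which is precisely the defining linearity/continuity property of the Bochner integral (a special case of the fact that a bounded linear operator commutes with the Bochner integral). Granting this, I would conclude with two elementary scalar estimates:
\[
\|I\| \;=\; \phi(I) \;=\; \int_{\mathcal D} \phi\big(f(x)\big)\,\mathrm d x \;\leq\; \int_{\mathcal D} \big|\phi\big(f(x)\big)\big|\,\mathrm d x \;\leq\; \int_{\mathcal D} \|\phi\|_{\mathcal B^{*}} \, \|f(x)\|\,\mathrm d x \;=\; \int_{\mathcal D} \|f(x)\|\,\mathrm d x ,
\]
using $|\phi(v)| \leq \|\phi\|_{\mathcal B^{*}}\|v\|$ in the penultimate step and $\|\phi\|_{\mathcal B^{*}} = 1$ at the end.

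The only point requiring a word of care is the interplay of measurability and integrability: one needs $f$ to be (strongly) measurable with $\int_{\mathcal D}\|f(x)\|\,\mathrm d x < \infty$ for the Bochner integral to exist, in which case $x \mapsto \phi(f(x))$ is scalar-measurable and integrable, so the right-hand side is finite and the interchange above is justified. In all our applications $\mathcal B$ is a separable Hilbert space (such as $\mathbb H \otimes \mathbb H$ or $\mathbb R^{p}$) and the integrand is continuous in $x$ over the compact domain ($\mathcal D = \mathcal U$ or $\mathcal U^{2}$), so these hypotheses hold automatically, and if $\int_{\mathcal D}\|f(x)\|\,\mathrm d x = \infty$ the asserted bound is vacuous. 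I do not anticipate any genuine obstacle — the statement is a textbook fact — so the write-up amounts to the Hahn--Banach selection, the linearity of the integral, and the two elementary inequalities displayed above.
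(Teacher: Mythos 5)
Your proof is correct and follows exactly the route the paper intends: the paper offers no written proof, merely labelling the lemma ``a direct corollary of the Hahn--Banach theorem,'' and your argument (norming functional, commuting a bounded functional with the Bochner integral, two scalar estimates) is the standard way to make that remark precise. No gaps; your measurability/integrability caveat is handled appropriately.
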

\begin{proof}
    Let $\nu=\int_{\cD}f(x){\rm d}x\in\cB,\cB^*$ is the dual space of $\cB$ and $\Vert\cdot\Vert_*$ is the dual norm. By the Hahn-Banach theorem, there exists $\phi\in\cB^*$ such that $\Vert\phi\Vert_*=1$ and $\Vert\phi(\nu)\Vert_*=\Vert\nu\Vert=\big\Vert\int_{\cD}f(x){\rm d}x\big\Vert.$ Thus, we have
    $$
    \left\Vert\int_{\cD}f(x){\rm d}x\right\Vert=\left\Vert\phi\Big(\int_{\cD}f(x){\rm d}x\Big)\right\Vert_*=\left\Vert\int_{\cD}\phi\{f(x)\}{\rm d}x\right\Vert_*\le\int_{\cD}\Vert\phi\{f(x)\}\Vert_*{\rm d}x\le \int_{\cD}\Vert f(x)\Vert{\rm d}x,
    $$
    where the second equality follows from that the dual space $\cB^*$ of $\cB$ is a Banach space (also a linear space) and the last inequality follows from $\Vert\phi(f)\Vert_*\le\Vert\phi\Vert_*\cdot\Vert f\Vert=\Vert f\Vert$.
\end{proof}

\color{black}

The third lemma specifies the order of $\bSigma^{(k)}_{xx}(u,v)$.
\begin{lemma}\label{lem:xeps}
    Suppose that Condition~\ref{con:x} holds, then for $ k=1,\dots,k_0,$ $\iint\|\bSigma^{(k)}_{xx}(u,v)\|^2 \, {\rm d}u{\rm d}v=O(1)$ and $\|\iint \bSigma^{(k)}_{xx}(u,v)\, {\rm d}u{\rm d}v\|=O(1).$
\end{lemma}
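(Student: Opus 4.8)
The plan is to reduce everything to the scalar entries of $\bSigma^{(k)}_{xx}$ and a single application of the Cauchy--Schwarz inequality, exploiting that $r$ is fixed. Write $\bSigma^{(k)}_{xx}(u,v) = \big(\Sigma^{(k)}_{xx,mm'}(u,v)\big)_{r\times r}$ with $\Sigma^{(k)}_{xx,mm'}(u,v) = \cov\{X_{(t+k)m}(u), X_{tm'}(v)\}$. Since $r$ is a fixed integer, the spectral norm of any $r\times r$ matrix is comparable to its Frobenius norm, so $\|\bSigma^{(k)}_{xx}(u,v)\|^2 \asymp \sum_{m,m'=1}^r \Sigma^{(k)}_{xx,mm'}(u,v)^2$ uniformly in $(u,v)$, and it suffices to control each $\|\Sigma^{(k)}_{xx,mm'}\|_{\cS}^2 = \iint \Sigma^{(k)}_{xx,mm'}(u,v)^2\,\mathrm du\mathrm dv$.

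For the first claim, I would apply the Cauchy--Schwarz inequality in the probability space: for each fixed $(u,v)$,
$$\Sigma^{(k)}_{xx,mm'}(u,v)^2 \le \var\{X_{(t+k)m}(u)\}\,\var\{X_{tm'}(v)\} \le \eE\{X_{(t+k)m}(u)^2\}\,\eE\{X_{tm'}(v)^2\}.$$
Integrating over $(u,v)\in\cU^2$ and using Tonelli's theorem (the integrand is nonnegative) gives $\|\Sigma^{(k)}_{xx,mm'}\|_{\cS}^2 \le \eE\|X_{(t+k)m}\|^2\,\eE\|X_{tm'}\|^2$. By Condition~\ref{con:x}(i) and Lyapunov's (or Jensen's) inequality, $\eE\|X_{tj}\|^2 \le \big(\eE\|X_{tj}\|^4\big)^{1/2} = O(1)$ for each $j$, and weak stationarity makes this bound independent of $t$ (and of $k$). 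Summing over the $r^2$ pairs $(m,m')$ yields $\iint\|\bSigma^{(k)}_{xx}(u,v)\|^2\,\mathrm du\mathrm dv = O(1)$ for $k=1,\dots,k_0$.

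For the second claim, I would invoke Lemma~\ref{lem:intnorm} (the Hahn--Banach/Minkowski integral inequality) to move the norm inside the integral, $\big\|\iint \bSigma^{(k)}_{xx}(u,v)\,\mathrm du\mathrm dv\big\| \le \iint \|\bSigma^{(k)}_{xx}(u,v)\|\,\mathrm du\mathrm dv$, and then apply the Cauchy--Schwarz inequality over the finite-measure domain $\cU^2$: the right-hand side is at most $|\cU|\,\big(\iint\|\bSigma^{(k)}_{xx}(u,v)\|^2\,\mathrm du\mathrm dv\big)^{1/2} = O(1)$ by the first part, since $\cU$ is a compact interval. (Alternatively one can argue entrywise, bounding $\big|\iint\Sigma^{(k)}_{xx,mm'}(u,v)\,\mathrm du\mathrm dv\big| \le |\cU|\,\|\Sigma^{(k)}_{xx,mm'}\|_{\cS}$ and using that $r$ is fixed.)

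There is no serious obstacle here: the argument is a routine combination of Cauchy--Schwarz, Tonelli, and the finiteness of $\cU$. The only points requiring a little care are (i) passing from the fourth-moment hypothesis in Condition~\ref{con:x}(i) to the second-moment bounds actually needed, which is immediate by Lyapunov's inequality, and (ii) justifying the spectral-versus-Frobenius norm equivalence and the norm--integral interchange, both standard because $r$ is fixed and $\cU$ has finite measure. This lemma is essentially the $\bX$-side analogue of the moment bounds recorded in (\ref{bd.moment}) within the proof of Lemma~\ref{lem:sparse}.
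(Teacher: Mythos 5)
Your proposal is correct and follows essentially the same route as the paper's proof: bound the spectral norm by the (entrywise) Frobenius norm, apply Cauchy--Schwarz in the probability space together with Fubini/Tonelli and the moment bound from Condition~\ref{con:x}(i), and for the second claim use Lemma~\ref{lem:intnorm} followed by Cauchy--Schwarz over the finite-measure domain $\cU^2$. The only cosmetic differences are that you pass through variances and invoke Lyapunov's inequality explicitly, steps the paper leaves implicit.
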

\begin{proof}[Proof of Lemma~\ref{lem:xeps}]
By Cauchy-Schwartz inequality, Condition~\ref{con:x} and Fubini Theorem,
\begin{equation*}\label{lem:xeps-pf1}
\begin{aligned}
    \iint\|\bSigma^{(k)}_{xx}(u,v)\|^2\,{\rm d}u{\rm d}v &\leq \iint\| \eE\{\bX_{t+k}(u) \bX_t(v)^{\T}\}\|_\tF^2 \,{\rm d}u{\rm d}v\\
    &= \sum_{i=1}^r \sum_{j=1}^r \iint \big[{\mathbb E}\{ X_{(t+k)i}(u)X_{tj}(v)\}\big]^2 \,{\rm d}u{\rm d}v\\
    &\leq \Big\{\sum_{i=1}^r {\mathbb E}\big(\| X_{(t+k)i}\|_{2}^2\big) \Big\}\Big\{\sum_{j=1}^r {\mathbb E}\big(\|X_{tj}\|_{2}^2\big)\Big\} = O(1).
\end{aligned}
\end{equation*}
On the other hand, by Lemma~\ref{lem:intnorm}, we have that
\[
    \left\|\iint \bSigma^{(k)}_{xx}(u,v)\,{\rm d}u{\rm d}v\right\|^2\leq \Big(\iint \big\|\bSigma^{(k)}_{xx}(u,v)\big\|\,{\rm d}u{\rm d}v\Big)^2\leq  \text{Vol}(\cU\times\cU)\iint\big\|\bSigma^{(k)}_{xx}(u,v)\big\|^2 \,{\rm d}u{\rm d}v = O(1),
\] 
where $\text{Vol}(\cdot)$ represents the volume.
\end{proof}

The fourth lemma provides the non-asymptotic error bound on $\widehat\Sigma_{yy,ij}^{(k)}(u,v).$ 

\begin{lemma}\label{lem:nonasym}
Suppose that Conditions~\ref{con:flp} and \ref{con:fsm} hold. Then there exists some positive constants $c_1$ and $c_2$ such that for $k=1, \dots, k_0$ and $t>0,$ 
\begin{equation*}
    \label{max.rate}
    \mathbb P\left(\max\limits_{1\leq i,j\leq p} \big\|\widehat{\Sigma}_{yy,ij}^{(k)}-\Sigma_{yy,ij}^{(k)}\big\|_{\cS}>\mathcal M_{y} t \right)\leq c_1p^2\exp\big\{-c_2n\min(t,t^2)\big\},
\end{equation*}
If $n\geq \rho^2\log p$ with $\rho>\sqrt 2c_2^{-1/2},$ then 
\begin{equation*}
    \label{max.rate-eq}
    \max\limits_{1\leq i,j\leq p} \big\|\widehat{\Sigma}_{yy,ij}^{(k)}-\Sigma_{yy,ij}^{(k)}\big\|_{\cS}\leq \rho \mathcal M_{y}\sqrt{\frac{\log p}{n}}
\end{equation*}
holds with probability greater than $1-c_1p^{2-c_2\rho^2}$.
\end{lemma}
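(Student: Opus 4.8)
The plan is to prove the single-entry tail bound and then take a union bound over the $p^2$ pairs $(i,j)$. Fix $k\in\{1,\dots,k_0\}$ and a pair $(i,j)$, and regard $D:=\widehat\Sigma^{(k)}_{yy,ij}-\Sigma^{(k)}_{yy,ij}$ as a random element of $\eH\otimes\eH$ with norm $\|\cdot\|_{\cS}$. Expanding the estimator in (\ref{eq:est-cov}), $D(u,v)=\frac{1}{n-k}\sum_{t=1}^{n-k}\xi_t(u,v)+R(u,v)$, where $\xi_t(u,v)=Y_{(t+k)i}(u)Y_{tj}(v)-{\mathbb E}\{Y_{(t+k)i}(u)Y_{tj}(v)\}$ and $R$ collects the sample-mean corrections. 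Each factor appearing in $R$ is a partial average of a sub-Gaussian process, so a routine argument (sub-Gaussian concentration of $\|\bar Y_i\|$ uniformly over $i$, together with $\int_{\cU}\Sigma^{(0)}_{yy,ii}(u,u)\,{\rm d}u=O(1)$, which follows from Condition~\ref{con:flp}(iii) via the filter) gives $\|R\|_{\cS}=O_P\big(\cM_y(\log p)/n\big)$ uniformly over $(i,j)$; this is of strictly smaller order than the target rate $\cM_y\sqrt{(\log p)/n}$ and hence negligible. It therefore suffices to control $\|\frac{1}{n-k}\sum_t\xi_t\|_{\cS}$ for a fixed pair and then union-bound.

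The core is a Bernstein-type inequality for this weakly dependent, Hilbert-valued, heavy-tailed average, adapting the arguments behind the analogous bounds in \cite{guo2020consistency} and \cite{fang2020new}, in three sub-steps. \emph{(a) Per-summand tail.} Insert the linear-process representation $\bY_t=\sum_{l\ge0}\bB_l(\bvarepsilon_{t-l})$ of Condition~\ref{con:flp}(i); since the innovations are independent, mean zero and sub-Gaussian in the sense of Definition~\ref{def-subgauss} and $\sum_l\|B_{l,ij}\|_{\cS,\infty}=O(1)$, each $Y_{ti}(\cdot)$ is a sub-Gaussian process, so $\|Y_{ti}\|$ is sub-Gaussian with an $O(1)$ parameter and $\|\xi_t\|_{\cS}\le\|Y_{(t+k)i}\|\,\|Y_{tj}\|+\|\Sigma^{(k)}_{yy,ij}\|_{\cS}$ is sub-exponential with parameter $O(1)$. \emph{(b) Long-run variance via the stability measure.} For fixed unit $f,g\in\eH$, the scalar stationary sequence $\langle f,Y_{(t+k)i}\rangle\langle g,Y_{tj}\rangle$ has a spectral density dominated, uniformly over frequency, by a multiple of the spectral density operator $\bbf_{y,\theta}$ of $\{\bY_t(\cdot)\}$, so Condition~\ref{con:fsm} yields
\[\var\Big(\frac{1}{\sqrt n}\sum_{t=1}^n\langle f,Y_{(t+k)i}\rangle\,\langle g,Y_{tj}\rangle\Big)\;\lesssim\;\cM_y\,\big\langle f,\Sigma^{(0)}_{yy,ii}(f)\big\rangle\,\big\langle g,\Sigma^{(0)}_{yy,jj}(g)\big\rangle,\]
and summing over $f,g$ running through an orthonormal basis of $\eH$ gives ${\mathbb E}\|n^{-1/2}\sum_t\xi_t\|_{\cS}^2\lesssim\cM_y\,\tr(\Sigma^{(0)}_{yy,ii})\,\tr(\Sigma^{(0)}_{yy,jj})=O(\cM_y)$. \emph{(c) Assembly.} Truncate the moving-average expansion at a slowly growing lag (the tail summable by Condition~\ref{con:flp}(ii)), so the truncated $\xi_t$ form a finitely dependent array; combining (a), (b) and $\cM_y\ge1$ (so both the variance proxy $\lesssim\cM_y^2$ and the sub-exponential scale $\lesssim\cM_y$ share the scale $\cM_y$), a Hilbert-space Bernstein inequality for such arrays gives, for some $c_1',c_2>0$,
\[{\mathbb P}\Big(\big\|\widehat\Sigma^{(k)}_{yy,ij}-\Sigma^{(k)}_{yy,ij}\big\|_{\cS}>\cM_y t\Big)\;\le\;c_1'\exp\{-c_2 n\min(t,t^2)\},\]
and a union bound over the $p^2$ pairs (absorbing $k_0$ into constants) gives the first assertion.

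For the second assertion, take $t=\rho\sqrt{(\log p)/n}$; since $n\ge\rho^2\log p$ we have $t\le1$, so $\min(t,t^2)=t^2=\rho^2(\log p)/n$ and the first bound becomes $c_1p^2\exp\{-c_2\rho^2\log p\}=c_1p^{2-c_2\rho^2}$, which is $o(1)$ precisely when $\rho>\sqrt 2\,c_2^{-1/2}$; on the complementary event the claimed uniform bound holds. The main obstacle is sub-step (b): obtaining $\cM_y$, rather than a cruder mixing-type dependence coefficient, as the governing constant requires the Fourier-analytic comparison between the spectral density of the bilinear sequence $\{\langle f,Y_{(t+k)i}\rangle\langle g,Y_{tj}\rangle\}$ and that of $\{\bY_t(\cdot)\}$, followed by a careful passage from these scalar projections to the full Hilbert--Schmidt norm without losing sharpness; this is the technical heart of the functional-stability-measure machinery, and the truncation in (c) must be coordinated with it so that the approximation error stays below $\cM_y\sqrt{(\log p)/n}$.
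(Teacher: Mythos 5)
The paper does not prove this lemma from first principles: its entire proof is a one-line appeal to the cited works of Guo and Qiao (2021) and Fang, Guo and Qiao (2021), where exactly this concentration inequality for entrywise sample autocovariance functions of sub-Gaussian functional linear processes is established, followed by the observation that $t=\rho\sqrt{\log p/n}\le 1$ under $n\ge\rho^2\log p$, so $\min(t,t^2)=t^2$ and the union-bounded tail becomes $c_1p^{2-c_2\rho^2}$. Your final paragraph (choice of $t$, union bound over the $p^2$ pairs, absorbing $k_0$ into constants) reproduces correctly the only part of the argument the paper itself carries out. The first part of your proposal is an attempt to re-derive the imported concentration inequality, and there the sketch has genuine gaps.

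Concretely: the claim is non-asymptotic and must hold for every $t>0$, but two of your steps are only asymptotic. You dispose of the sample-mean correction $R$ with an $O_P\{\cM_y(\log p)/n\}$ statement; to preserve the stated tail bound for all $t>0$ you need an exponential tail bound for $\max_{i,j}\|R\|_{\cS}$ of the same Bernstein form, not an order comparison. More seriously, step (c) truncates the moving-average expansion at a ``slowly growing'' lag, but Condition~\ref{con:flp}(ii) only gives $\sum_{l}\|B_{l,ij}\|_{\cS,\infty}=O(1)$ with no rate on the tail $\sum_{l>L}\|B_{l,ij}\|_{\cS,\infty}$; hence no choice of truncation level yields a quantitative finite-sample bound on the approximation error, and the truncation-plus-finitely-dependent-Bernstein route cannot deliver the inequality as stated. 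The cited references avoid this by keeping the full linear process and using a Hanson--Wright-type concentration for Hilbert-valued quadratic forms in the innovation sequence, with $\cM_y$ entering through an operator-norm bound on the long-run covariance of the stacked process. Finally, the variance bound in step (b) --- domination of the spectral density of the bilinear sequence $\langle f,Y_{(t+k)i}\rangle\langle g,Y_{tj}\rangle$ by $\cM_y$ times marginal covariance functionals --- is precisely the technical heart you yourself flag; as written it is asserted rather than proved, and it does not follow routinely from Condition~\ref{con:fsm}, which controls only linear functionals of $\bY_t(\cdot)$. So your proposal assembles the right ingredients but does not close the argument; within the logic of this paper the lemma is simply imported from the cited references, and a self-contained proof would have to supply the quadratic-form concentration step that your sketch leaves open.
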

\begin{proof}[Proof of Lemma~\ref{lem:nonasym}]
    This lemma follows directly from \textcolor{blue}{Guo and Qiao (2023)}; \textcolor{blue}{Fang et al. (2022)} and the choice of $t=\rho (\log p/n)^{1/2}\leq 1,$ and hence the proof is omitted here. 
\end{proof}

The lemma below generalizes the following inequality between matrix norms
\begin{equation}
\label{mat.ineq}
\|\bE\|^2\leq \|\bE\|_{\infty}\|\bE\|_1~~\text{for any}~~\bE \in {\mathbb R}^{p \times p} 
\end{equation}
to the functional domain and is used in the proof of Theorem~\ref{thm:hdeigspace}. 

\begin{lemma}\label{lem:int-mat}
For $\bS(u,v)=\{S_{ij}(u,v)\}_{p \times p}$ and $\bT(u,v)=\{T_{ij}(u,v)\}_{p \times p}$ with each $S_{ij}$ and $T_{ij} \in L_2(\cU\times\cU),$ it holds that
        $$\Big\|\iint  \bS(u,v)\bT^{\T}(u,v)\,{\rm d}u{\rm d}v\Big\|\leq \Big(\|\bS\|_{\cS,\infty}  \|\bS\|_{\cS,1}\Big)^{1/2}\Big(\|\bT\|_{\cS,\infty}  \|\bT\|_{\cS,1}\Big)^{1/2}.$$
\end{lemma}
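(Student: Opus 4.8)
\textbf{Proof plan for Lemma~\ref{lem:int-mat}.}

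The plan is to bound the operator norm of $\iint \bS(u,v)\bT^\T(u,v)\,{\rm d}u{\rm d}v$ by passing to a bilinear form with unit test vectors and then controlling the resulting scalar double sum using a Cauchy--Schwarz / Schur-type argument. First I would write, for arbitrary unit vectors $\ba,\bbb\in\eR^p$,
\[
\Big|\ba^\T\Big\{\iint \bS(u,v)\bT^\T(u,v)\,{\rm d}u{\rm d}v\Big\}\bbb\Big|
= \Big|\iint \ba^\T \bS(u,v)\bT^\T(u,v)\bbb\,{\rm d}u{\rm d}v\Big|
\le \iint \sum_{i,j,l} |a_i|\,|S_{ij}(u,v)|\,|T_{lj}(u,v)|\,|b_l|\,{\rm d}u{\rm d}v,
\]
where I have expanded the quadratic form entrywise (the inner index $j$ is summed because of the transpose in $\bT^\T$). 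Applying the Cauchy--Schwarz inequality in $L_2(\cU^2)$ to each pair $(S_{ij}, T_{lj})$ replaces $\iint |S_{ij}||T_{lj}|$ by $\|S_{ij}\|_{\cS}\|T_{lj}\|_{\cS}$, leaving the scalar bound $\sum_{i,j,l}|a_i|\,\|S_{ij}\|_{\cS}\,\|T_{lj}\|_{\cS}\,|b_l|$.

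Next I would recognize this as a bilinear form $\ba^\T \bS^{\cS}(\bT^{\cS})^\T \bbb$ where $\bS^{\cS}=(\|S_{ij}\|_{\cS})_{p\times p}$ and $\bT^{\cS}=(\|T_{ij}\|_{\cS})_{p\times p}$ are the nonnegative matrices of Hilbert--Schmidt norms. Taking the supremum over unit $\ba,\bbb$ gives $\|\bS^{\cS}(\bT^{\cS})^\T\| \le \|\bS^{\cS}\|\,\|\bT^{\cS}\|$. Then I invoke the standard Schur test (or the interpolation inequality $\|\bB\|\le \sqrt{\|\bB\|_1\|\bB\|_\infty}$ for any matrix $\bB$, which follows from Riesz--Thorin), noting that for the nonnegative matrix $\bS^{\cS}$ one has $\|\bS^{\cS}\|_1=\|\bS\|_{\cS,1}$ and $\|\bS^{\cS}\|_\infty=\|\bS\|_{\cS,\infty}$ by the definitions in the paper, hence $\|\bS^{\cS}\|\le(\|\bS\|_{\cS,\infty}\|\bS\|_{\cS,1})^{1/2}$, and similarly for $\bT^{\cS}$. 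Multiplying the two bounds yields the claimed inequality.

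The only genuinely delicate point is the very first manipulation: justifying that the operator norm of the integrated matrix is controlled by the pointwise-in-$(u,v)$ bilinear estimate, i.e., interchanging the supremum over test vectors with the integral and bounding $|\ba^\T\bS(u,v)\bT^\T(u,v)\bbb|$ entrywise before integrating. This is routine once one fixes $\ba,\bbb$ first and only then integrates (so no uniformity over $(u,v)$ in the norm is needed), but it is the step where care with the order of operations matters. Everything after that—Cauchy--Schwarz in $L_2(\cU^2)$ and the matrix interpolation/Schur bound—is standard. I would also remark that Lemma~\ref{lem:intnorm} (the Banach-valued triangle inequality for integrals) can be used to streamline the passage from the integral to the entrywise sum if one prefers to avoid writing out test vectors explicitly.
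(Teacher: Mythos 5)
Your proposal is correct and yields exactly the stated bound, but it is organized differently from the paper's proof. The paper works with the integrated product matrix $\bE=\iint\bS(u,v)\bT(u,v)^{\T}\,{\rm d}u{\rm d}v$ directly: it bounds $\|\bE\|_1\le\|\bS\|_{\cS,1}\|\bT\|_{\cS,\infty}$ and $\|\bE\|_{\infty}\le\|\bS\|_{\cS,\infty}\|\bT\|_{\cS,1}$ via the same entrywise Cauchy--Schwarz step in $L_2(\cU^2)$ that you use, and then applies the interpolation inequality $\|\bE\|^2\le\|\bE\|_1\|\bE\|_{\infty}$ a single time to the product. You instead pass to bilinear forms with fixed unit test vectors, dominate by the nonnegative matrices of Hilbert--Schmidt norms $\bS^{\cS}=(\|S_{ij}\|_{\cS})$ and $\bT^{\cS}=(\|T_{ij}\|_{\cS})$, use submultiplicativity $\|\bS^{\cS}(\bT^{\cS})^{\T}\|\le\|\bS^{\cS}\|\,\|\bT^{\cS}\|$, and then apply the $\ell_1$--$\ell_\infty$ interpolation (Schur/Riesz--Thorin) bound to each factor separately, noting $\|\bS^{\cS}\|_1=\|\bS\|_{\cS,1}$ and $\|\bS^{\cS}\|_\infty=\|\bS\|_{\cS,\infty}$. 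The ingredients are identical (entrywise Cauchy--Schwarz plus the $\|\cdot\|\le(\|\cdot\|_1\|\cdot\|_\infty)^{1/2}$ bound), but applied to different objects: the paper's route avoids introducing the auxiliary matrices $\bS^{\cS},\bT^{\cS}$ and needs no discussion of test vectors, while yours makes the reduction to a purely finite-dimensional matrix statement explicit and isolates the functional content in one Cauchy--Schwarz step; the point you flag as delicate (fixing $\ba,\bbb$ before integrating) is indeed harmless for exactly the reason you give, and in the paper it never arises because only the $\ell_1$ and $\ell_\infty$ norms of the integrated matrix are ever computed.
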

\begin{proof}[Proof of Lemma \ref{lem:int-mat}]
        Notice that
        \begin{equation}\label{lem:int-mat-pf1}
            \begin{aligned}
                \Big\|\iint \bS(u,v)\bT(u,v)^{\T}\,{\rm d}u{\rm d}v\Big\|_1&= \max\limits_{1\leq j\leq p}\sum_{i=1}^p\left|\iint \sum_{k=1}^p S_{ik}(u,v)T_{jk}(u,v) \, {\rm d}u {\rm d}v \right|\\
                &\leq \max\limits_{1\leq j\leq p}\sum_{i=1}^p\sum_{k=1}^p\|S_{ik}\|_{\cS}\|T_{jk}\|_{\cS}\\ 
                &\leq \Big\{\max\limits_{1\leq j\leq p}\sum_{k=1}^p\|T_{jk}\|_{\cS}\Big\} \Big\{\max\limits_{1\leq k\leq p}\sum_{i=1}^p\|S_{ik}\|_{\cS}\Big\} = \|\bS\|_{\cS,1}  \|\bT\|_{\cS,\infty}.
            \end{aligned}
        \end{equation}
        By the similar arguments, we obtain that
        \begin{equation}\label{lem:int-mat-pf2}
            \Big\|\iint \bS(u,v)\bT(u,v)^{\T} \,{\rm d}u{\rm d}v\Big\|_{\infty}\leq \|\bS\|_{\cS,\infty}  \|\bT\|_{\cS,1}.
        \end{equation}
        By setting $\bE=\iint \bS(u,v)\bT(u,v)^{\T}\,{\rm d}u{\rm d}v$, Lemma~\ref{lem:int-mat} follows immediately from (\ref{mat.ineq})--(\ref{lem:int-mat-pf2}).
\end{proof}

\section{Additional empirical results}
\label{supp.emp}

When the factors are weak with $\delta=0.5,$ Figure~\ref{fig3} plots average relative frequencies of $\hat r=r$ as the factor strength increases, supporting Section~\ref{sim:ordi}. Figure~\ref{fig4} plots the corresponding average estimation errors for $\mathcal C(\bA).$

\begin{figure}[!hbt]
\captionsetup[subfigure]{labelformat=empty}
\centering
\begin{subfigure}{0.32\linewidth}
  \caption{\scriptsize{$p=50$}}
  \vspace{-0.5cm}
  \includegraphics[width=5cm,height=4cm]{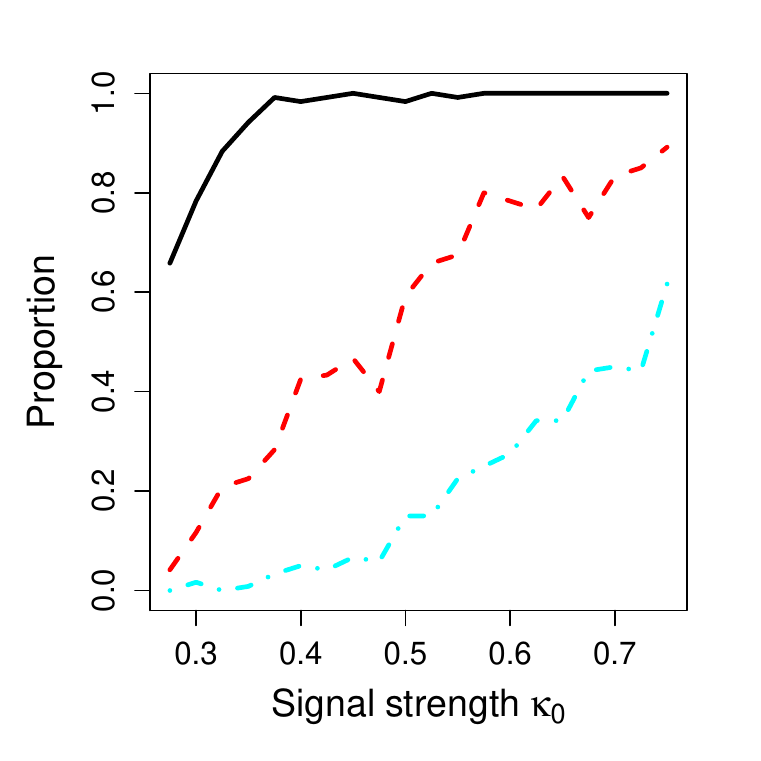}
  \label{fig:2-1-50}
  \vspace{-0.25cm}
  \caption{\scriptsize{$p=50$}}
  \vspace{-0.5cm}
  \includegraphics[width=5cm,height=4cm]{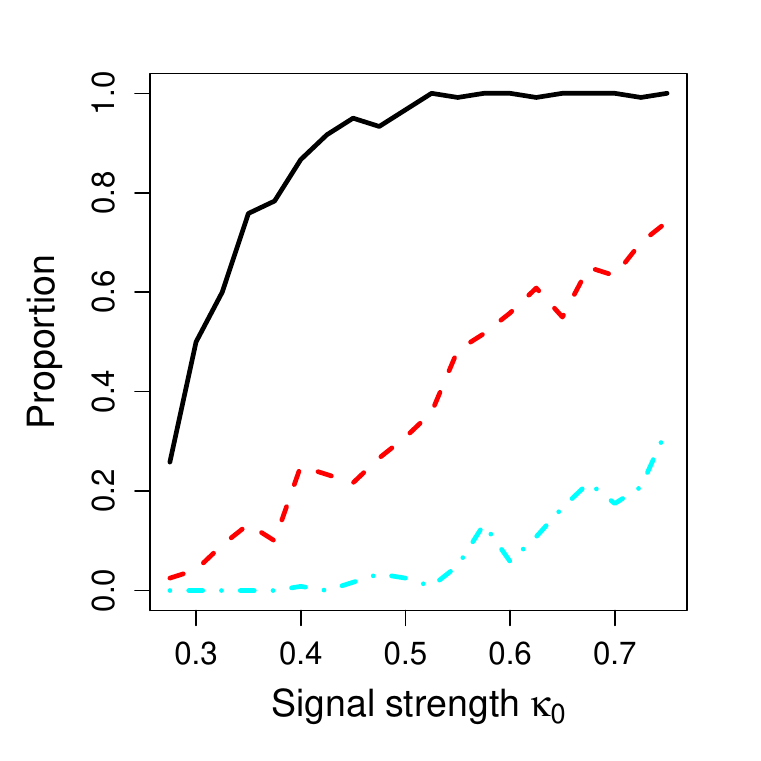}
  \label{fig:2-2-50}
  \vspace{-0.25cm}
  \caption{\scriptsize{$p=50$}}
  \vspace{-0.5cm}
  \includegraphics[width=5cm,height=4cm]{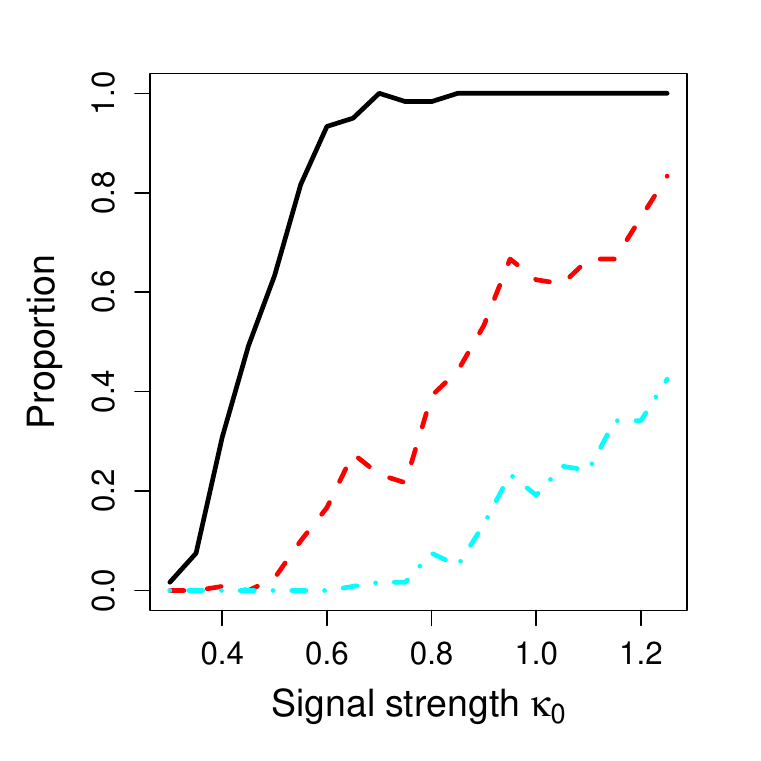}
  \label{fig:2-4-50}
\end{subfigure}
\begin{subfigure}{0.32\linewidth}
  \caption{\scriptsize{$p=100$}}
  \vspace{-0.5cm}
  \includegraphics[width=5cm,height=4cm]{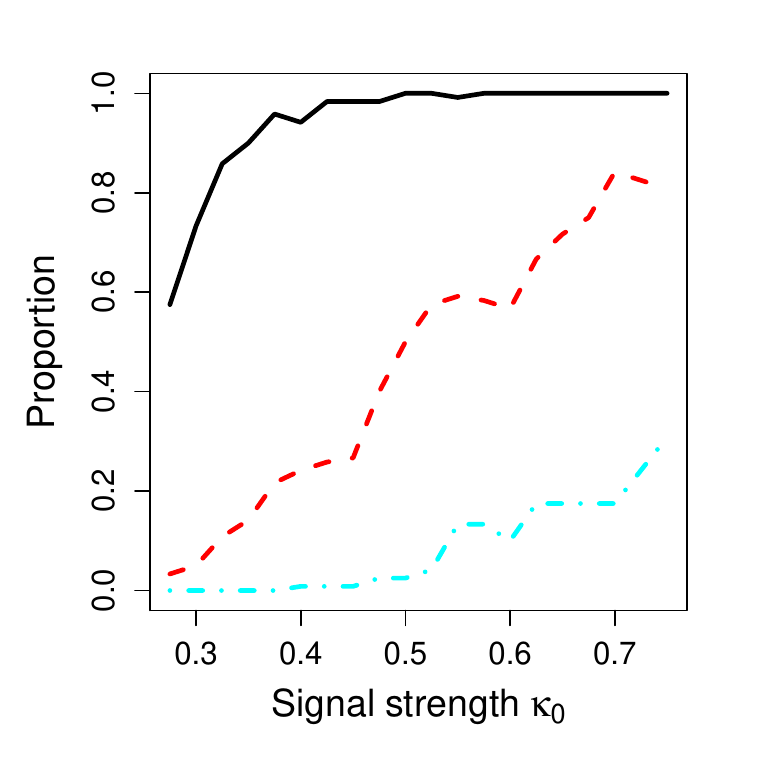}
  \label{fig:2-1-100}
  \vspace{-0.25cm}
  \caption{\scriptsize{$p=100$}} 
  \vspace{-0.5cm}
  \includegraphics[width=5cm,height=4cm]{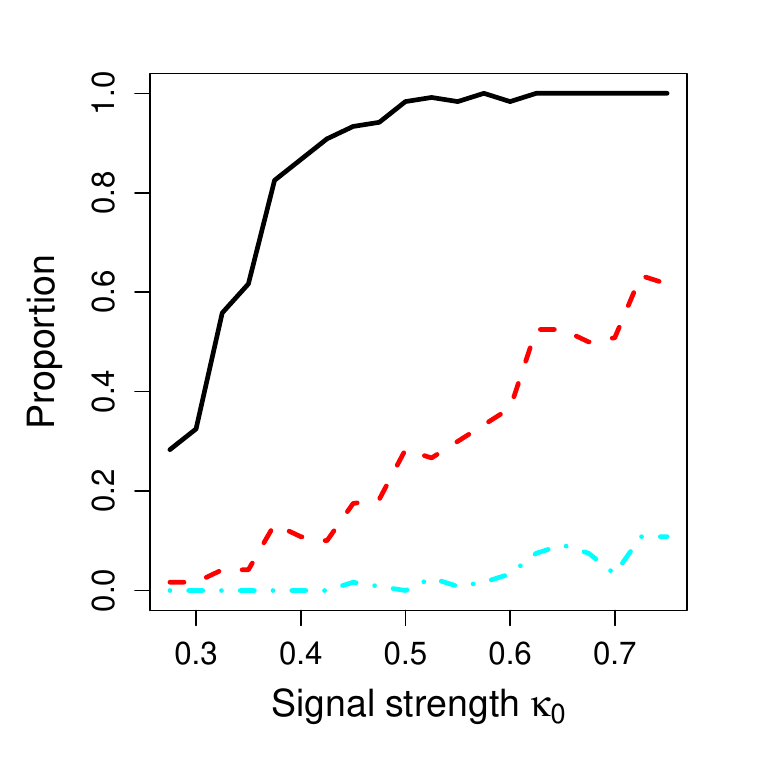}
  \label{fig:2-2-100}
  \vspace{-0.25cm}
  \caption{\scriptsize{$p=100$}} 
  \vspace{-0.5cm}
  \includegraphics[width=5cm,height=4cm]{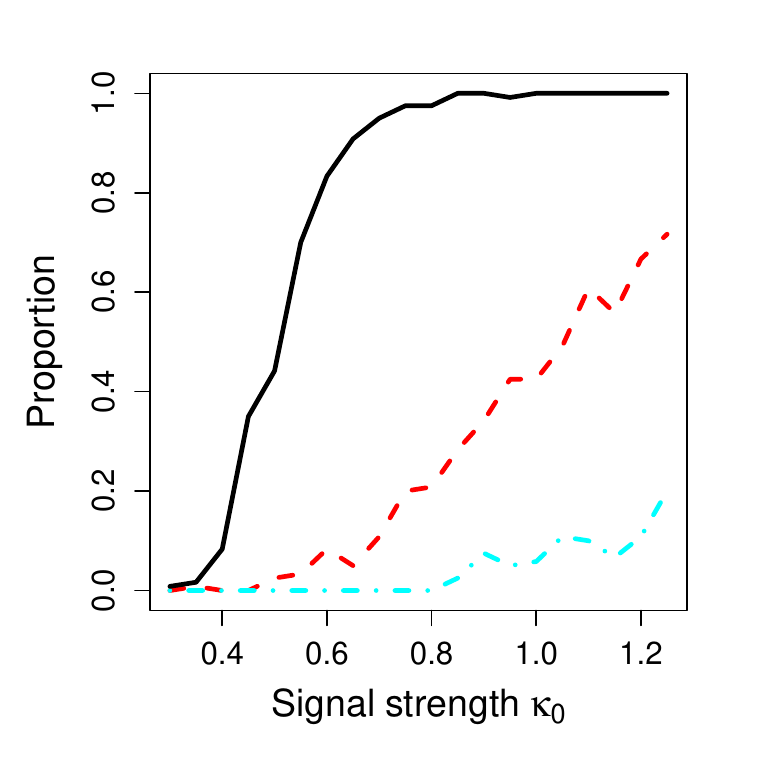}
  \label{fig:2-4-100}
\end{subfigure}
\begin{subfigure}{0.32\linewidth}
  \caption{\scriptsize{$p=200$}}
  \vspace{-0.5cm}
  \includegraphics[width=5cm,height=4cm]{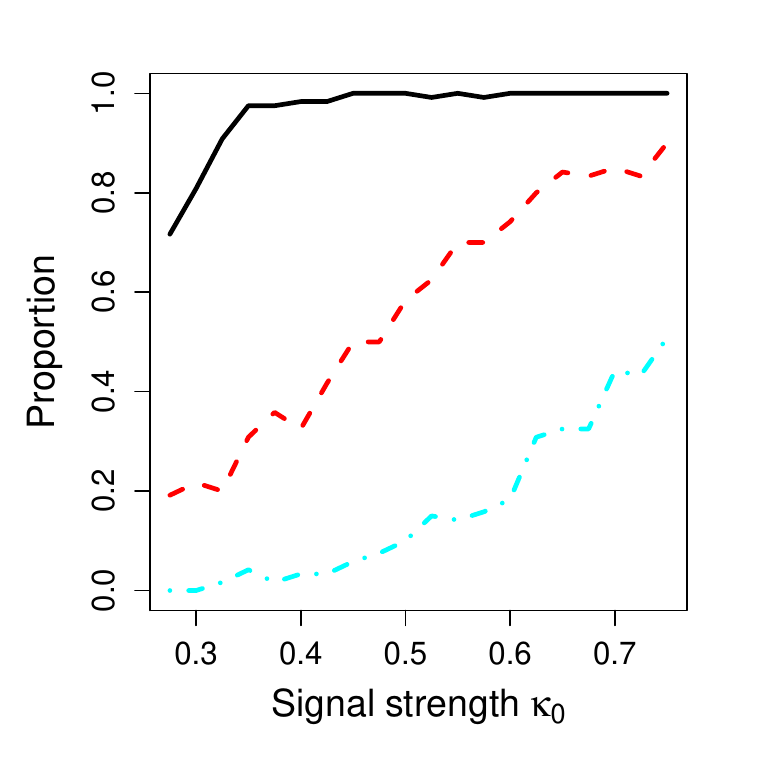}
  \label{fig:2-1-200}
  \vspace{-0.25cm}
  \caption{\scriptsize{$p=200$}}
  \vspace{-0.5cm}
  \includegraphics[width=5cm,height=4cm]{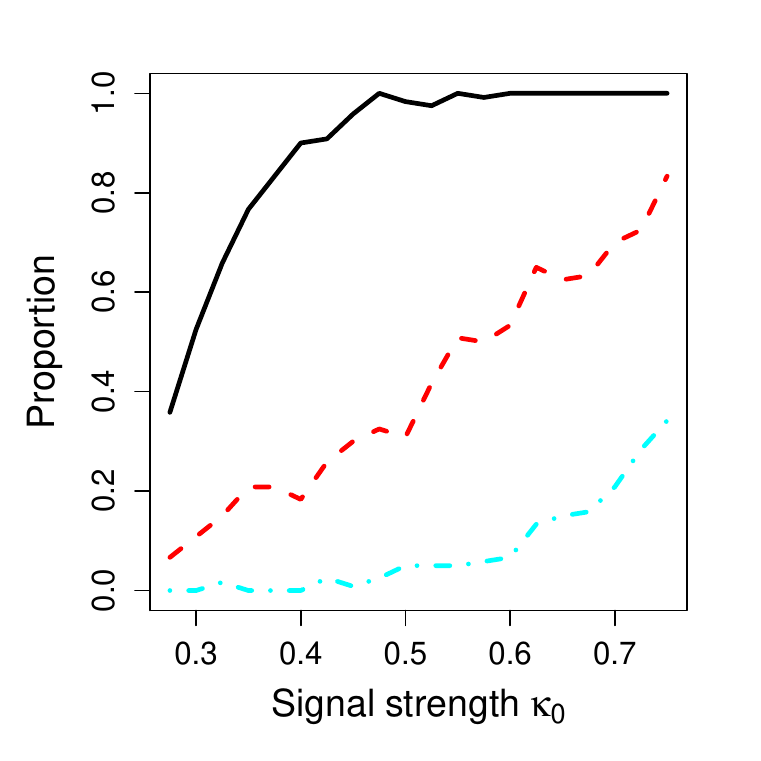}
  \label{fig:2-2-200}
  \vspace{-0.25cm}
  \caption{\scriptsize{$p=200$}}
  \vspace{-0.5cm}
  \includegraphics[width=5cm,height=4cm]{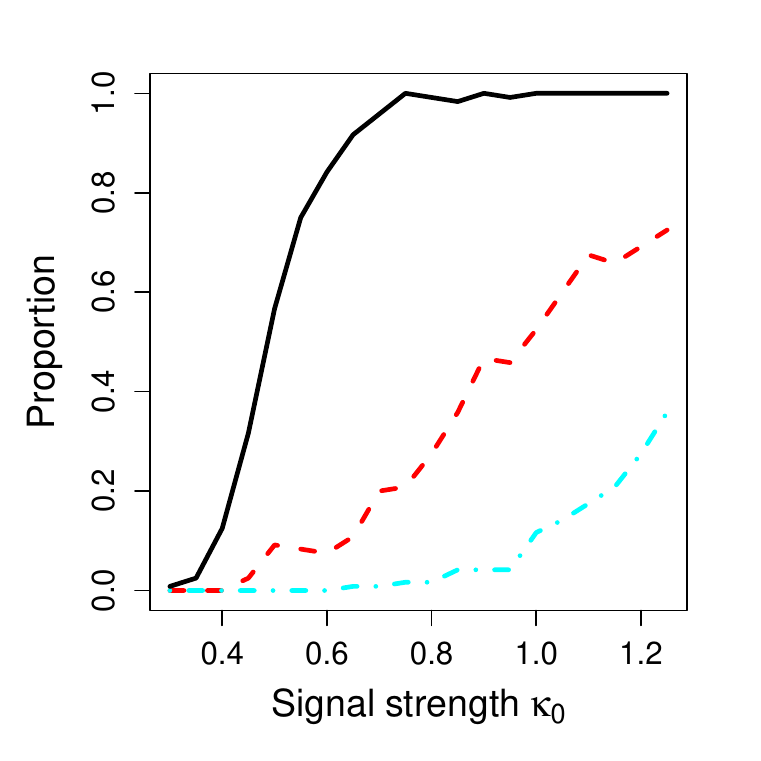}
  \label{fig:2-4-200}
\end{subfigure}
\centering
\caption{\label{fig3}{\it \small Scenario~1 (top row) Scenario~2 (middle row) and Scenario~4 (bottom row) for $p=50,$ $100$ and $200$ with correct $r$ when the factors are weak: 
Plots of relative frequency estimates for $\pr(\hat r = r)$ against $\kappa_0$ for three methods based on $\widehat\bM$ (black solid), $\widehat\bM_1$ (red dashed) and $\widehat\bM_2$ (cyan dash dotted).
}}
\end{figure}
\begin{figure}[!hbt]
\captionsetup[subfigure]{labelformat=empty}
\centering
\begin{subfigure}{0.32\linewidth}
  \caption{\scriptsize{$p=50$}}
  \vspace{-0.5cm}
  \includegraphics[width=5cm,height=4cm]{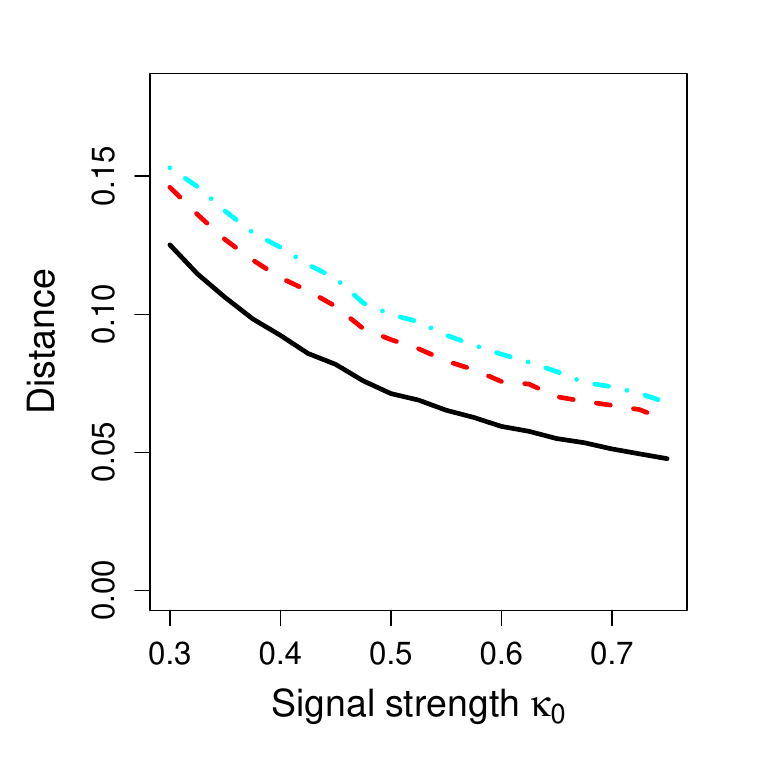}
  \label{fig:4-1-50}
  \vspace{-0.25cm}
  \caption{\scriptsize{$p=50$}}
  \vspace{-0.5cm}
  \includegraphics[width=5cm,height=4cm]{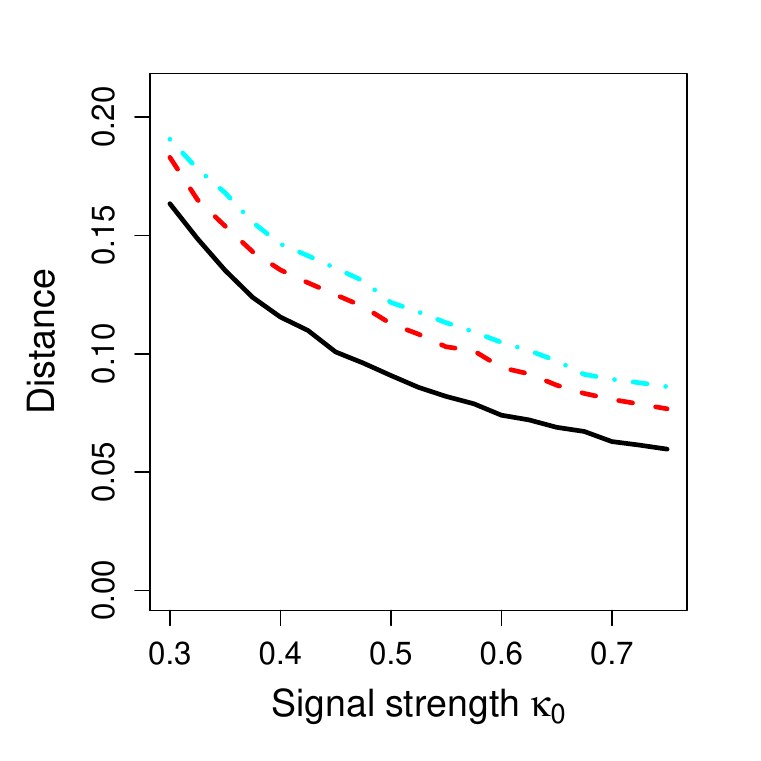}
  \label{fig:4-2-50}
  \vspace{-0.25cm}
  \caption{\scriptsize{$p=50$}}
  \vspace{-0.5cm}
  \includegraphics[width=5cm,height=4cm]{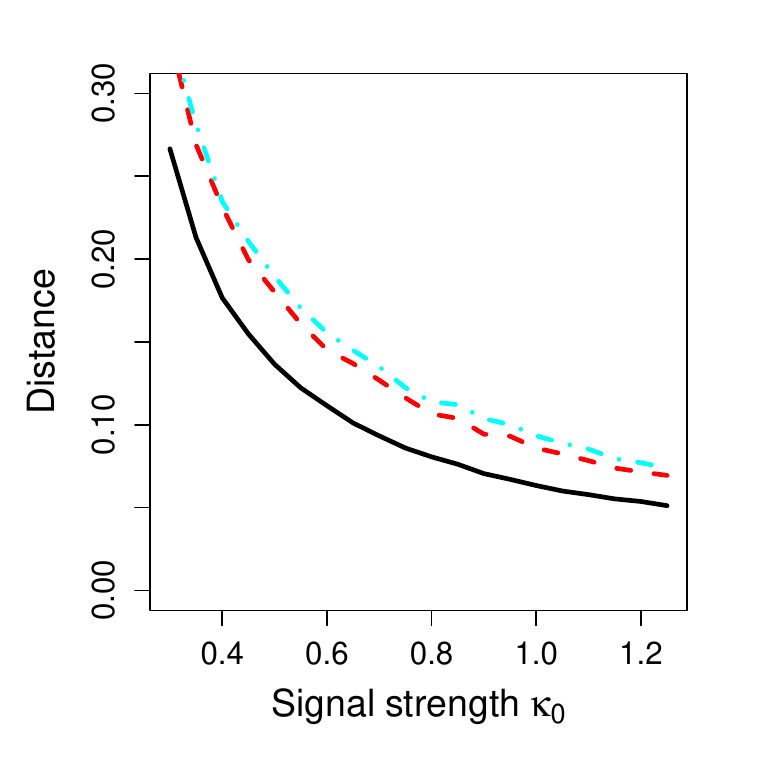}
  \label{fig:4-4-50}
\end{subfigure}
\begin{subfigure}{0.32\linewidth}
  \caption{\scriptsize{$p=100$}}
  \vspace{-0.5cm}
  \includegraphics[width=5cm,height=4cm]{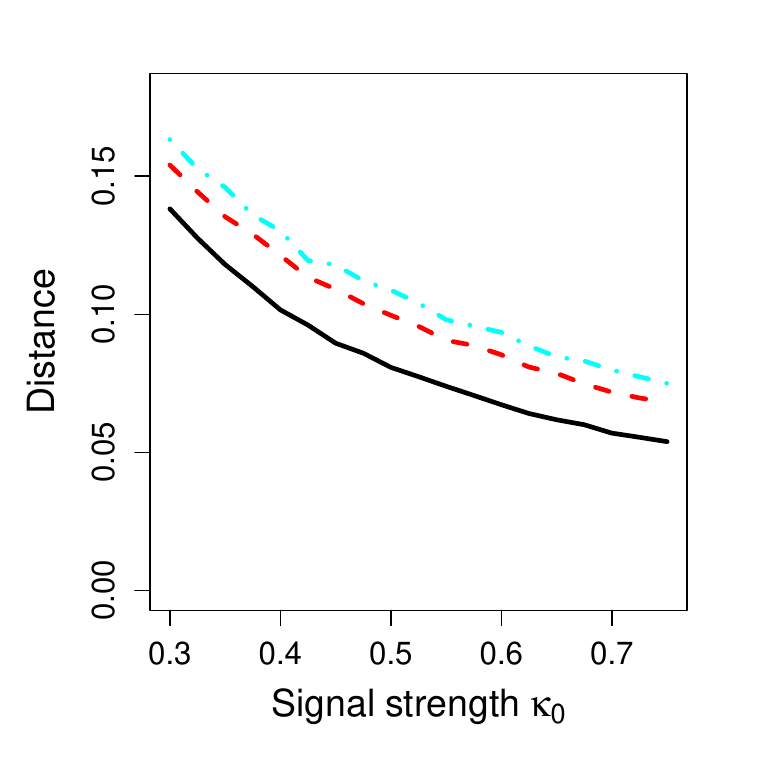}
  \label{fig:4-1-100}
  \vspace{-0.25cm}
  \caption{\scriptsize{$p=100$}} 
  \vspace{-0.5cm}
  \includegraphics[width=5cm,height=4cm]{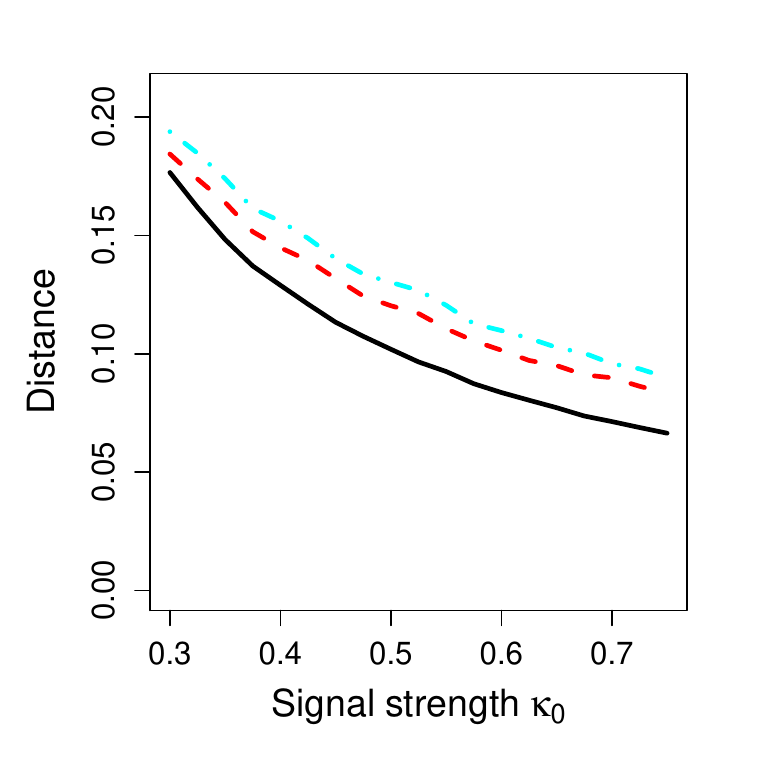}
  \label{fig:4-2-100}
  \vspace{-0.25cm}
  \caption{\scriptsize{$p=100$}} 
  \vspace{-0.5cm}
  \includegraphics[width=5cm,height=4cm]{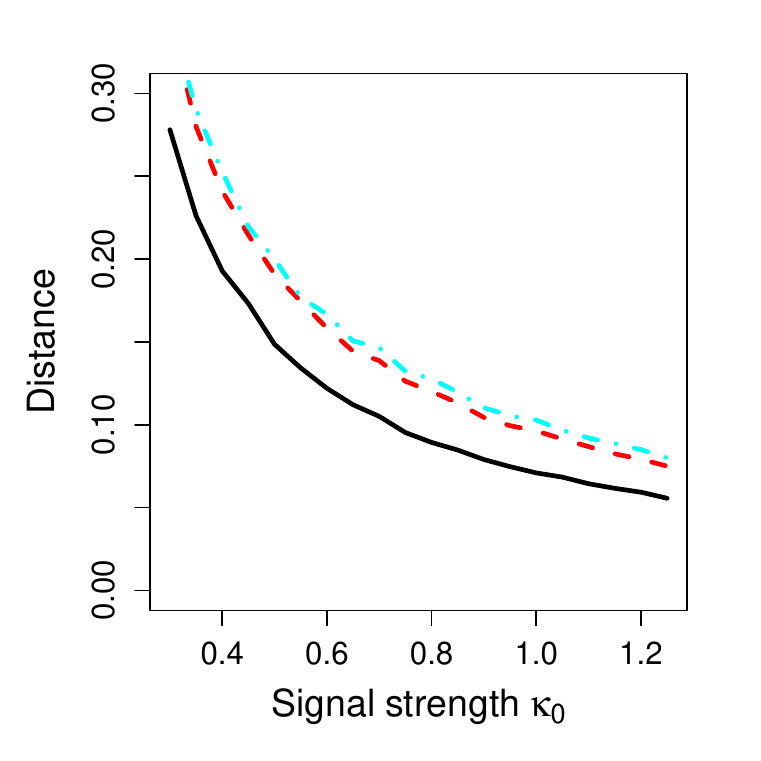}
  \label{fig:4-4-100}
\end{subfigure}
\begin{subfigure}{0.32\linewidth}
  \caption{\scriptsize{$p=200$}}
  \vspace{-0.5cm}
  \includegraphics[width=5cm,height=4cm]{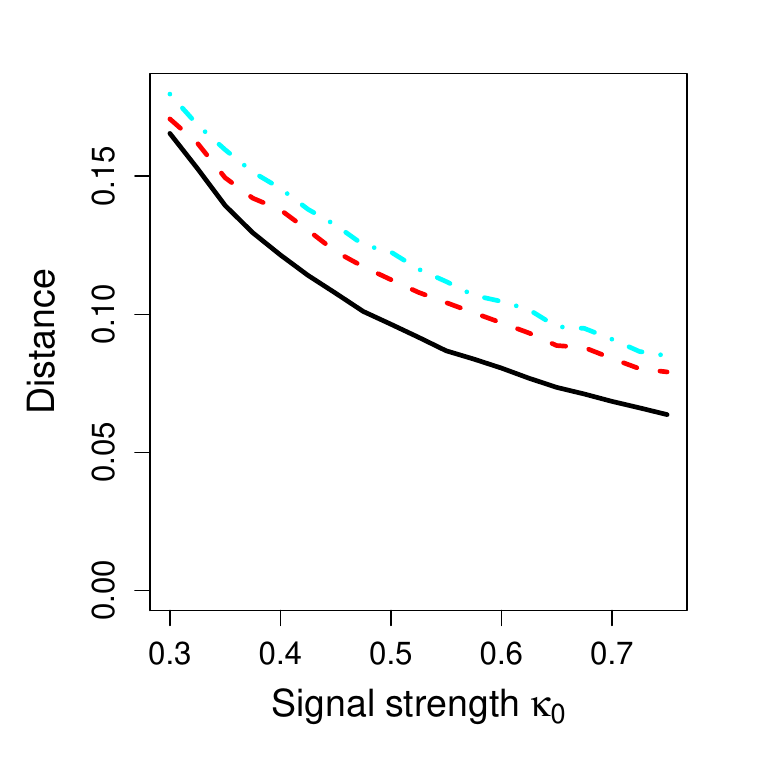}
  \label{fig:4-1-200}
  \vspace{-0.25cm}
  \caption{\scriptsize{$p=200$}}
  \vspace{-0.5cm}
  \includegraphics[width=5cm,height=4cm]{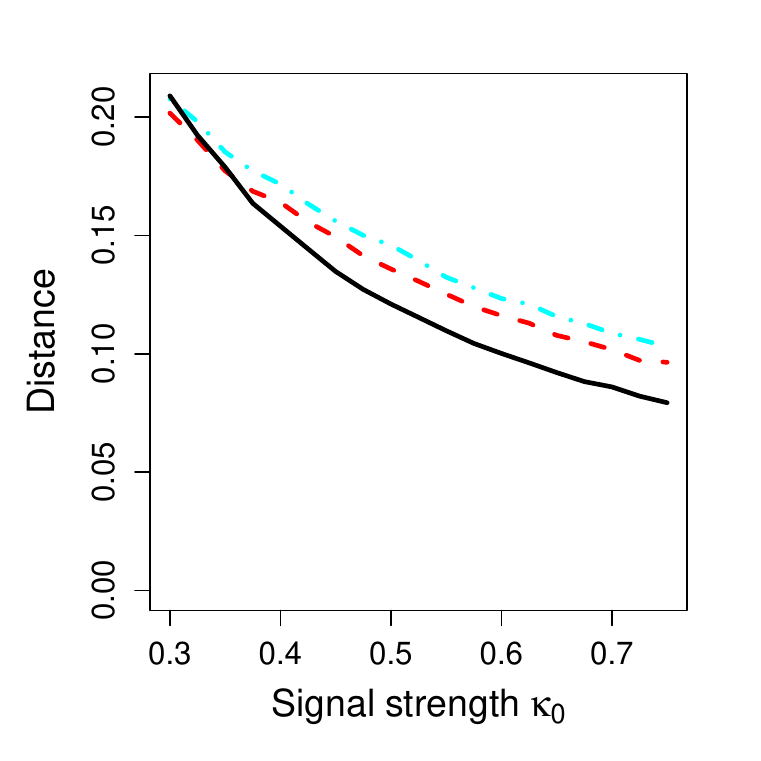}
  \label{fig:4-2-200}
  \vspace{-0.25cm}
  \caption{\scriptsize{$p=200$}}
  \vspace{-0.5cm}
  \includegraphics[width=5cm,height=4cm]{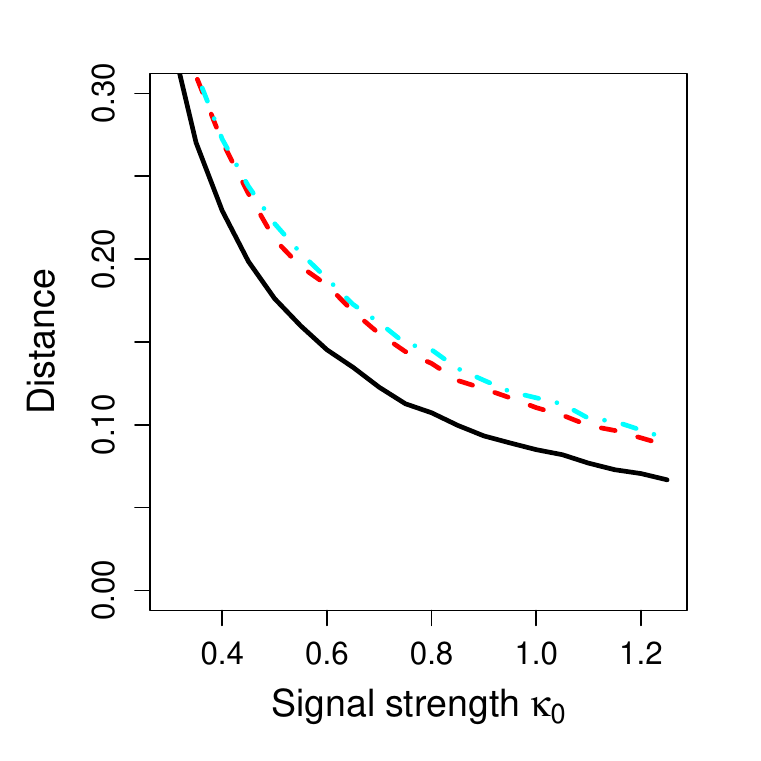}
  \label{fig:4-4-200}
\end{subfigure}
\centering
\caption{\label{fig4}{\it \small Scenario~1 (top row), Scenario~2 (middle row) and Scenario~4 (bottom row) for $p=50,$ $100$ and $200$ with correct $r$ when the factors are weak: Plots of average $\mathcal D \big(\mathcal C(\bA), \mathcal C(\widehat \bA)\big)$ against $\kappa_0$ for three methods based on $\widehat\bM$ (black solid), $\widehat\bM_1$ (red dashed) and $\widehat\bM_2$ (cyan dash dotted).
}}
\end{figure}

As suggested by one referee, we consider the following Scenario~5, where the idiosyncratic components are cross-correlated.
\begin{itemize}
    \item[] \textsc{Scenario} 5. We replace $\bvarepsilon_t(\cdot)$ in \textsc{Scenario}~4 by $\mathbf{C}\bvarepsilon_t(\cdot)$, where $\mathbf{C}=(0.5^{|i-j|})_{1\leq i,j\leq p}.$ 
\end{itemize}
Figure~\ref{fig-cross} plots average relative frequencies of $\hat r = r$ as the factor strength increases, and the corresponding average estimation errors for ${\cal C}(\bA)$ 
under Scenario~5 with $p=100$ when factors are strong. We observe in Figure~\ref{fig-cross} that our method still significantly outperform the two competitors.
\begin{figure}[!hbt]
\captionsetup[subfigure]{labelformat=empty}
\centering
\begin{subfigure}{0.45\linewidth}
  \includegraphics[width=7cm,height=5cm]{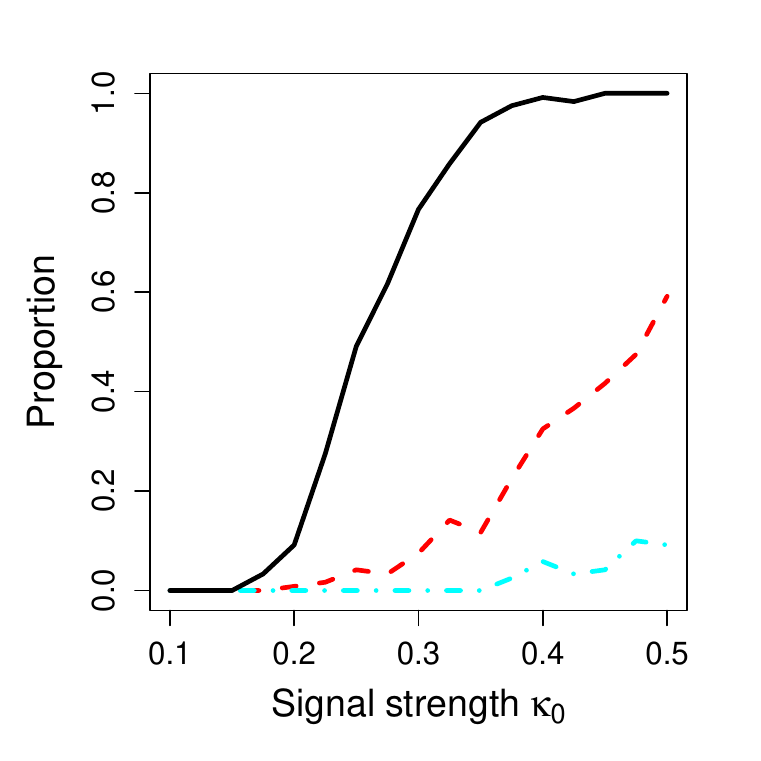}
  \label{fig:5-5-100(1)}
\end{subfigure}
\begin{subfigure}{0.45\linewidth}
  \includegraphics[width=7cm,height=5cm]{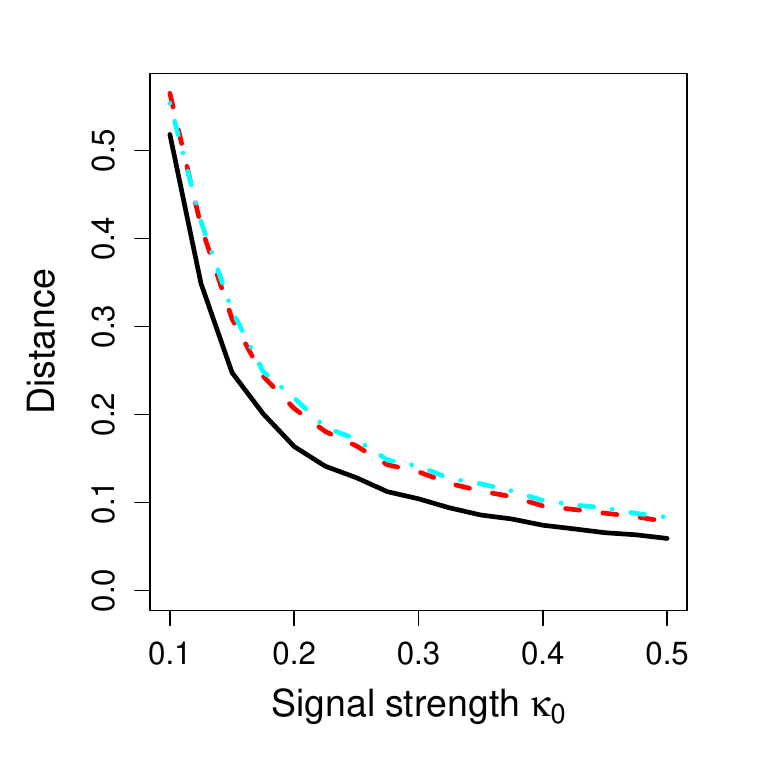}
  \label{fig:5-5-100(2)}
\end{subfigure}
\centering
\caption{\label{fig-cross}{\it \small Scenario~5 for $p=100$ when factors are strong: Plots of average relative frequency estimates for $\pr(\hat r = r)$ (Left) and average $\mathcal D \big(\mathcal C(\bK), \mathcal C(\widehat \bK)\big)$ (Right) against $\kappa_0$ for three methods based on $\widehat\bM$ (black solid), $\widehat\bM_1$ (red dashed) and $\widehat\bM_2$ (cyan dash dotted).
}}
\vspace{-0.2cm}
\end{figure}

To conduct a sensitivity analysis of our method with respect to $q$ and $k_0$, we repeat the simulated experiment under Scenario~1 with $p=100$ by setting $q=8,12,16,20$ and $k_0=2,3,4,5.$ We observe in Figure~\ref{fig-rob} that the performance is generally insensitive to the different values of $q$ and $k_0$ that vary within reasonable ranges.

\begin{figure}[!hbt]
\captionsetup[subfigure]{labelformat=empty}
\centering
\begin{subfigure}{0.45\linewidth}
  \vspace{-0.5cm}
  \includegraphics[width=7cm,height=5cm]{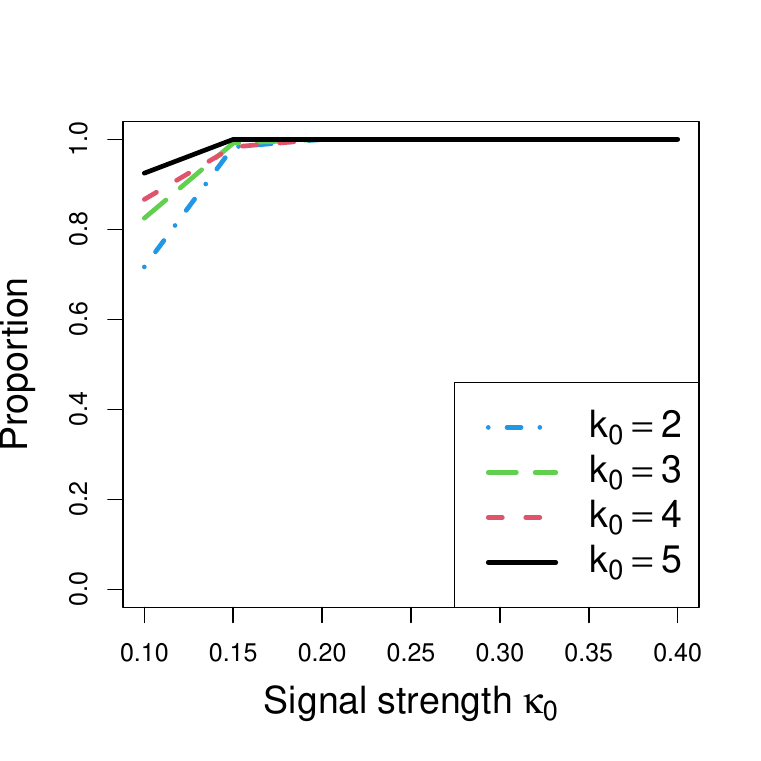}
  \label{fig:6-1-100(1)}
\end{subfigure}
\begin{subfigure}{0.45\linewidth}
  \vspace{-0.5cm}
  \includegraphics[width=7cm,height=5cm]{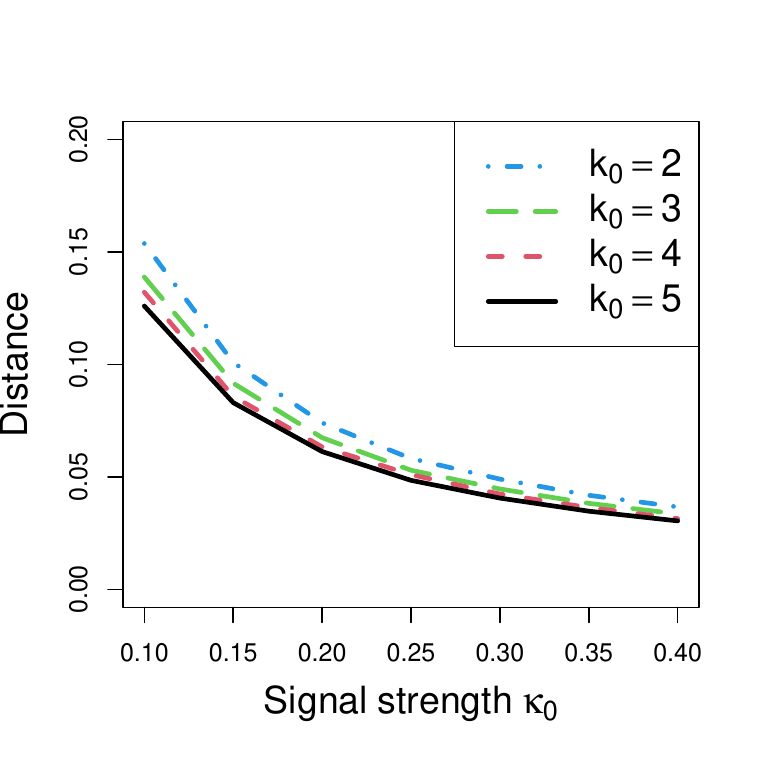}
  \label{fig:6-1-100(2)}
\end{subfigure}\\
\begin{subfigure}{0.45\linewidth}
  \vspace{-0.5cm}
  \includegraphics[width=7cm,height=5cm]{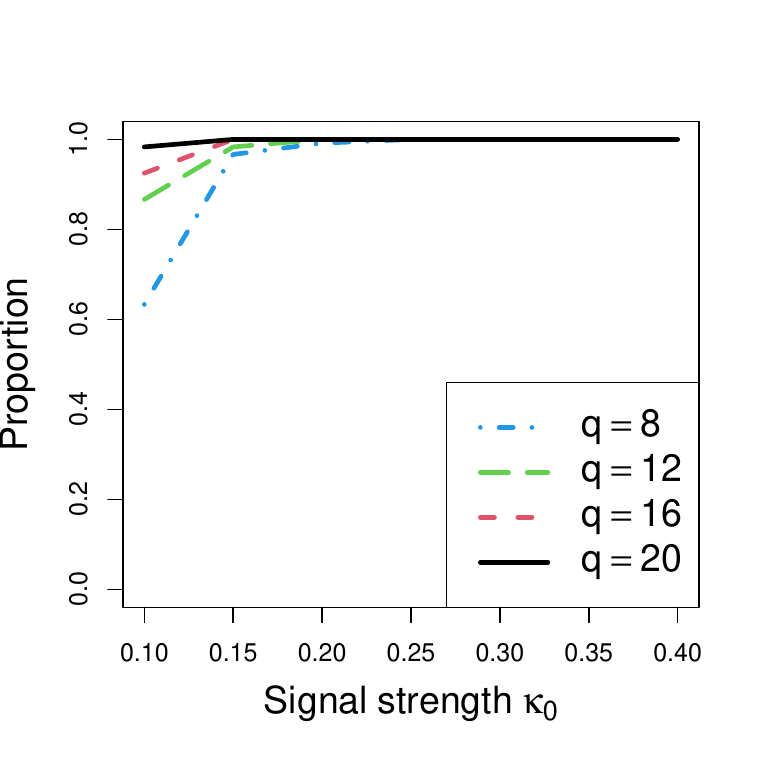}
  \label{fig:6-2-100(1)}
\end{subfigure}
\begin{subfigure}{0.45\linewidth}
  \vspace{-0.5cm}
  \includegraphics[width=7cm,height=5cm]{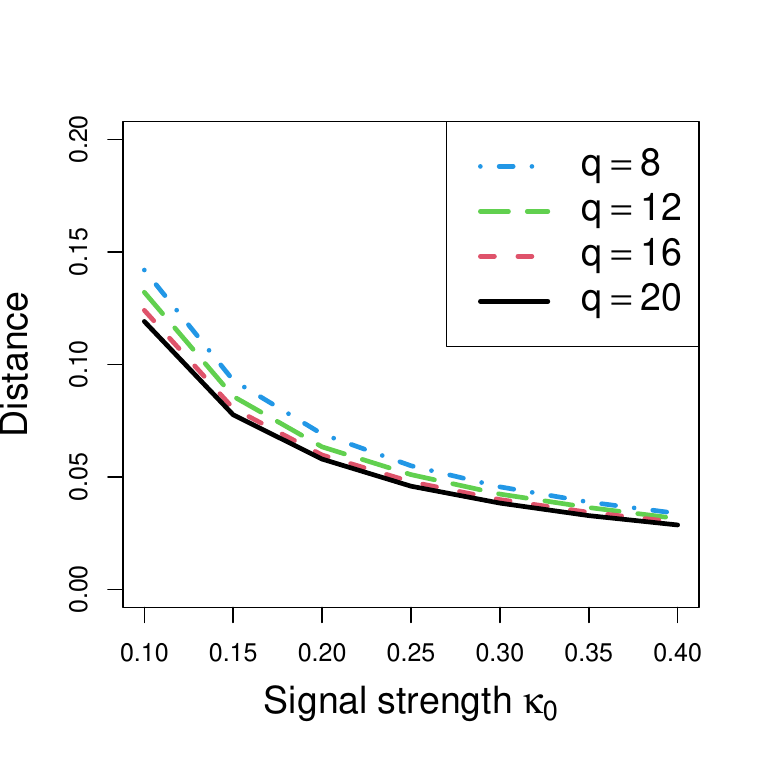}
  \label{fig:6-2-100(2)}
\end{subfigure}
\centering
\caption{\label{fig-rob}{\it \small Scenario~1 for $p=100$ when factors are strong: Plots of average relative frequency estimates for $\pr(\hat r = r)$ (Left) and average $\mathcal D \big(\mathcal C(\bK), \mathcal C(\widehat \bK)\big)$ (Right) against $\kappa_0$. We set $q=12, k_0=2,3,4,5$ in the top row and  $q=8, 12, 16, 20, k_0=4$ in the bottom row.
}}
\vspace{-0.2cm}
\end{figure}

Figure~\ref{fig:male} provides spatial heatmaps of varimax-rotated loading matrix and sparse loading matrices with different sparsity levels for Japanese males, supporting Section~\ref{sec:mort.data}.
\begin{figure}[!htb]
	\centering
	\begin{subfigure}{0.32\linewidth}
	  \caption{\scriptsize{1st factor (rotated)}}
	  \vspace{-0.5cm}
	  \includegraphics[width=6cm,height=5cm]{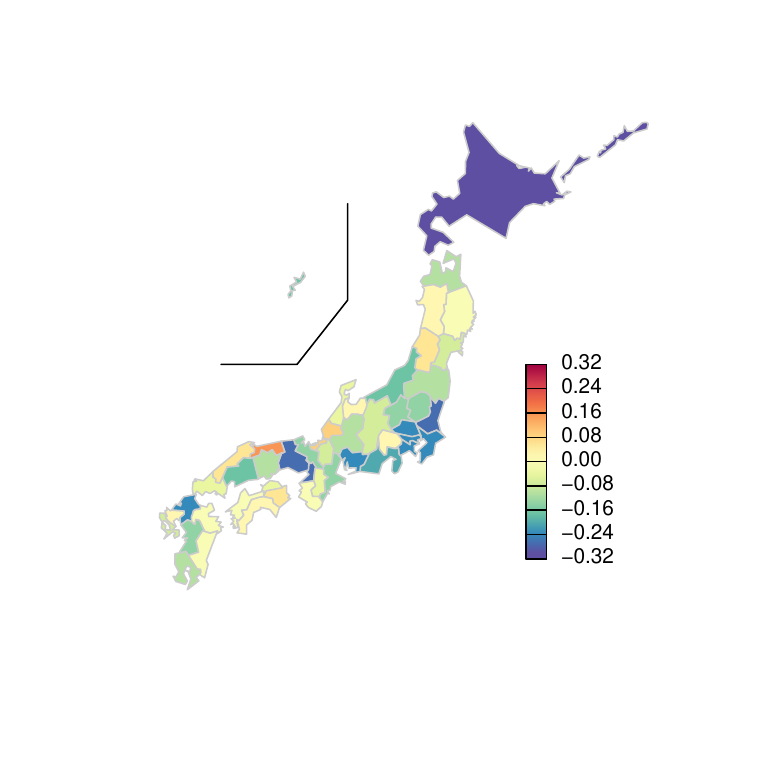}
	  \vspace{-1cm}

	  \caption{\scriptsize{2nd factor (rotated)}}
	  \vspace{-0.5cm}
	  \includegraphics[width=6cm,height=5cm]{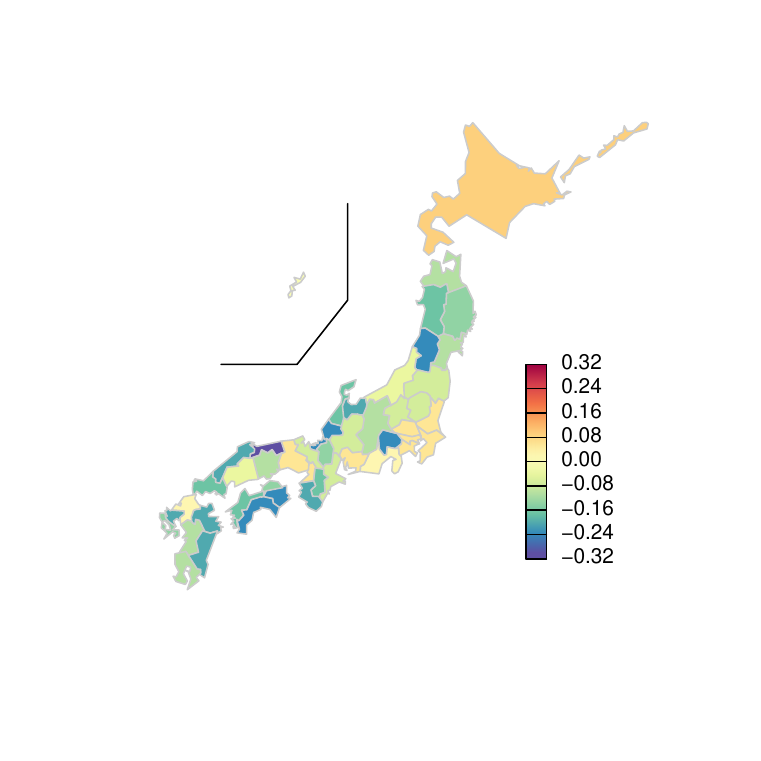}
	  \vspace{-1cm}
	\end{subfigure}\hspace{-0.35cm}	
	\begin{subfigure}{0.32\linewidth}
	  \caption{\scriptsize{1st factor (27 zeros)}}
	  \vspace{-0.5cm}
	  \includegraphics[width=6cm,height=5cm]{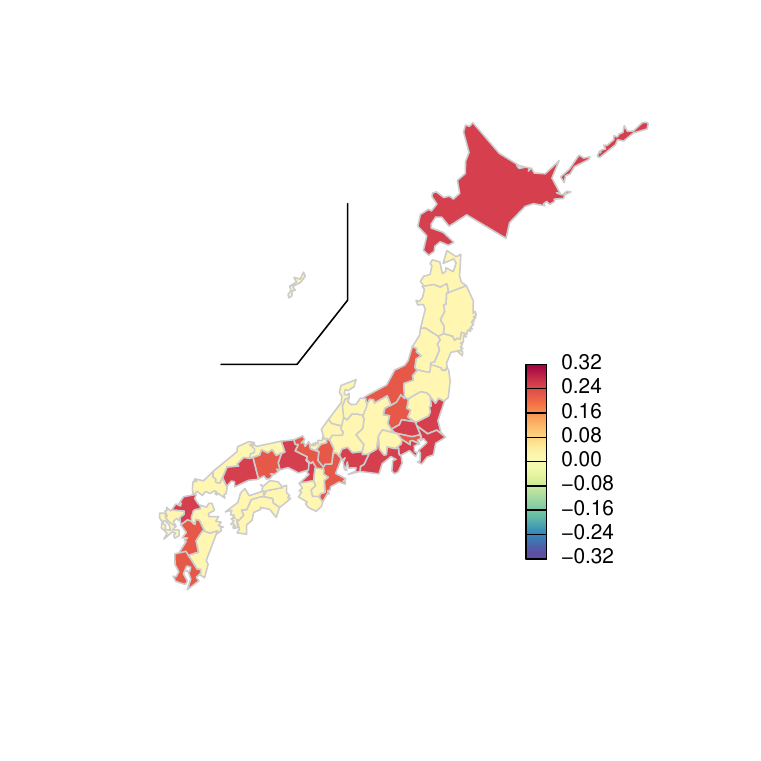}
	  \vspace{-1cm}

	  \caption{\scriptsize{2nd factor (27 zeros)}}
	  \vspace{-0.5cm}
	  \includegraphics[width=6cm,height=5cm]{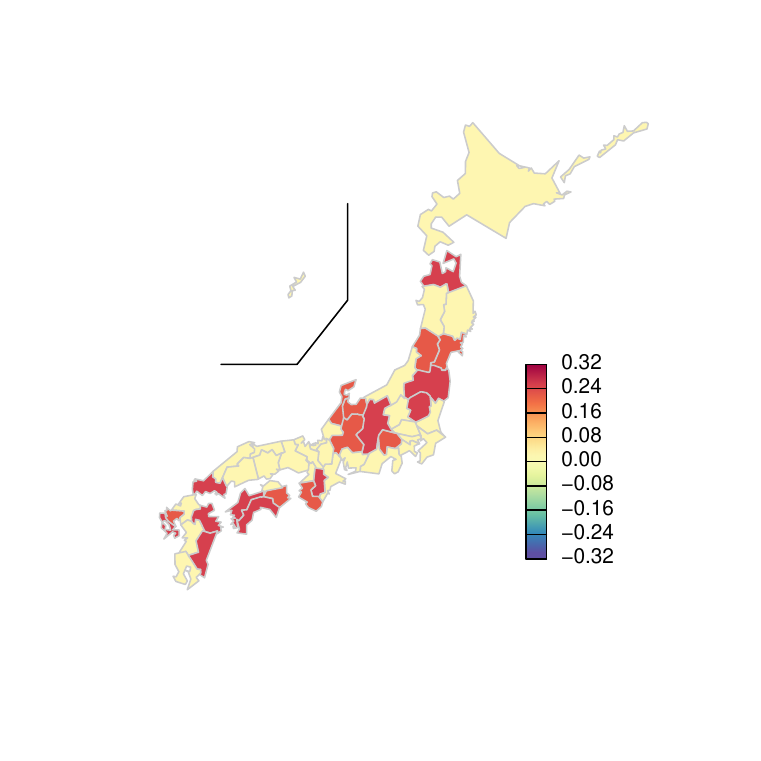}
	  \vspace{-1cm}
	\end{subfigure}\hspace{-0.35cm}
	\begin{subfigure}{0.32\linewidth}
		\caption{\scriptsize{1st factor (32 zeros)}}
	  \vspace{-0.5cm}
	  \includegraphics[width=6cm,height=5cm]{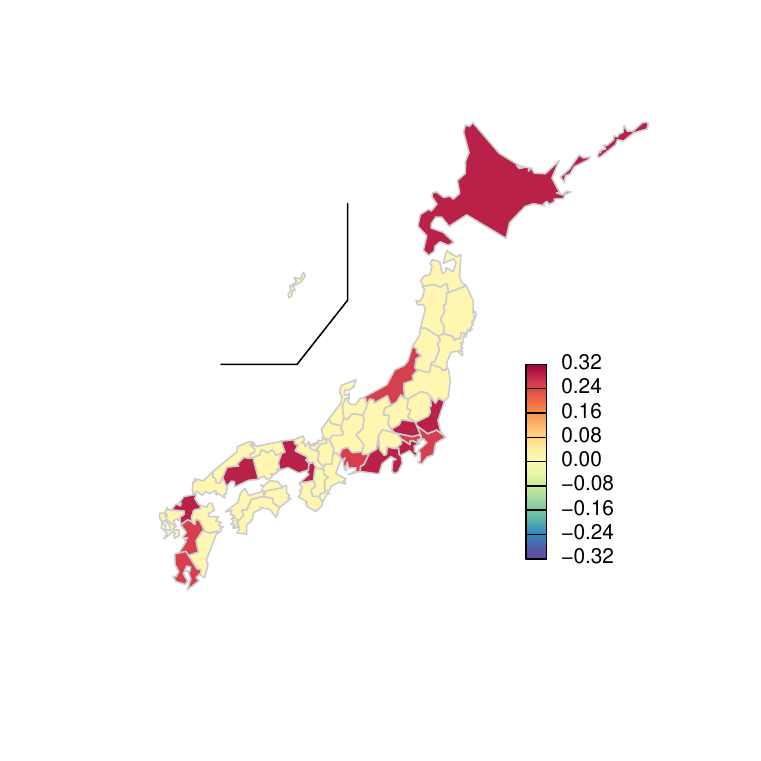}
	  \vspace{-1cm}

	  \caption{\scriptsize{2nd factor (32 zeros)}}
	  \vspace{-0.5cm}
	  \includegraphics[width=6cm,height=5cm]{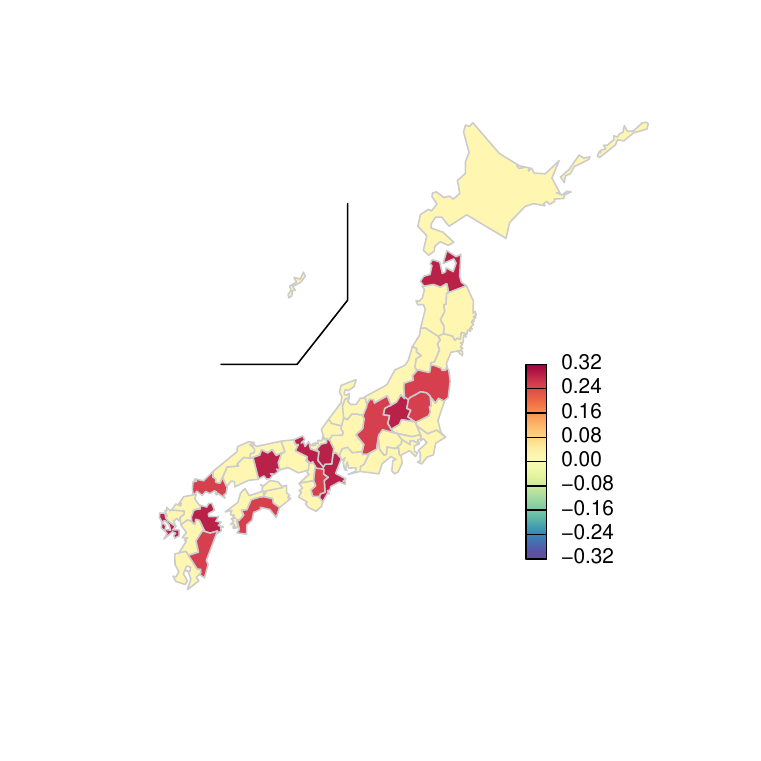}
	  \vspace{-1cm}
	\end{subfigure}
	\centering
	\caption{\label{fig:male}{\it Spatial heatmaps based on varimax-rotated loadings (left column) and sparse loadings with 27 zeros (middle column) and 32 zeros (right column) of 47 prefectures on two factors for Japanese males.
	}}
\end{figure}

\newpage
\spacingset{1.4}
\section*{References}
\begin{description}
    \item
	Fang, Q., Guo, S. and Qiao, X. (2022). Finite sample theory for high-dimensional functional/scalar time series with applications, {\it Electronic Journal of Statistics} {\it 16}(1), 527-591.
	\item 
	Guo, S. and Qiao, X. (2023). On consistency and sparsity for high-dimensional functional time series with application to autoregressions, {\it Bernoulli} {\it 29}(1), 451-472.
	\item
	Han, Y., Chen, R. and Zhang, C. H. (2022). Rank determination in tensor factor model. {\it Electronic Journal of Statistics} {\it 16}(1), 1726–1803.
	\item 
	Lam, C., Yao, Q. and Bathia, N. (2011). Estimation of latent factors for high-dimensional time series, {\it Biometrika} {\it 98}(4), 901–918.
	\item 
	Vu, V. Q. and Lei, J. (2013). Minimax sparse principal subspace estimation in high dimensions, {\it The Annals of Statistics} {\it 41}(6) 2905–2947.
	\item
	Wang, B.-Y. and Xi, B.-Y. (1997). Some inequalities for singular values of matrix products, {\it Linear Algebra and Its Applications} {\it 264} 109–115.
\end{description}
\end{document}